\newcommand{\eve}{\pmb{\exists}}
\newcommand{\adam}{\pmb{\forall}}
\newcommand{\Gc}{\mathcal{G}} 
\newcommand{\Lc}{\mathcal{L}}
\newcommand{\Ac}{\mathcal{A}}
\newcommand{\Hc}{\mathcal{H}}
\newcommand{\Bc}{\mathcal{B}}
\newcommand{\Dc}{\mathcal{D}}
\newcommand{\Mc}{\mathcal{M}}
\newcommand{\Nc}{\mathcal{N}}
\newcommand{\Sc}{\mathcal{S}}
\newcommand{\Cc}{\mathcal{C}} 
\newcommand{\ssp}{semilinear-strategy property}
\newcommand{\gut}{\text{good}}
\newcommand{\zerotest}{\mathrm{zero}}
\newcommand{\notzerotest}{\neg\mathrm{zero}}
\newcommand{\notest}{\mathrm{X}}
\newcommand{\simg}[2]{\mathcal{G}(#1 \longhookrightarrow #2)}
\newcommand{\seq}[1]{\langle #1\rangle}
\newcommand{\sqr}[1]{\left[ #1 \right]}
\newcommand{\simulates}{\longhookrightarrow}
\newcommand{\one}{\mathbf{1}}
\newcommand{\win}{\mathtt{win}}
\newcommand{\twin}{\mathtt{twin}}
\newcommand{\last}{\mathtt{last}}
\newclass{\TOWER}{TOWER}
\newclass{\ACKERMANN}{ACKERMANN}
\newclass{\IIEXPTIME}{2-EXPTIME}
\begin{document}
\title{On History-Deterministic One-Counter Nets}

\author{Aditya Prakash\inst{1}\orcidlink{0000-0002-2404-0707} \and
K.~S.~Thejaswini\inst{1}\orcidlink{0000-0001-6077-7514}}
\authorrunning{A. Prakash, K. S. Thejaswini}

\institute{Department of Computer Science, University of Warwick
\email{\{aditya.prakash,thejaswini.raghavan.1\}@warwick.ac.uk}}
\maketitle              
\begin{abstract}
 We consider the model of history-deterministic one-counter nets (OCNs). History-determinism is a property of transition systems that allows for a limited kind of non-determinism which can be resolved `on-the-fly'. Token games, which have been used to characterise history-determinism over various models, also characterise history-determinism over OCNs.  By reducing 1-token games to simulation games, we are able to show that checking for history-determinism of OCNs is decidable. Moreover, we prove that this problem is $\PSPACE$-complete for a unary encoding of transitions, and $\EXPSPACE$-complete for a binary encoding. 

We then study the language properties of history-deterministic OCNs. We show that the resolvers of non-determinism for history-deterministic OCNs are eventually periodic. As a consequence, for a given history-deterministic OCN, we construct a language equivalent deterministic one-counter automaton. We also show the decidability of comparing languages of history-deterministic OCNs, such as language inclusion and language universality.
\keywords{History-determinism  \and Token games \and One-counter nets \and One-counter automaton.}
\end{abstract}
\section{Introduction}

While deterministic automata are algorithmically efficient for problems such as synthesis or for solving games, they are often much less succinct, or less expressive than their non-deterministic counterparts. The notion of history-determinism was introduced by Henzinger and Piterman~\cite{HP06} for automata over infinite words with parity acceptance conditions, as a tool to solve synthesis games efficiently. Such automata are known to compose well with games, and hence are also called good-for-games (GFG) automata~\cite{HP06,Col12}. History-deterministic automata form a robust class of models that is both algorithmically and conceptually interesting, and has been extensively studied over the recent years~\cite{HP06,Col12,BKKS13,KS15,BKS17,BK18,BKLS20,CCL22,RK22}.

The notion of history-determinism emerged independently in the setting of cost automata, that can capture all regular cost functions as opposed to their deterministic version~\cite{Col09}.
Recently, history-determinism has been studied in other quantitative settings~\cite{BL21,BL22}, as well as infinite-state systems such as pushdown automata~\cite{GJLZ21,LZ22}, Parikh automata~\cite{EGJLZ22}, and timed automata~\cite{HLT22}.     

One-counter nets are finite-state systems along with a counter that stores a non-negative integer value that can never be explicitly tested for zero. They correspond to 1-dimensional VASS, Petri nets with exactly one unbounded place, and are a subclass of one-counter automata which do not have zero tests, and hence are also a subclass of pushdown automata. They are one of the simplest infinite-state systems, and hence many problems pertaining to one-counter nets are easier than their counterparts that subsume them. 

The structure of the resolvers that resolve non-determinism on-the-fly are crucial to understand history-determinism in various models. While for automata over infinite words with parity conditions, these resolvers take the shape of deterministic parity automata~\cite{HP06}, the situation for resolvers in history-deterministic infinite-state systems is not as well understood. Indeed, the computability of such a resolver for a given history-deterministic pushdown automaton is left as an open problem in the works of
Guha, Jecker, Lehtinen and Zimmermann~\cite{GJLZ21}. For history-deterministic Parikh automata, it is still an open problem if the resolver can be given by a deterministic Parikh transducer~\cite{EGJLZ22}. Moreover, many other problems such as deciding history-determinism or even  language inclusion among history-deterministic automata are  undecidable for pushdown automata and Parikh automata~\cite{GJLZ21,LZ22,EGJLZ22}. We consider history-determinism over a well-studied class of infinite-state systems of one-counter nets, where we are able to answer positively to all of the above questions.   

The techniques we use to answer several of these questions use results and techniques from the simulation problem over one-counter nets~\cite{HMT13,HLMT16}. This is not surprising, since simulation of various models has close ties with history-determinism~\cite{HP06,HLT22}.

\paragraph*{Our Contribution}

We study history-deterministic OCNs and establish them as a class of  infinite-state systems where many problems pertaining to history-determinism are decidable. This is unlike other classes of history-deterministic infinite-state systems that subsume them.

Firstly, we show that checking for history-determinism for a given one-counter net is $\PSPACE$-complete when the transitions are encoded in unary, and is $\EXPSPACE$-complete for a succinct encoding  (Theorem~\ref{thm:PSPACE}, Theorem~\ref{thm:succinctcomp}). We achieve the upper bound by giving a novel reduction from the 1-token game $G_1$ to the simulation problem over OCNs. 1-token games characterise history-determinism over OCNs, and thus our reduction further extends the link between history-determinism and simulation. This decidability result is in contrast to one-counter automata (OCA), where checking for history-determinism becomes undecidable by just adding zero-tests to OCNs (Theorem~\ref{thm:UndecOCA}).

Secondly, we show that resolvers for non-determinism in history-deterministic OCNs can be expressed as an eventually periodic set. Using this, we are able to determinise history-deterministic OCNs to give a language equivalent deterministic OCA. 

Finally, we show the decidability of the problems of language inclusion and language universality for history-deterministic OCNs to be in $\PSPACE$ and $\P$ respectively. This is in unlike non-deterministic OCNs, where these problems are known to be undecidable and Ackermann-complete respectively. Even for the class of deterministic OCA, which we show history-deterministic OCNs can be converted to, the inclusion problem is known to be undecidable.

\paragraph*{Organisation of the paper}
Section~\ref{sec:prelims} contains preliminaries where we introduce notation and define the concepts mentioned above rigorously. In Section~\ref{sec:Decide}, we show $\PSPACE$-completeness of checking if an input OCN is history-deterministic. 
In Section~\ref{sec:hdocnTodoca}, we show that the language expressed by history-deterministic one-counter  nets are contained in the language accepted by deterministic one-counter automata. Moreover, we discuss the complexity of checking language-inclusion, language-equivalence and universality of history-deterministic nets.
Finally, in the Section~\ref{sec:otherResults}, we analyse the changes in complexity when the counters are represented succinctly, or if zero tests are added. 
Due to space constraints, missing proofs can be found in the appendix.
\section{Preliminaries}\label{sec:prelims}
We use $\Sigma$ throughout to denote a finite set of alphabet, and $\Sigma^*$ to denote the set of all finite words consisting of letters from $\Sigma$. The empty word over $\Sigma$ shall be denoted by $\epsilon$. We  use $\Sigma_{\epsilon}$ to denote the set $\Sigma \cup \{\epsilon\}$. A language $\Lc$ over $\Sigma$ is a subset of $\Sigma^*$.

\paragraph{Labelled Transition System}
A \emph{labelled transition system} (LTS) is a tuple $\Sc$ consisting of $\Sc = (Q,\Sigma_{\epsilon}, \rightarrow, q_0,F)$. In this paper, we assume that
 $Q$ is a (countable) set of states, $q_0\in Q$ is the initial state,
$F\subseteq Q$ is the set of final states, $\Sigma$ is a finite alphabet, $\rightarrow \subseteq Q\times \Sigma_{\epsilon} \times Q$ is the set of transitions.
    
If a a transition $(q_1,a,q_2)$ belongs to $\rightarrow$, we instead represent it as $q_1\xrightarrow{a} q_2$ as well.
On a (finite) word $w$, a $\rho$ is said to be a (finite) \emph{run} of the labelled transition system $\Ac$ if it is an (finite) alternating sequence of states and letters of $\Sigma$:
$\rho = q_0\xrightarrow{a_0}q_1\xrightarrow{a_1}\dots q_{k-1}\xrightarrow{a_1} q_k$, 
where each $q_i\xrightarrow{a_i}q_{i+1}\in \rightarrow$ and $a_0\cdot a_1\dots a_k = w$ and $a_i\in\Sigma_\epsilon$.   
A run $\rho$ described above is accepting if the state $q_k\in F$.

An LTS that has no $\epsilon$-transitions is said to be a \emph{realtime LTS}. For an LTS $\Sc = (Q,\Sigma, \rightarrow, q_0,F)$ being realtime, we have $\rightarrow\subseteq Q\times \Sigma\times Q$. Unless mentioned otherwise, we mostly deal with realtime LTS for the sake of a simpler presentation.

An LTS $\Sc = (Q,\Sigma, \rightarrow, q_0,F)$ is \emph{deterministic} if $\rightarrow$ is a function from $Q\times \Sigma$ to $Q$, and not just a relation.

\paragraph*{Two player games}
Throughout the paper, we will be using two player games on countably sized arenas, between the players Adam and Eve, denoted by $\adam$ and $\eve$ respectively. The winning condition will be a reachability condition for one of the players, often $\adam$. By the work of Martin \cite{Mar75}, we know that such games are determined, that is they have a winner, which is either $\adam$ or $\eve$. Moreover, each of the players have a positional strategy, where their current strategy depends on their positions in the current arena. We shall say that two games are \emph{equivalent}, if they have the same winner. 

\paragraph*{One-Counter Automata}
A \emph{one-counter automaton} (OCA) $\Ac$ is given by a tuple $\Ac = (Q,\Sigma,\Delta,q_0,F)$, where $Q$ is a finite set of states, $q_0 \in Q$ is the initial state, 
 $F\subseteq Q$ is the set of final states,
 $\Sigma$ is a finite alphabet, and
finally, $\Delta$ is the set of transitions, given as a relation $\Delta \subseteq Q \times \{\zerotest,\notzerotest\} \times \Sigma \times \{-1,0,1\} \times Q $.
    
Here, the symbols $\zerotest$ and $\notzerotest$ are used to distinguish between transitions that can happen when the counter value is 0, and when the counter value is positive respectively. One can think of the counter as a `stack', where the stack has a distinguished bottom-of-the-stack symbol, which cannot be popped. The configurations in the automaton are given by pairs $(q,m)$,  where $q$ denotes the current state, and $m \in \mathbb{N}_0$ denotes the counter value. We use $\Cc(\Ac)$ to denote the set of configurations of $\Ac$. 

A one-counter automaton can be viewed as a succinct description of an infinite-state LTS over the set of configurations, such that the configurations are as defined below. For each configuration $(q,m)$, upon reading $a \in \Sigma_{\epsilon}$, 
\begin{itemize}
    \item if $m>0$, takes a transition of the form $(q,\notzerotest,a,d,q')$, where $d\in \{-1,0,1\}$ to $(q',m+d)$;
    \item if $m = 0$, takes a transition of the form $(q,\zerotest,a,d,q')$, where $d \in \{0,1\}$ to $(q',m+d)$.
\end{itemize} 
For two configurations $c,c' \in \Cc(A) = Q \times \mathbb{N}_{0}$, we use the notation $c \xrightarrow{a,d} c'$ to denote the fact that $c'$ can be reached from $c$ upon taking some transition $\delta \in \Delta$ upon reading $a$, with a change of counter value $d$. We shall also say that $c \xrightarrow{a,d} c'$ is a transition in $\Ac$, as $c \xrightarrow{a,d} c'$ is a transition in the infinite LTS of $\Ac$. We thus view $\Ac$ as both an automaton and a LTS, and switch between these two notions interchangeably. 
A run of $\Ac$ over a word $w$ is a finite sequence of alternating configurations and transitions : $\rho = c_0 \xrightarrow{a_0,d_0} c_1 \cdots c_n \xrightarrow{a_n,d_n} c_{n+1}$ such that $a_0a_1 \cdots a_n = w$, and $c_0 = (q_0,0)$. The run $\rho$ is an \emph{accepting run} if its last configuration $c_{n+1}=(q_{n+1},k_{n+1})$ is accepting, i.e. $q_{n+1} \in F$. We say a word $w$ is an \emph{accepting word} in $\Ac$ if it has an accepting run in $\Ac$. Finally, we define the language of $\Ac$, denoted by $\Lc(\Ac)$ to be the set of all accepting words in $\Ac$. 
We say that a one-counter automaton $\Ac$ is a \emph{deterministic one-counter automaton}, if $\Delta$ is a (partial) function from $ Q \times \{\zerotest,\notzerotest\} \times \Sigma$ to $ \{-1,0,1\} \times Q$.

\paragraph{One-counter nets} The model of
\emph{one-counter nets} (OCNs) can be interpreted as a restriction added to one-counter automaton that do not have the ability to test for zero. Alternatively, one can view this as a finite-state automaton that has access to a stack which can store only one symbol and no bottom-of-the-stack element. Any feasible run cannot pop an empty stack. More formally, a one-counter net $\Nc$ is a tuple $(Q, \Sigma, \Delta,q_0,F)$ where  $Q$ is the set of finite states, $\Sigma$ is a finite alphabet, $q_0\in Q$ is the initial state and $F\subseteq Q$ is the set of final or accepting states. The set $\Delta\subseteq Q\times \Sigma\times \{-1,0,1\}\times Q$ are the transitions in the net $\Nc$.

The configurations of an OCN are similar to that of an OCA. It consists of a pair $(q, n)\in Q\times \mathbb{N}$. We shall use the notation $\Cc(\Nc) = Q \times \mathbb{N}$ to denote the set of configurations of $\Nc$. From a configuration $(q,n)$, we reach a configuration  $(p,n + d)$ in one step, if there is a transition $\delta = (q,a,d,p)$, for some $a\in\Sigma$ and $d \in \{-1,0,+1\}$ and $n+d\geq 0$.
We can define a run on an OCN, an accepting run and an accepting word similar to an OCA. We shall say an OCN $\Nc$ is \emph{complete}, if for every configuration $c \in \Cc(\Nc)$ and every letter $a \in \Sigma$, there exists a transition $c \xrightarrow{a,d} c'$.

\begin{remark}
For the most of the paper we talk about one-counter nets (automata) with unary transitions, i.e. transitions that increment or decrement the counter by at most 1. However, they are as expressive as succinct models where the transitions are given in binary. This can be observed, for instance, by giving a construction similar to that of Valiant's for deterministic pushdown automata (Section 1.7, \cite{Val73}). 

\end{remark}

\paragraph*{History-Deterministic One-Counter Nets}
We define history-determinism in the setting of one-counter net. We say an OCN $\Nc$ is \emph{history-deterministic}, if the non-deterministic choices required to accept a word $w$ which is in $\Lc(\Nc)$ can be made on-the-fly. These choices depend only on the word read so far, and do not require the knowledge of the future of the word to construct an accepting run for a word in $\Lc(\Nc)$ (hence the term history-determinism). Formally, we say an OCN $\Nc$ is history-deterministic, if $\eve$ wins the letter game on $\Nc$ defined below.
\begin{definition}[Letter game for OCN]
Given an OCN $\Nc=(Q,\Sigma,\Delta,q_0,F)$, the letter game on $\Nc$ is defined between the players $\adam$ and $\eve$ as follows: the positions of the game are $\Cc(\Nc) \times \Sigma^*$, with the initial position $((q_0,0),\epsilon)$. At round $i$ of the play, where the position is $(c_i,w_i)$:
\begin{itemize}
    \item $\adam$ selects $a_{i} \in \Sigma$
    \item $\eve$ selects a transition $\delta$ which can be taken at the configuration $c_i$ on reading $a_i$, i.e. $c_i \xrightarrow{a_i,d_i}c_{i+1}$

\end{itemize}
If $\eve$ is unable to choose a transition (i.e. there is no $a_i$ transition at the configuration $c_i$ in the LTS generated by the net $\Nc$), and $w_{i+1} = w_i a_i$ is the prefix of an accepting word, $\eve$ loses immediately. The player $\adam$ wins immediately when the word $w_{i+1}$is accepting but the configuration $c_{i+1}$ is not at an accepting state, and the game terminates. The game continues from $(c_{i+1},w_{i+1})$ otherwise. The player $\eve$ wins any infinite play. 
\end{definition}
We say a strategy for $\eve$ in the letter game of $\Nc$ is a \emph{resolver} for $\Nc$, if it is a winning strategy for $\eve$ in the letter game.

Our characterization of history-deterministic one-counter nets by the above letter game is slightly different from the one presented in the work of Guha, Jecker, Lehtinen and Zimmermann~\cite{GJLZ21} for pushdown automata. In their work, they define history-determinism as having a consistent strategy based on the transitions taken so far. It is easy to argue that these two definitions are equivalent. 

The letter game can be formulated as a reachability game over countably many vertices, where the player $\adam$ is trying to reach a position of the form $(c,w) \in \Cc(\Nc) \times \Sigma^{*}$, where $c$ is at a rejecting state, while $w$ is accepting. Such games are determined, and this follows from Martin's Theorem~\cite{Mar75} showing that history-determinism formulated as a letter game is well-defined. 

Letter games have been used extensively to characterise history-determinism for other models as well, such as parity automata~\cite{HP06} and for various kinds of quantitative and timed automata on both finite and infinite words~\cite{BKLS20,BL21,HLT22}.  

To aid our understanding of history-determinism as well as the above definition, we provide an example of a game where $\eve$ wins the letter game on this automaton but the strategy is based on her counter configuration. 

\begin{example}\label{ex:HDOCN}
Consider the language 
$$\Lc = \{a^n\$ b^{n_1}\$ b^{n_2}\$\dots\$ b^{n_k}\$\mid \sum_{i=1}^kn_i\leq n \text{ or }  n_k = 2, \sum_{i=1}^{k-1}n_i = n-1\}.$$
which can be accepted by a history-deterministic OCN as shown in Figure~\ref{fig:HDOCN}. The initial state is indicated with an arrow pointing to it, and the final states are double-circled. Missing transitions are assumed to go to a rejecting sink state. In the corresponding letter game, $\adam$ plays the letter $a$ several times, say $n$-many times followed by a $\$$. The corresponding transitions so far are deterministic. Later, $\adam$ reads some series of $b$s and $\$$s, such that the word continues to be in the language. Note that the non-determinism occurs in only one state, which is marked with an $X$, upon reading the letter $b$. A winning strategy of $\eve$ which proves that this net is history-deterministic is the following: she takes the `down' transition if the counter value is strictly larger than $1$, but the `right' transition on $b$ otherwise. This non-determinism can't be determinised by removing transitions, because removing either of the `down' $b$-transition or the `right' $b$-transition changes the language accepted.
\end{example}

\begin{figure}
    \centering
    \makebox[\textwidth][c]{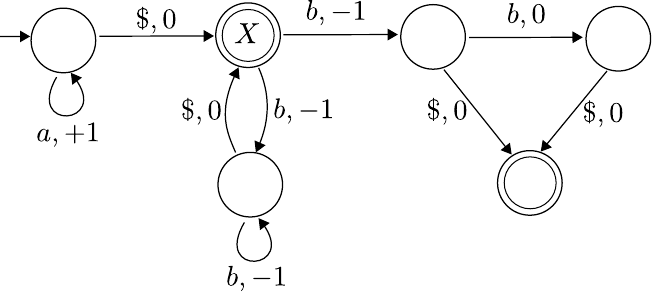}%
     \caption{A history-deterministic OCN accepting $\Lc$~\label{fig:HDOCN}}
\end{figure}

\section{Deciding History-Determinism}\label{sec:Decide}
The main result of this section is that  deciding history-determinism for a given OCN is decidable and is $\PSPACE$-complete as stated in the theorem below. 
\begin{theorem}\label{thm:PSPACE}
Given a one-counter net $\Nc$, checking if $\Nc$ is history-deterministic is $\PSPACE$-complete.  
\end{theorem}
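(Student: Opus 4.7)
The approach is to use token games as an alternative characterisation of history-determinism. As advertised in the introduction, $1$-token games characterise history-determinism over OCNs: an OCN $\mathcal{N}$ is history-deterministic iff Eve wins the $1$-token game $G_1(\mathcal{N})$, in which Adam builds a word one letter at a time and also moves a ``witness'' token through the LTS of $\mathcal{N}$, while Eve moves her own token on the same word in response, with Eve losing as soon as Adam's token reaches an accepting configuration while hers does not. I would first recall (or prove) this equivalence, reducing the question ``is $\mathcal{N}$ history-deterministic?'' to ``does Eve win $G_1(\mathcal{N})$?''.

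The main technical step is a polynomial reduction from $G_1(\mathcal{N})$ to the simulation problem between two OCNs $\mathcal{N}_1$ and $\mathcal{N}_2$ built from $\mathcal{N}$. Intuitively, $\mathcal{N}_1$ plays the role of Adam's token (carrying Adam's counter), and $\mathcal{N}_2$ plays the role of Eve's token (carrying Eve's counter); this matches the two-counter nature of a simulation game on OCNs, since a single OCN provides only one counter. A bit of gadgetry is required to reconcile the move-order mismatch (in $G_1$, Adam plays the letter, then Eve commits her transition, and only then does Adam commit his, whereas in a simulation game Adam picks a labelled transition first and Eve responds), and to re-express the winning condition of $G_1$ as the standard per-step simulation condition on accepting states, for instance by introducing a fresh ``check'' letter enabled only from accepting configurations on Adam's side. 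Once this reduction is in place, the $\PSPACE$ upper bound follows from the $\PSPACE$ algorithm of Hofman--Mayr--Totzke~\cite{HMT13} for simulation of OCNs.

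For $\PSPACE$-hardness, the plan is to reduce from a known $\PSPACE$-hard problem, most naturally the simulation problem between OCNs itself~\cite{HMT13}. Given a simulation instance $(\mathcal{A}, \mathcal{B})$, I would build an OCN $\mathcal{N}_{\mathcal{A},\mathcal{B}}$ that begins with a non-deterministic branching on a dedicated letter, where one branch continues like $\mathcal{A}$ and the other like $\mathcal{B}$, arranged so that a resolver of $\mathcal{N}_{\mathcal{A},\mathcal{B}}$ must commit on-the-fly to a branch whose continuation faithfully implements a winning simulation strategy.

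The main obstacle I foresee is the simulation reduction. The asymmetric, long-horizon acceptance condition of the letter/token game -- where Adam only wins once a word that should be accepted has been built and Eve is stuck in a rejecting state -- has to be re-expressed as the local step-by-step simulation condition, and the move-order discrepancy must be absorbed by syntactic tweaks rather than a polynomial blow-up in the constructed OCNs. Ensuring that the resulting simulation game coincides precisely with $G_1(\mathcal{N})$, and is not a subtly different game that gives one of the players spurious options, is the crux of the argument.
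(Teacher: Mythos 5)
Your upper bound is essentially the paper's route: characterise history-determinism by the $1$-token game $G_1$, reduce $G_1$ to a simulation game between two nets built from $\Nc$ (handling exactly the move-order mismatch you identify, by making $\adam$'s net lag one step behind and adding a fresh end-marker letter enabled only from accepting states), and invoke the known $\PSPACE$ algorithm for OCN simulation. That part is fine.

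The lower bound, however, has a genuine gap. You propose to reduce \emph{from} the OCN simulation problem by building an OCN $\Nc_{\Ac,\Bc}$ with an initial non-deterministic branching into an $\Ac$-branch and a $\Bc$-branch. The letter game's winning condition is about \emph{language membership}: $\eve$ must keep her run accepting whenever the word read so far lies in $\Lc(\Nc_{\Ac,\Bc})=\Lc(\Ac)\cup\Lc(\Bc)$. So what the branching gadget certifies is that some branch HD-accepts $\Lc(\Ac)\cup\Lc(\Bc)$, i.e.\ a language-inclusion condition \emph{plus} history-determinism of the chosen branch --- not the simulation preorder. These coincide only when $\Bc$ is itself deterministic or history-deterministic, and in precisely those cases the inclusion problem is not a usable hardness source: for deterministic OCNs inclusion is $\NL$-complete, and for history-deterministic ones its $\PSPACE$-hardness is not established (indeed the paper only proves a $\PSPACE$ \emph{upper} bound for it, by reducing it \emph{to} history-determinism --- using it as your hardness source would be circular). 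For general non-deterministic $\Ac,\Bc$ the construction simply does not decide simulation. There is also a structural obstacle to any direct reduction from simulation: in the simulation game $\adam$ owns a counter which he drives through his transition choices, whereas in the letter game $\adam$ supplies only letters and the sole counter is driven by $\eve$'s transitions. The paper avoids this by reducing instead from non-emptiness of alternating finite automata over a unary alphabet (a $\PSPACE$-complete problem with no counter on $\adam$'s side): $\adam$ first pumps the counter up to the length of a purported accepting word, then the two players jointly unfold a branch of the run-tree, with $\adam$ resolving existential choices via the alphabet and $\eve$ resolving universal choices via the non-determinism, and gadgets punishing $\adam$ for unfaithful play. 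You would need an argument of this kind, or some other $\PSPACE$-hard source that fits the letter game's asymmetry, to complete the hardness direction.
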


The rest of this section is dedicated to the proof of the above statement. 

The proof of showing the upper bound proceeds by a series of polynomial reductions as below. 
$$\text{Deciding history-determinism}$$
$$\big\Updownarrow $$
$$\text{Deciding if }\eve \text{ wins letter game}$$ 
$$\big\Updownarrow $$
$$\text{Deciding if }\eve \text{ wins 1-token game}$$
$$\big\Downarrow$$
$$\text{Deciding if }\eve \text{ wins simulation game}$$

We shall define these games rigorously and prove these reductions in Subsection~\ref{subsec:TokenG}. Finally, since the simulation problem for one-counter nets is in $\PSPACE$~\cite{HLMT16}, this gives us the upper bound.

For the lower bound, we reduce from the problem of emptiness checking for alternating finite-state automata over a unary alphabet to deciding if $\eve$ wins the letter game. 

\subsection{Token games}\label{subsec:TokenG}
Deciding history-determinism efficiently for finite-state parity automata over infinite words has been a major area of study over the recent years. Bagnol and Kupergerg~\cite{BK18}, gave a polynomial time procedure for deciding history-determinism when the finite automata accepts with a B\"uchi condition. Their underlying technique is a two-player game, called $G_2$ or 2-token games, which they proved to be equivalent to the letter game when the automaton is B\"uchi. Boker, Kuperberg, Lehtinen and Skrzypczak~\cite{BKLS20} extended this to show that the game $G_2$ is equivalent to the letter game when the automaton is co-B\"uchi as well. Deciding the winner in $G_2$ for an automaton of a fixed parity index takes polynomial time~\cite{BKLS20}, and hence deciding history-determinism for the cases of when the parity automata accepts words based on B\"uchi or co-B\"uchi condition is polynomial. It is famously conjectured that winning $G_2$ is equivalent to the letter game for higher parity indices as well, and this is known as the $G_2$ conjecture~\cite{BKLS20}. Token games have also been instrumental in deciding history-determinism for quantitative automata, in the works of Boker and Lehtinen~\cite{BL22}. In their paper, they show that for finite words on a finite-state boolean automaton, history-determinism is characterised by $G_1$. This was later extended to labelled transition systems with countably many states, in the works of Henzinger, Lehtinen and Totzke~\cite{HLT22}. Thus, the $1$-token games also characterise history-determinism over OCNs. We include a proof nonetheless, for the sake of completeness.

In a play of the letter game, $\adam$ picks the letters while $\eve$ picks the transitions, and the winning condition for $\eve$ is to produce an accepting run for any word that is in the language. Token games work similarly, but they impose more restrictions on $\adam$. This is done by asking him to also display a valid run during the game with the help of some number of tokens. Here, we concentrate on the 1-token game $G_1$. The player $\adam$ wins the game $G_1$ if and only if he produces an accepting run, whilst $\eve$ produces a rejecting run. We make this more formal in the definition below.

\begin{definition}[One token game $G_1$]
Let $\Nc=(Q,\Sigma,\Delta,q_0,F)$ be a one-counter net. The positions of the game $G_1$ on $\Nc$ are a pair of configurations, $\Cc(\Nc) \times \Cc(\Nc)$, where the first configuration is $\eve$'s token, and the second is $\adam$'s token. The game starts with the initial position $(c^{\eve}_0,c^{\adam}_0)=((q_0,0),(q_0,0))$. At the $i^{th}$ iteration of the play, where the position is $(c^{\eve}_i,c^{\adam}_i)$:
\begin{enumerate}
    \item $\adam$ selects $a \in \Sigma$
    \item $\eve$ selects a transition for her token, $ c^{\eve}_i \xrightarrow{a,d} c^{\eve}_{i+1} $ 
    \item $\adam$ selects a transition for his token, $c^{\adam}_i \xrightarrow{a,d'} c^{\adam}_{i+1} $ 
\end{enumerate}
If $\eve$ is unable to choose a transition for her token whereas $\adam$ can choose a transition and extend the run on his token to an accepting run, then the game terminates and $\eve$ loses the game. However, irrespective of $\eve$'s ability to extend her run, if $\adam$ is unable to choose a transition for his token, then the game again terminates but $\adam$ loses the game.

If both the players can extend their runs by picking a transition then and if  $\adam$'s state in $c^{\adam}_{i+1}$ is accepting, but $\eve$'s state in $c^{\eve}_{i+1}$ is rejecting then again the game terminates and $\eve$ loses the game. Else, the game goes to $(c_{i+1},c'_{i+1})$ for another round of the play. We add that $\eve$ wins any infinite play.
\end{definition}
We show in the following lemma that  $\adam$, even with limited power, in one-token games can capture letter games. Letter games can be seen as a version of token games where $\adam$ plays with infinitely many tokens.

\begin{lemma}\label{lemma:G1HD}
For a OCN $\Nc$, if $\eve$ wins the game $G_1$ on $\Nc$, then $\eve$ has a winning strategy in the letter game as well.
\end{lemma}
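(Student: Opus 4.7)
The plan is to lift $\eve$'s winning strategy $\sigma$ in $G_1$ to a winning strategy $\sigma'$ in the letter game via a simulation argument: in the letter game, $\eve$ will play her transitions by internally running $\sigma$ against a simulated $\adam$-token of her own making. At round $i$, after seeing letters $a_0 \ldots a_i$, $\eve$ plays $\delta_i^E := \sigma(a_0 \ldots a_i, \delta_0^A \ldots \delta_{i-1}^A, \delta_0^E \ldots \delta_{i-1}^E)$, where the simulated $\adam$-transitions $\delta_j^A$ have been committed to in earlier rounds. She then commits a new simulated transition $\delta_i^A$ on $a_i$ from the current simulated $\adam$-configuration $c_i^A$, choosing it so as to keep the simulated $\adam$-run on a path that may still be completed to an accepting run.

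For the correctness argument, I would take any $w = a_0 \ldots a_{n-1} \in \Lc(\Nc)$ played by $\adam$ in the letter game and show that $\eve$'s online simulation of $\adam$ can be maintained to produce an accepting simulated $\adam$-run on $w$. Once this is established, the simulated $G_1$-play is one in which $\adam$'s run is accepting, so by $\sigma$'s winning property $\eve$'s $\sigma$-response produces an accepting run as well; this is precisely the run that $\sigma'$ produces in the letter game, so $\sigma'$ will be winning. Observe that the simulation requires no real-time interaction with $\adam$ in the letter game beyond reading his letters, since in $G_1$ $\eve$ moves before seeing $\adam$'s transition, so $\sigma$'s response at round $i$ depends only on letters and the prior simulated $\adam$-history that $\eve$ herself has constructed.

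The main obstacle is engineering the simulated $\adam$-transitions online so that the simulated $\adam$-run ends at an accepting state on the \emph{actual} word $w$ played. A naive greedy rule preserving ``coreachability to an accepting state on some word'' at each step ensures the simulated configuration can always reach an accepting state on \emph{some} extension, but not necessarily on the specific suffix $a_{i+1} \ldots a_{n-1}$ that $\adam$ will play. To handle this, I would instead argue by contradiction using determinacy of the letter game, which is a reachability game and hence determined by Martin's theorem. Assume for contradiction that $\eve$ loses the letter game; then $\adam$ has a winning letter-game strategy $\tau$. The plan is to lift $\tau$ to a winning $\adam$-strategy $\tau'$ in $G_1$: $\tau'$ will play letters according to $\tau$ (treating the sequence of $\eve$'s $\sigma$-transitions as if they were $\eve$'s letter-game transitions) and play his token-transitions so as to form an accepting run, which is possible because $\tau$'s winning guarantees the resulting word lies in $\Lc(\Nc)$. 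Then $\adam$'s run in the $G_1$-play is accepting while $\eve$'s $\sigma$-run is rejecting, contradicting $\sigma$-winning.

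The delicate technical point in this contradiction approach is ensuring that the word produced against $\sigma$ in the $G_1$-play matches the word produced against $\sigma'$ in the letter game, since $\sigma$'s responses depend on $\adam$'s $G_1$-transitions while $\tau$'s letter choices depend on $\eve$'s responses. Aligning these will rely on defining $\sigma'$ from $\sigma$ by a fixed $\adam$-transition rule in the simulation, and then having $\tau'$ play transitions consistent with that same rule up until the last step, with an accepting-run-adjustment argument to finally produce an accepting $\adam$-run. This alignment step is where I expect the main work of the proof to lie.
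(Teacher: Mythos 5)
Your proposal correctly identifies the central difficulty---that the simulated $\adam$-token must be driven online, and that a greedy ``coreachability'' rule does not guarantee acceptance of the specific word $\adam$ actually plays---but the fix you sketch does not close the gap. Two things are missing. First, the simulation rule defining $\sigma'$ is left unspecified; the choice that makes the argument work is the \emph{copycat} rule, in which the simulated $\adam$-token simply repeats $\eve$'s own transition, so that throughout the play both tokens sit at the same configuration. Second, and more seriously, the alignment plan (``play consistent with that same rule up until the last step, with an accepting-run-adjustment argument'') fails: if $\adam$'s token copies $\eve$ until the final round, his run ends exactly where $\eve$'s does, which is rejecting, and modifying only the last transition cannot in general turn it into an accepting run of the word $w$ produced---the accepting run of $w$ may need to diverge from $\eve$'s run much earlier, and deviating earlier changes the history seen by $\sigma$, which is precisely the alignment problem you flag but do not resolve.

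The paper resolves this by localising the failure. It first proves (Proposition~\ref{prop:residualresolver}) that a letter-game strategy is winning if and only if it takes only \emph{residual} transitions, i.e.\ transitions $(p,k)\xrightarrow{a,d}(p',k')$ with $\Lc(p',k') = a^{-1}\Lc(p,k)$. Under the copycat rule, a first non-residual move of the induced strategy occurs at a $G_1$-position $((p,k),(p,k))$ where both tokens coincide, witnessed by a word $av\in\Lc(p,k)$ with $v\notin\Lc(p',k')$. At that round $\adam$ deviates: he commits to the letters of $v$ and to a fixed accepting run of $av$ from $(p,k)$ for his token. Crucially, $\eve$ has already committed to $(p',k')$ before seeing $\adam$'s token move in that round, and since $v\notin\Lc(p',k')$ no continuation of her run can accept $v$---so her subsequent responses are irrelevant and no further alignment between the two games is needed. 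This ``deviate at the first residual-breaking transition, then ignore $\eve$'' step (or, equivalently, deviating at the first round where the remaining suffix of $\adam$'s winning word leaves $\eve$'s residual language) is the idea your proposal lacks; without it the contradiction argument does not go through.
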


To prove the above lemma, we need to understand better the structure of the resolvers for OCNs. 
Consider the definition given below of \emph{residual transitions}. Intuitively, these are transitions such that if there was an accepting word from a configuration with the first letter as $a$, then upon taking a residual transition on $a$, there is still an extension of the run on the word from the new configuration that is accepting.
More formally, we say that a transition $(q,k) \xrightarrow{a,d} (q',k')$ is \emph{residual} if $\Lc(q',k') = a^{-1}\Lc(q,k)$, where $\Lc(q,k)$ (and $\Lc(q',k')$) is the set of words that are accepted in $\Nc$ when the initial configuration is $(q,k)$ ($(q',k')$), instead of $(q_0,0)$. 
The proposition below shows any winning strategy of $\eve$ can be characterised by these residual transitions.

\begin{proposition}\label{prop:residualresolver}
For an OCN $\Nc$, an $\eve$ strategy $\sigma$ in the letter game is winning for $\eve$ if and only if  $\sigma$ takes only residual transitions. 
\end{proposition}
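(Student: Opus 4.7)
My plan is to prove both directions using a single inductive invariant relating $\eve$'s current configuration to the remaining accepted language. First I would record the easy half: for every transition $c \xrightarrow{a} c'$ in $\Nc$, prepending the transition to any run witnessing $u \in \Lc(c')$ yields a run witnessing $au \in \Lc(c)$, so $\Lc(c') \subseteq a^{-1}\Lc(c)$. Residuality is therefore exactly the reverse inclusion: the transition to $c'$ captures every accepting $a$-continuation from $c$.

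For the ``residual implies winning'' direction I would induct on the length of the play to show that if $\eve$ follows $\sigma$, every reached position $(c_i, w_i)$ satisfies the invariant $\Lc(c_i) = w_i^{-1}\Lc(\Nc)$. The base case $i = 0$ is immediate since $c_0 = (q_0,0)$. For the inductive step, if $\adam$ plays $a_i$ and $\eve$ takes her residual $a_i$-transition to $c_{i+1}$, then
\[
\Lc(c_{i+1}) \;=\; a_i^{-1}\Lc(c_i) \;=\; a_i^{-1}\bigl(w_i^{-1}\Lc(\Nc)\bigr) \;=\; w_{i+1}^{-1}\Lc(\Nc).
\]
This invariant directly implies $\sigma$ is winning: whenever $w_i \in \Lc(\Nc)$ we have $\epsilon \in \Lc(c_i)$, so $c_i$ is accepting; and whenever $w_i a$ is a prefix of an accepting word, $a^{-1}\Lc(c_i)$ is nonempty, so a transition is available for $\eve$, and $\sigma$ picks a residual one by hypothesis.

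For the ``winning implies residual'' direction I would argue by contradiction. Suppose $\sigma$ is winning yet some play consistent with $\sigma$ uses a non-residual transition, and pick the first step $i$ along this play at which this happens. Since all of $\eve$'s earlier transitions were residual, the invariant from the previous paragraph gives $\Lc(c_i) = w_i^{-1}\Lc(\Nc)$. After $\eve$'s non-residual $a_i$-move to $c_{i+1}$ we have
\[
\Lc(c_{i+1}) \;\subsetneq\; a_i^{-1}\Lc(c_i) \;=\; w_{i+1}^{-1}\Lc(\Nc),
\]
so there exists $u \in w_{i+1}^{-1}\Lc(\Nc) \setminus \Lc(c_{i+1})$. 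Now $\adam$ plays $u$ letter by letter; since $u \notin \Lc(c_{i+1})$, no continuation of $\eve$'s run from $c_{i+1}$ can end at an accepting state, even though $w_{i+1} u \in \Lc(\Nc)$, contradicting that $\sigma$ wins.

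The main subtlety will be the use of minimality in the contradiction: without choosing the first non-residual step, one would only have $\Lc(c_i) \supseteq w_i^{-1}\Lc(\Nc)$, which leaves room for an ``over-prepared'' $\eve$ whose non-residual move still preserves $\Lc(c_{i+1}) \supseteq w_{i+1}^{-1}\Lc(\Nc)$, so no immediate contradiction follows. Fixing attention on the first non-residual move restores the equality needed to derive $\Lc(c_{i+1}) \subsetneq w_{i+1}^{-1}\Lc(\Nc)$ and hence to locate a winning continuation for $\adam$.
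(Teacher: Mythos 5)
Your proof is correct and follows essentially the same route as the paper: the backward direction via the invariant $\Lc(c_i) = w_i^{-1}\Lc(\Nc)$ maintained by residual transitions, and the forward direction by exhibiting a word in $a_i^{-1}\Lc(c_i)\setminus\Lc(c_{i+1})$ that $\adam$ can then play to defeat $\sigma$. One remark on your closing paragraph: the minimality of the first non-residual step is not actually needed, and your justification for it inverts the automatic inclusion. For any configuration $c_i$ reached on $w_i$ one always has $\Lc(c_i)\subseteq w_i^{-1}\Lc(\Nc)$ (prepend the run on $w_i$ to any accepting run from $c_i$), never the reverse; so from a witness $u\in a_i^{-1}\Lc(c_i)\setminus\Lc(c_{i+1})$ one gets $w_i a_i u\in\Lc(\Nc)$ directly, with no equality at step $i$ required. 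There is no ``over-prepared'' $\eve$: a reachable configuration can only under-approximate the residual language, which is exactly the direction the contradiction needs.
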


Note that in the letter game, each player winning the game has a positional winning strategy, as it is a reachability game. Suppose that $\eve$ wins the letter game, then $\eve$ has a winning strategy which can be given by a (partial) function $\sigma :(Q \times \mathbb{N}) \times \Sigma^{*} \times \Sigma \rightarrow \Delta^{*}$. Using  Proposition~\ref{prop:residualresolver}, we can show that $\eve$'s strategy only depends on the configuration, and is independent of the word read so far. 
\begin{proposition}\label{prop:positional}
If $\eve$ wins the letter game, then $\eve$ has a winning strategy $\sigma$ that only depends on the current configuration of the play, i.e $\sigma$ is a partial function $\sigma : (Q \times \mathbb{N}) \times \Sigma \rightarrow \Delta^{*}$    
\end{proposition}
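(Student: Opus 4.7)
The plan is to leverage Proposition~\ref{prop:residualresolver}, which characterises winning strategies in the letter game as exactly those taking only residual transitions. Since residuality of a transition $c \xrightarrow{a,d} c'$ depends only on $c$, $a$, and $c'$, and not on the history that led to $c$, a positional winning strategy should be extractable.

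I would begin by letting $W_{\eve}$ denote the set of configurations from which $\eve$ wins the letter game; in particular $(q_0,0) \in W_{\eve}$. The key structural claim is: for every $c \in W_{\eve}$ and every $a \in \Sigma$ with $a^{-1}\Lc(c) \neq \emptyset$, there exists a residual transition $c \xrightarrow{a,d} c'$ with $c' \in W_{\eve}$. To establish this, I would take any winning strategy $\tau$ from $c$ (guaranteed since $c \in W_{\eve}$). When $\adam$ opens by playing $a$, $\eve$ is forced to respond, because $a$ prefixes some accepting word in $\Lc(c)$; by Proposition~\ref{prop:residualresolver}, $\tau$'s response is a residual transition to some $c'$, and the continuation of $\tau$ from $c'$ witnesses $c' \in W_{\eve}$.

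With this in hand, I would define the positional strategy $\sigma' \colon (Q \times \mathbb{N}) \times \Sigma \rightarrow \Delta^*$ by selecting, at each pair $(c, a)$ with $c \in W_{\eve}$ and $a^{-1}\Lc(c) \neq \emptyset$, some fixed residual transition whose target lies in $W_{\eve}$, and assigning values arbitrarily on inputs that will not be reached in a winning play. A straightforward induction on play length then shows that under $\sigma'$, starting from $(q_0,0)$, every configuration reached lies in $W_{\eve}$ and every transition taken is residual. Invoking the converse direction of Proposition~\ref{prop:residualresolver} yields that $\sigma'$ is winning, as desired.

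The main subtlety is that not every residual transition from a configuration in $W_{\eve}$ need land in $W_{\eve}$: two configurations can accept the same residual language yet differ in winning status in the letter game due to different successor structures further along. Picking a residual transition with target in $W_{\eve}$, rather than an arbitrary residual transition, is therefore essential, and its existence is exactly what the application of Proposition~\ref{prop:residualresolver} to a winning strategy from $c$ delivers.
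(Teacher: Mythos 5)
Your proposal is correct and follows essentially the same route as the paper: both extract a positional strategy from Proposition~\ref{prop:residualresolver} by fixing, at each configuration, one residual transition that a winning strategy would take, and then invoking the converse direction of that proposition to conclude. The subtlety you flag --- that an arbitrary residual transition need not stay in the winning region --- is handled in the paper by the equivalent device of copying the choice of a fixed winning strategy $\sigma'$ after some witness history $u$ reaching the configuration, which automatically keeps the play inside configurations from which $\eve$ still wins.
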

Having shown that $G_1$ is equivalent to the letter game, we show that deciding the winner in the game $G_1$ is decidable in $\PSPACE$ (when the transitions are unary). This implies deciding history-determinism is also decidable, and in $\PSPACE$. We do so by reducing $G_1$ to the simulation problem between two one-counter nets, which is known to be $\PSPACE$-complete (cf. Theorem 7,~\cite{HLMT16}).   
Given two one-counter nets $\Nc$ and $\Nc'$ at configurations $(q,n)$ and $(q',n')$, intuitively, we say $\Nc'$ simulates $\Nc$ (or $\Nc$ is simulated by $\Nc'$) from their corresponding configurations if for any sequence of transitions from $(q,n)$, there is also a sequence of transitions from $(q',n')$ which is built `on-the-fly'. This alternation between existential and universal quantifiers in the above statement  renders this definition perfect to be captured by the following simulation game between two players $\adam$ and $\eve$. 

\begin{definition}[Simulation Game]\label{def:simulationgame}
Given two OCNs  $\Nc = (Q, \Sigma, \Delta,q_I, F)$ and $\Nc' = (Q',q'_0,\Sigma,\Delta',q'_I, F')$ and two configurations $c = (p,k)$ and $c' = (p',k')$ in $\Cc(\Nc)$ and $\Cc(\Nc')$ respectively where $k,k' \in \mathbb{N}$. The \emph{simulation game} between the OCNs $\Nc$ and $\Nc'$ at a position $(c,c')$, denoted by $\simg{(\Nc,c)}{(\Nc',c')}$, is a two player game between $\adam$ and $\eve$, with positions in $\Cc(\Nc) \times \Cc(\Nc')$ where the initial position is $(c_0,c'_0) = (c,c')$. At round $i$ of the play, where the position is $(c_i,c'_i)$:
\begin{itemize}
    \item $\adam$ selects a letter $a \in \Sigma$, and a transition $c_i \xrightarrow{a,d} c_{i+1}$ in $\Nc$
    \item $\eve$ selects an $a$-transition $c'_i \xrightarrow{a,d'} c'_{i+1}$ in $\Nc'$
\end{itemize}
If $\adam$ is unable to choose a transition, then $\adam$ loses the game immediately. If $\eve$ is unable to choose a transition but $\adam$ can select a transition and extend the run in $\Nc$ to an accepting run, then $\eve$ loses the game.

Otherwise, if  $\adam$'s state in $c_{i+1}$ is accepting but $\eve$'s state in $c_{i+1}'$ is rejecting, then $\eve$ loses the game, and the game terminates. Else, the game goes to $(c_{i+1},c'_{i+1})$ for another round of the play. The player $\eve$ wins any infinite play.
\end{definition}

If $\eve$ wins the above game, we say $(\Nc',(p',k'))$ simulates $(\Nc,(p,k))$, and we denote it by $(\Nc,(p,k))\simulates (\Nc',(p',k'))$. Furthermore, we say $\Nc'$ simulates $\Nc$ or $\Nc \simulates \Nc'$ if $(\Nc,(q_I,0))\simulates (\Nc',(q'_I,0))$.

As the simulation game is a reachability game over a countably sized arena, it is determined, and the winning player has a positional strategy. Thus, if $\eve$ wins the above simulation game $\simg{\Nc,(p,k))}{\Nc',(p',k')}$, then $\eve$ has a positional winning strategy $\sigma_{\eve} : \Cc(\Nc) \times \Cc(\Nc') \times \Sigma \rightarrow \Delta'$. 

\begin{remark}\label{remark:simulationFinalstate}
In the literature over one-counter nets~\cite{Srb06,HLMT16,JOS18}, the winning condition for the players on the simulation game is expressed differently, via the inability of the players to choose transitions, rather than accepting states. The player $\adam$ ($\eve$) loses the game if $\adam$ ($\eve$) is unable to choose a transition. It can however, be shown that the two versions of the simulation games are log-space reducible to each other. We show this equivalence in Appendix~\ref{appendix:Sim}.
\end{remark}

Note the similarities (and differences) in $G_1$ and the simulation game. In both, the winning condition for $\adam$ would like $\adam$'s run to be accepting, while $\eve$'s to be rejecting. In $G_1$ however, $\eve$ is picking the transition first, while in the simulation game, $\adam$ is picking the transition first. 

With some modifications to the structure of the underlying nets in $G_1$, we can ensure that the simulation game between the modified net and the original net captures $G_1$. The intuition is that, in the simulation game, the net which is simulated is modified such that $\adam$ is forced to delay choosing his transition.
This is formalized in the proof of the following lemma, and explained with a diagram in Figure~\ref{fig:G1Sim}.  

\begin{figure}
    \centering
    \makebox[\textwidth][c]{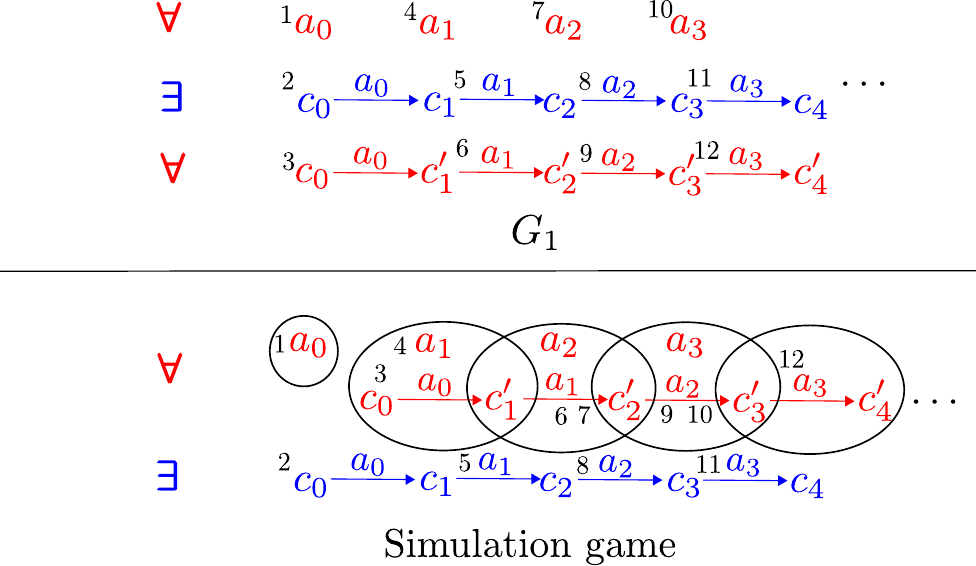}%
    \caption{An illustration of a play of $G_1$, seen as a play of the simulation game\label{fig:G1Sim}}
\end{figure}

\begin{lemma}\label{lemma:G1Simulation}
Given a one-counter net $\Nc$, there are one-counter nets $\Mc$ and $\Mc'$, which have size at most polynomial in size of $\Nc$ such that $\eve$ wins $G_1$ on $\Nc$ if and only if $\eve$ wins $\Mc \simulates \Mc'$. 
\end{lemma}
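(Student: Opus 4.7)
The plan is to introduce a fresh letter $b_\delta$ for every transition $\delta \in \Delta$ of $\Nc$, and to construct both $\Mc$ and $\Mc'$ over the enlarged alphabet $\Sigma' = \Sigma \sqcup \{b_\delta : \delta \in \Delta\}$ so that a single round of $G_1$ on $\Nc$ is simulated by \emph{two consecutive} rounds of the simulation game $\Mc \simulates \Mc'$. In the first sub-round $\adam$ commits to a letter $a \in \Sigma$ without yet committing to any $\Nc$-transition; this forces $\eve$ to commit her $\Nc$-transition (played inside $\Mc'$) first. Only in the second sub-round does $\adam$ reveal which $\Nc$-transition $\delta$ he intends to use for his token, via the dedicated letter $b_\delta$. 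The order in which the two players commit their transitions thus matches $G_1$.

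Concretely, $\Mc'$ has state set $Q$, the $\Nc$-transitions as its $a$-transitions for each $a \in \Sigma$, and a zero-change self-loop at every state on every letter $b_\delta$; its initial state is $q_0$ and its final states are $F$. The net $\Mc$ has state set $Q \sqcup (Q \times \Sigma)$, a unique forced zero-change transition $q \xrightarrow{a, 0} (q, a)$ for each $q \in Q$ and $a \in \Sigma$, and a transition $(q, a) \xrightarrow{b_\delta, d} q'$ whenever $\delta = (q, a, d, q') \in \Delta$; its initial state is $q_0$, its final states are $F$, and the intermediate states $(q, a)$ are non-accepting. Both nets clearly have size polynomial in $|\Nc|$.

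The equivalence of the two games then follows by a block-by-block induction: after each two-round block in $\Mc \simulates \Mc'$, the configurations of $\adam$ in $\Mc$ and of $\eve$ in $\Mc'$ coincide with those of $\adam$'s and $\eve$'s tokens in $G_1$ after the corresponding $G_1$ round; and since intermediate states are non-accepting, no acceptance check can trigger spuriously between the two sub-rounds of a block. From an $\eve$-winning strategy in $G_1$, a winning strategy in the simulation game is obtained by querying the $G_1$ strategy after each first sub-round (to pick her $\Mc'$-transition) and playing the self-loop in each second sub-round; conversely, $\eve$'s moves from her simulation-game strategy are played directly in $G_1$ after feeding $\adam$'s letter and subsequent transition choice into the simulation. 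The main obstacle I expect is ensuring the asymmetric losing conditions translate faithfully at the edge cases where a player cannot move --- in particular, that $\adam$'s ability to extend his $\Mc$-run to an accepting one corresponds exactly to his ability to extend his $G_1$ token in $\Nc$. This relies on the design choice that intermediate states are rejecting and that the only way for $\adam$ to reach an accepting $\Mc$-state from $(q, a)$ is via some genuine $b_\delta$-transition into $Q$.
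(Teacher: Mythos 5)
Your construction is correct, and it reaches the same conclusion as the paper by a genuinely different encoding. The paper keeps the alphabet almost unchanged (adding a single end-marker $\#$) and makes each round of $G_1$ correspond to \emph{one} round of the simulation game: $\Mc$'s states are pairs $(p,a)$ recording the previously read letter, and the actual $\Nc$-transition on $a$ is executed only upon reading the \emph{next} letter, creating a one-step lag for $\adam$. Because $\adam$'s token in $\Mc$ is then always one transition ``behind'' his conceptual $G_1$ token, the acceptance check cannot be read off his current state, and the paper compensates with the $\#$-gadget: the only accepting states of $\Mc$ and $\Mc'$ are $\#$-states, $\adam$ can enter his only if his owed transition could land in $F$, and $\eve$ can enter hers only from a state of $F$. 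Your version instead enlarges the alphabet to $\Sigma \sqcup \{b_\delta : \delta \in \Delta\}$ and spends \emph{two} simulation rounds per $G_1$ round: $\adam$'s first move is forced into a non-accepting intermediate state $(q,a)$, committing only the letter, and the letter $b_\delta$ in the second round commits his transition, while $\eve$ idles on a self-loop. This buys a completely direct treatment of acceptance (final states remain $F$ in both nets, and after every block both tokens sit at genuine $\Nc$-configurations identical to the $G_1$ positions) at the cost of an alphabet of size $|\Sigma|+|\Delta|$ and a doubled round count --- both still polynomial, so the lemma's size bound is unaffected. Your handling of the edge cases is also sound: since the only outgoing transitions of $(q,a)$ are the $b_\delta$ for $\delta$ with source $q$ and letter $a$, the runs of $\Mc$ from $((q,a),m)$ are in bijection with the runs of $\Nc$ from $(q,m)$ that begin with an $a$-transition, so ``$\adam$ can extend his $\Mc$-run to an accepting one'' translates exactly to the corresponding clause of $G_1$, and a player who plays a letter out of phase simply has no transition and loses, as required.
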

\begin{proof}(Sketch)
 Figure~\ref{fig:G1Sim} captures the intuition behind the proof. Here, note that we have different linearisations of the play, but the alternation between $\eve$ and $\adam$ required is captured by the simulation game by making $\adam$ choose his configuration and transition at the same time. 
For each run in $\Nc$, we have a run in $\Mc$ that lags behind one transition, and it does so by remembering which letter it should move on next. We provide a construction such that $\Mc'$ is linear in the size of $\Nc$ and $\Mc$ has size approximately $\Nc\times|\Sigma|$, where $|\Sigma|$ is the size of the alphabet. This factor of $|\Sigma|$ arises due to remembering in the state space, the previous letter read, to create a lag for $\adam$'s decisions. We then show that the player $\eve$ wins $G_1$ on $\Nc$ if and only if $\Mc  \simulates \Mc'$. 
\end{proof}

Finally, we see that the following theorem from the work of Hofman, Lasota, Mayr and Totzke~\cite{HLMT16} shows that the winner of a simulation game can be solved in $\PSPACE$. We recall their results to fit our notation below.

\begin{theorem}\label{theorem:outsource}
Given two one-counter nets $\Nc$ and $\Nc'$, with configurations $(p,k)$ and $(p',k')$ in $\Cc(\Nc)$ and $\Cc(\Nc')$ respectively, with $k$ and $k'$ represented in binary, deciding whether $(\Nc',(p',k'))$ simulates $(\Nc,(p,k))$ is in $\PSPACE$. Moreover, the set of $(k,k')$ for which $(\Nc,(p,k))\simulates (\Nc',(p',k'))$ is semilinear, and can be computed in $\EXPSPACE$.  
\end{theorem}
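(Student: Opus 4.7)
The plan is to analyse the structure of Eve's winning region in the simulation game and exploit a ``belt'' theorem that simultaneously yields a semilinear description and a polynomial-space decision procedure. For fixed states $p \in Q$ of $\Nc$ and $p' \in Q'$ of $\Nc'$, consider the two-dimensional winning set
$$W_{p,p'} = \{(k,k') \in \Nb^2 : (\Nc,(p,k)) \simulates (\Nc',(p',k'))\}.$$
Two monotonicity observations pin down its shape: raising Eve's counter $k'$ (keeping $k$ fixed) can only help Eve, while lowering Adam's counter $k$ (keeping $k'$ fixed) also only helps Eve. In particular, $W_{p,p'}$ is upward closed in $k'$ and downward closed in $k$.

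The heart of the argument is a belt theorem, generalising the structural results of Jan\v{c}ar and co-authors used to study bisimulation and simulation on one-counter processes. The claim is that $\Nb^2 \setminus W_{p,p'}$ is contained in finitely many belts of the form $\{(k,k') : \alpha_i k - \beta_i k' \in [c_i, c'_i]\}$, where each slope $\alpha_i/\beta_i$ matches the ratio of counter effects along a simple cycle of the synchronous product of $\Nc$ and $\Nc'$, and all coefficients are bounded exponentially in the size of the two nets. Far from every belt boundary, the ratio $k/k'$ is so unbalanced that one of the players can force the play around a cycle whose net counter effect either eventually drains Eve's counter (if her slope is too low) or matches every decrement Adam makes (if her slope is high enough), and the outcome is stable under small perturbations of the starting configuration.

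Both stated conclusions follow from the belt description. Each belt is trivially semilinear, and finite unions and complements preserve semilinearity, so $W_{p,p'}$ admits an exponential-size semilinear representation, which can be extracted in $\EXPSPACE$ by enumerating candidate slopes and fitting the boundary widths by a reachability analysis on the product net. For $\PSPACE$ membership on input $(k,k')$, one restricts attention to an exponentially bounded ``relevant'' region: any configuration whose counter values exceed some exponential threshold lies outside every belt, so its status is read off directly from the belt description; within the bounded region, the simulation game is an alternating reachability problem on a configuration graph of exponential size, which can be solved in polynomial space by Savitch-style recursion.

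The main obstacle is proving the belt theorem itself. One must carefully show that from any configuration outside all belts, the losing player is trapped on the losing side under optimal play, which requires a pumping argument on cycles of the product net together with uniform bounds on how long a commitment to a ``winning cycle'' can be delayed. A secondary difficulty is the accepting-state winning condition used here, which differs from the transition-exhaustion formulation standard in the simulation-over-nets literature; one would first apply the reduction of Remark~\ref{remark:simulationFinalstate} and then run the belt analysis on the reformulated game.
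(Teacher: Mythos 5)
The paper does not actually prove this statement: Theorem~\ref{theorem:outsource} is imported wholesale from Hofman, Lasota, Mayr and Totzke~\cite{HLMT16} (the ``proof'' in the paper is a one-line citation to Theorem~7 there). So the honest comparison is between your sketch and the argument in that reference. Your proposal does correctly reconstruct the strategy of the cited work: the winning region $W_{p,p'}$ is monotone in the two counters, its complement is confined to finitely many linear belts whose slopes come from cycles of the product system and whose coefficients are exponentially bounded, semilinearity falls out of the belt description, and $\PSPACE$ membership comes from combining the belt geometry with a bounded search. Your remark that one must first pass through the reduction of Remark~\ref{remark:simulationFinalstate} to the transition-exhaustion formulation before running the belt analysis is also exactly the right preprocessing step for this paper's variant of the game.

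However, as a proof the proposal has a genuine gap, and you name it yourself: the belt theorem is the entire technical content of the result, and you assert it rather than establish it. The pumping argument ``on cycles of the product net together with uniform bounds on how long a commitment to a winning cycle can be delayed'' is precisely where all the difficulty lives (it occupies most of~\cite{HLMT16}), and nothing in the sketch constrains the number of belts, their widths, or why the colouring between belts is constant. One concrete detail is also wrong as stated: it is not true that configurations with counter values above an exponential threshold lie \emph{outside} every belt --- belts are unbounded regions extending to infinity along their slope, and arbitrarily large configurations lie inside them; what the decision procedure actually exploits is that belts have exponentially bounded \emph{width} and that membership along a belt is eventually periodic. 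In the context of this paper the correct move was simply to cite~\cite{HLMT16}; as a standalone argument, your text is a plausible roadmap but not a proof.
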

\begin{proof}
See \cite{HLMT16}, cf. Theorem 7
\end{proof}

\begin{lemma}\label{lemma:UB}
Given a one-counter net $\Nc$, we can decide in $\PSPACE$ if $\Nc$ is history-deterministic.
\end{lemma}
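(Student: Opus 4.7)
The plan is to chain together the reductions that have already been established in this section. The starting point is the definition: an OCN $\Nc$ is history-deterministic precisely when $\eve$ wins the letter game on $\Nc$, so the algorithmic task is to decide whether $\eve$ wins that letter game.

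First, I would invoke Lemma~\ref{lemma:G1HD} together with its (easy) converse to replace the letter game by the one-token game $G_1$. The converse direction is immediate: a winning resolver for $\eve$ in the letter game induces a winning strategy in $G_1$ by simply ignoring $\adam$'s token. Combined with Lemma~\ref{lemma:G1HD}, this yields the equivalence that $\Nc$ is history-deterministic if and only if $\eve$ wins $G_1$ on $\Nc$.

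Next, by Lemma~\ref{lemma:G1Simulation}, I can effectively construct in polynomial time two OCNs $\Mc$ and $\Mc'$, of size polynomial in $|\Nc|$, such that $\eve$ wins $G_1$ on $\Nc$ if and only if $\Mc \simulates \Mc'$. This reduces deciding history-determinism of $\Nc$ to deciding a simulation instance between two OCNs whose sizes (including the initial counter values, which are in fact $0$ and hence represented trivially) are polynomially bounded in $|\Nc|$.

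Finally, I appeal to Theorem~\ref{theorem:outsource} of Hofman, Lasota, Mayr and Totzke, which states that deciding whether one OCN simulates another is in $\PSPACE$, even when initial counter values are given in binary. Since the whole chain of reductions is computable in polynomial time and $\PSPACE$ is closed under polynomial-time reductions, the resulting procedure runs in $\PSPACE$, which proves the lemma. There is no real obstacle here beyond having the pieces in place; the work has been front-loaded into Lemmas~\ref{lemma:G1HD} and~\ref{lemma:G1Simulation} and into the cited simulation upper bound.
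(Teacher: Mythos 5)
Your proposal follows exactly the paper's route: the paper states that Lemma~\ref{lemma:UB} is a corollary of Lemma~\ref{lemma:G1HD}, Lemma~\ref{lemma:G1Simulation} and Theorem~\ref{theorem:outsource}, which is precisely the chain you assemble (including the easy converse of Lemma~\ref{lemma:G1HD}, which the paper also uses implicitly when it calls $G_1$ a weaker game for $\eve$ than the letter game). Your argument is correct and adds only the routine observation that the reductions are polynomial-time and $\PSPACE$ is closed under them.
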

The above lemma is a corollary of Lemma~\ref{lemma:G1HD}, Lemma~\ref{lemma:G1Simulation} and Theorem~\ref{theorem:outsource}.
\subsection{Lower Bounds}

Although solving the simulation game turns out to be $\PSPACE$-complete itself from the work of Srba~\cite{Srb06}, this lower bound result does not work for our reduction to simulation games. The reduction we give from $G_1$ to simulation games produces only a restricted class of simulation games which solve $G_1$. 

Nevertheless, we show that deciding history-determinism is still $\PSPACE$-hard, showing that even this restriction of the simulation problem is enough to induce $\PSPACE$-hardness.

\begin{lemma}\label{lemma:LB}
Given a one-counter net $\Nc$, it is $\PSPACE$-hard to decide if $\Nc$ is history-deterministic.
\end{lemma}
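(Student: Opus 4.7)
The plan is to reduce from the emptiness problem for alternating finite-state automata (AFA) over a unary alphabet, which is known to be $\PSPACE$-complete. Given a unary AFA $\mathcal{A}$, the aim is to construct in polynomial time an OCN $\mathcal{N}$ such that $\mathcal{N}$ is history-deterministic if and only if $\mathcal{L}(\mathcal{A})$ is non-empty; since $\PSPACE$ is closed under complement, this suffices for the lower bound on deciding history-determinism.

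The design aligns the two types of choice in the letter game with the two types of state in $\mathcal{A}$: $\adam$'s letter choices simulate the universal branches of $\mathcal{A}$, and $\eve$'s on-the-fly resolutions of non-determinism simulate the existential branches. Concretely, at a state of $\mathcal{N}$ encoding a universal AFA state $q$ with successors $q_1,\dots,q_k$, the net has a deterministic $b_i$-transition to the state encoding $q_i$ for each $i$, so $\adam$ picks a successor by choosing a letter. At a state encoding an existential AFA state $q$, the net offers $\eve$ several competing non-deterministic transitions on a common letter, one per successor of $q$, from which she must commit to exactly one. The counter is used to track the remaining depth of the simulated AFA computation: in an initial phase, $\adam$ plays a sequence of ``build-up'' letters during which $\eve$ non-deterministically chooses whether to increment the counter, effectively guessing the input length $n$ to be used as a witness for non-emptiness; afterwards, each simulated AFA step decrements the counter by one.

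The main obstacle will be handling the absence of zero tests in OCNs, since we need the acceptance condition to engage precisely when the simulated computation has taken $n$ steps. I would encode this indirectly through the interplay of counter dynamics and the accepting/non-accepting structure of $\mathcal{N}$: acceptance is witnessed by a designated final letter whose transition lands in an accepting state only when played after exactly the required number of decrements, while any attempted decrement on an empty counter simply makes $\eve$'s run get stuck. The key sanity check is that if $\eve$ underestimates $n$, she runs out of counter too early and loses the letter game while the word is still a prefix of an accepting word, and if she overestimates $n$ she cannot reach the accepting configuration at the designated final step; on the other hand, if she picks an $n$ for which $\mathcal{A}$ accepts $a^n$ and plays her existential branches according to a winning strategy of the existential player in $\mathcal{A}$'s acceptance game, she survives against every universal branch chosen by $\adam$.

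Finally, I would verify that the whole construction is polynomial in $|\mathcal{A}|$ (the state set is essentially that of $\mathcal{A}$ together with a small build-up gadget, and the alphabet is linear in the maximum universal out-degree of $\mathcal{A}$), and then argue by a careful two-sided strategy translation that $\eve$ wins the letter game on $\mathcal{N}$ if and only if $\mathcal{L}(\mathcal{A}) \neq \emptyset$. Combined with Lemma~\ref{lemma:UB}, this yields the $\PSPACE$-completeness stated in Theorem~\ref{thm:PSPACE}.
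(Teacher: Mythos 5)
Your high-level plan — reduce from emptiness of unary alternating automata, use the counter to store the length of the candidate witness word, and split the AFA's alternation between $\adam$'s letter choices and $\eve$'s resolution of non-determinism — is the same as the paper's. However, you have inverted the role assignment, and this inversion breaks the reduction. You let $\eve$ certify non-emptiness (she guesses the length $n$ and resolves the \emph{existential} branches) while $\adam$ plays the \emph{universal} branches, aiming for ``$\Nc$ is history-deterministic iff $\Lc(\Ac)\neq\emptyset$''. Two things go wrong. First, $\eve$ cannot guess $n$ on-the-fly: in the letter game $\adam$ controls the word, hence the length of the build-up phase, so $\eve$ must commit to her increments without knowing when the build-up ends; since the set of lengths accepted by a unary AFA is exactly the hard object here, there is no online strategy for choosing $n$. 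Second, and more fundamentally, the letter game only penalises $\eve$ on words that lie in $\Lc(\Nc)$. In your design, whether a completed transcript belongs to $\Lc(\Nc)$ depends on whether \emph{some} choice of $n$ and existential moves is accepting along the particular universal branch $\adam$ played — so in the direction ``$\Lc(\Ac)=\emptyset \Rightarrow \adam$ wins'', the very words $\adam$ would need to defeat $\eve$ tend to fall out of the language, and $\eve$ wins vacuously. The membership question for $\Lc(\Nc)$ becomes entangled with the AFA's semantics, which is circular.

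The paper avoids both problems by giving the ``there exists a word and a run-tree'' side to $\adam$: he pumps the counter \emph{deterministically} to a length $n$ of his choosing, he resolves the existential transitions via distinct letters $a_q$, and only the universal transitions are left as non-determinism for $\eve$; the constructed net is then history-deterministic iff $\Ac$ is \emph{empty} (which suffices since $\PSPACE$ is closed under complement). Crucially, all states of the constructed net are accepting and its language is a simple prefix-closed set of well-formed transcripts, independent of how the simulated AFA run turns out; the AFA only determines whether $\adam$ can steer $\eve$, with counter value exactly $0$ at an accepting state, into a gadget where she must choose blindly between $\heartsuit$ and $\clubsuit$. If you want to keep your quantifier orientation you would need $\eve$ to select letters, which is not how the letter game works; so you should flip the roles to match the paper's assignment, after which the rest of your outline (the $\$$-triggered end-gadget, the escape states for unfair play by $\adam$, the polynomial size bound) goes through.
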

\begin{proof}[Sketch]
The proof goes by reducing from the problem of checking non-emptiness of an alternating finite-state automaton over a unary alphabet. This variation of the problem was proven to be $\PSPACE$ complete by Holzer~\cite{Hol95}, with its proof simplified by Jan\v{c}ar and Sawa~\cite{JS07}.
The intuition behind the construction is to recreate a run of the alternating automaton using the constructed net. In the letter game, a fair play of $\adam$ corresponds to a branch of a run-tree in the automaton, with $\eve$ resolving universal transitions and $\adam$ resolving existential ones. The player $\adam$ can ensure that he wins the letter game if and only if the alternating automaton has some word that he can demonstrate is in the language. If $\adam$ plays unfairly, then there are gadgets to ensure that $\eve$ automatically wins.
\end{proof}
We conclude this section by saying that
Lemma~\ref{lemma:UB} and Lemma~\ref{lemma:LB} together give the proof of Theorem~\ref{thm:PSPACE}

\section{Languages and History-Determinism in OCNs}\label{sec:hdocnTodoca}
We dedicate this section to tackling different questions about languages accepted by history-deterministic one-counter nets and decision problems on such languages.

\subsection{Languages Accepted by History-Deterministic OCNs}
While in history-deterministic models we are able to resolve the non-determinism on-the-fly, it is not well-understood how these resolvers might look like in general. In fact, Guha, Jecker, Lehtinen and Zimmermann showed that there are history-deterministic pushdown automata whose resolvers cannot be given by a pushdown automata~\cite{GJLZ21}, and whether such a resolver can be computed is an open problem.

In this sub-section, our goal is to understand better the languages of history-deterministic OCNs. As a first-step towards this goal, we already have some intuition from the previous section on the eventually periodic nature of the transitions that are residual (as a corollary of Lemma~\ref{lemma:G1Simulation} and Theorem~\ref{theorem:outsource}). Here, we solidify this intuition by defining what it means to have \emph{\ssp} for a resolver and to then show that all nets have this property. For the case of history-deterministic nets, using this semi-linearity of the resolvers, we show the existence of a language-equivalent deterministic OCA.

We first show a sufficient characterisation which we call the \emph{\ssp}, for if a given history-deterministic one-counter net can be determinised. 

We say a transition $\delta = (p,a,d,p')$ in an one-counter net $\Nc$ is a $\gut$ transition at $(p,k)$, if $((p,k),(p,k))$ is in the winning region of $G_1$, and the transition $\delta = (p,k) \xrightarrow{a,d} (p',k+d)$ is a winning move for $\eve$ when $\adam$ chooses the letter $a$. We also write this sometimes as $(p,k) \xrightarrow{a,d} (p',k+d)$ is a $\gut$ transition in $\Nc$. The following lemma can be seen as a weakening of Proposition~\ref{prop:residualresolver} :
\begin{lemma}\label{lemma:guttransitions}
Let $\Nc = (Q, \Sigma,\Delta,q_0,F)$ be a history-deterministic one-counter net. An $\eve$ strategy $\sigma$ in the letter game is winning for $\eve$ if and only if
    the strategy $\sigma$ only takes $\gut$ transitions $ \delta = (p,k) \xrightarrow{a,d} (p',k')$.
\end{lemma}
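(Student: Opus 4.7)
The plan is to reduce both directions to Proposition~\ref{prop:residualresolver}, by showing that in a history-deterministic OCN every good transition is residual. The main technical claim I would prove is: if $\delta = (p,k) \xrightarrow{a,d} (p',k+d)$ is good, then $\Lc(p',k+d) = a^{-1}\Lc(p,k)$. The inclusion $\subseteq$ follows from the mere existence of $\delta$. For the reverse inclusion, I would take $w$ with $aw \in \Lc(p,k)$, pick a witnessing transition $(p,k)\xrightarrow{a,d'}(p'',k+d')$ with $w \in \Lc(p'',k+d')$, and then analyse the $G_1$-play at $((p,k),(p,k))$ in which $\adam$ plays $a$, $\eve$ responds with $\delta$, and $\adam$ moves his token to $(p'',k+d')$. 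Since $\delta$ is a winning move for $\eve$ by assumption, $\eve$ wins $G_1$ from $((p',k+d),(p'',k+d'))$; letting $\adam$ extend his token's run to an accepting run on $w$ forces $\eve$'s run from $(p',k+d)$ on $w$ to be accepting too, giving $w \in \Lc(p',k+d)$.

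With this claim, the backward direction is immediate: a strategy that takes only good transitions takes only residual transitions, and Proposition~\ref{prop:residualresolver} then yields that it is winning in the letter game.

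For the forward direction, I would take a winning strategy $\sigma$ and a transition $\delta = (p,k)\xrightarrow{a,d}(p',k+d)$ that $\sigma$ prescribes at a reachable configuration $(p,k)$, and show that $\delta$ is good. The continuation of $\sigma$ from $(p,k)$ wins the letter game from $(p,k)$, so the sub-net starting at $(p,k)$ is history-deterministic; since winning the letter game is at least as strong as winning $G_1$ (as $\adam$ has strictly more freedom in the former), $((p,k),(p,k))$ lies in $\eve$'s winning region of $G_1$. To see that $\delta$ is a winning move, I would fix an arbitrary $\adam$-response $(p,k)\xrightarrow{a,d'}(p'',k+d')$ and have $\eve$ play from $((p',k+d),(p'',k+d'))$ according to the sub-resolver $\sigma'$ of $\sigma$ from $(p',k+d)$. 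Since $\sigma$ is winning, Proposition~\ref{prop:residualresolver} tells us $\delta$ is residual, hence $\Lc(p'',k+d') \subseteq a^{-1}\Lc(p,k) = \Lc(p',k+d)$; any word on which $\adam$'s token reaches an accepting state therefore lies in $\Lc(p',k+d)$, so $\sigma'$ produces an accepting run as well, ruling out every losing condition for $\eve$ in $G_1$.

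The step I expect to be the main obstacle is precisely this transfer, in the forward direction, of a letter-game resolver into an $\eve$-strategy for $G_1$ starting from an \emph{off-diagonal} position $((p',k+d),(p'',k+d'))$. Such a transfer would not work in general; what makes it go through is the residual property of $\delta$, which delivers the language inclusion $\Lc(p'',k+d') \subseteq \Lc(p',k+d)$ that sandwiches all reachable behaviour of $\adam$'s token inside the behaviour $\sigma'$ is designed to handle.
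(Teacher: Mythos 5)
Your proposal is correct and follows essentially the same route as the paper: the backward direction reduces to Proposition~\ref{prop:residualresolver} by showing that every $\gut$ transition is residual (via the same $G_1$ argument where $\adam$ witnesses $aw \in \Lc(p,k)$ with his token), and the forward direction rests on the observation that a winning letter-game strategy already wins $G_1$. The only difference is presentational: where the paper dispatches the forward direction with the one-liner that ``$G_1$ is a weaker game for $\eve$ than the letter game,'' you unpack it via residuality and the language inclusion $\Lc(p'',k+d') \subseteq \Lc(p',k+d)$, which is a valid (if slightly more roundabout) elaboration of the same idea.
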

\begin{proof} 
Note that any strategy of $\eve$ that is winning in the letter game takes only $\gut$ transitions, as $G_1$ is a weaker game for $\eve$ than the letter game. The other direction follows by observing that any $\gut$ transition is also residual. If $(p,k) \xrightarrow{a,d} (p',k')$ is $\gut$, then for any word $aw \in L(p,k)$, we must have $w\in L(p',k')$. If not, then $\adam$ can win $G_1$ by constructing an accepting run on $aw$ from $(p,k)$ which contradicts the definition of $\gut$ transitions.  Hence the proof follows from Proposition~\ref{prop:residualresolver}. 
\end{proof}

\begin{definition}\label{def:ssp}
Given a one-counter net $\Nc$, we say $\Nc$ satisfies \ssp ~if for each transition $\delta =(q,a,d,q')$, the set of $k \in \mathbb{N} $ such that $\delta$ is a $\gut$ transition at $(q,k)$ is semilinear. That is for each transition $\delta = (q,a,d,q') \in \Delta$, we have that the set $$\Sc_{\delta} = \{k : (q,k) \xrightarrow{a,d} (q',k')\text{ is a \gut \ transition at} \ (q,k) \}$$ is semilinear.
\end{definition}

Consider the following example which solidifies this intuition: 
\begin{figure}
    \centering
    \makebox[\textwidth][c]{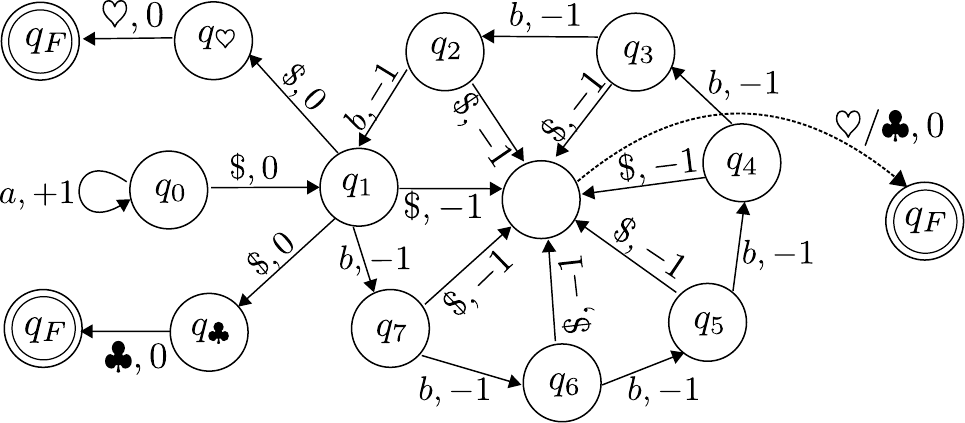}%
    \caption{The one-counter net $\Nc_7$ from Example~\ref{example:N7}~\label{fig:N7}}
\end{figure}
\begin{example}\label{example:N7}
Consider the net $\Nc_{7}$, as shown in Figure~\ref{fig:N7}, where all states labelled $q_F$ are accepting. This automaton is not  history-deterministic. However, if the counter value at $q_1$ is not a multiple of $7$, then $\eve$ can resolve the non-determinism from $q_1$.
Observe that the automaton accepts words of the form $a^n\$b^k\$\cdot(\heartsuit,\clubsuit)$ such that $k\leq n$. 
Consider the following play of $\adam$ in the letter game from $q_0$: For $7n$ steps he reads $a$, after which he reads a $\$$. So far, all transitions are deterministic. 
After that, assume he again reads, $7n$ many times, the letter $b$. This ensures that the transition ends at state $q_1$ with counter value 0. If he reads $\$$ here, this is the only position where $\eve$ has a choice. Note that she has to choose between $q_{\heartsuit}$ and $q_{\clubsuit}$ and since both the suffix  $\heartsuit$ and $\clubsuit$ are accepting, she loses no matter what she picks. However, if $\adam$ had read a number of `$b$'s was not a multiple of $7$, the play of an accepting word would end at $q_{\$}$ which is accepting. This serves to show two things: firstly, a non-example of history-determinism, and secondly, how the counter values affect the decisions of the player, which is in this case, $\adam$.
\end{example}

\begin{lemma}\label{lemma:DetBySSP}
If a history-deterministic one-counter net $\Nc = (Q, \Sigma,q_0,\Delta,F)$ satisfies the \ssp, then there is a language-equivalent deterministic OCA $\Dc$.
\end{lemma}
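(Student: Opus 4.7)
The plan is to exploit the eventually periodic nature of each $\Sc_\delta$ to encode, into the finite-state component of a deterministic OCA, enough information about $\Nc$'s counter value to always select a canonical $\gut$ transition. Since each $\Sc_\delta$ is a one-dimensional semilinear set, it is eventually periodic: there exist $N_\delta, P_\delta \in \Nb$ such that for all $k \geq N_\delta$, whether $k \in \Sc_\delta$ depends only on $k \bmod P_\delta$. Setting $N := \max_\delta N_\delta$ and $P := \lcm_\delta P_\delta$ yields uniform parameters. Fixing a total order on $\Delta$, I would define $f(q, a, k)$ to be the least transition of the form $(q, a, d, q')$ that is $\gut$ at $(q, k)$ (undefined if no such transition exists). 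By Lemma~\ref{lemma:guttransitions}, $f$ is a winning strategy for $\eve$ in the letter game, and by the choice of $N, P$, the value $f(q, a, k)$ depends only on $(q, a, k \bmod P)$ whenever $k \geq N$.

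Next, I would construct $\Dc$ with state space $Q \times (\{\mathrm{lo}_0, \ldots, \mathrm{lo}_{N-1}\} \cup \{\mathrm{hi}_0, \ldots, \mathrm{hi}_{P-1}\})$, where the counter of $\Dc$ is interpreted as $\max(k - N, 0)$ and $k$ denotes the corresponding $\Nc$-counter. The intended invariant is that a state $(q, \mathrm{lo}_j)$ with $\Dc$-counter $0$ encodes the $\Nc$-configuration $(q, j)$ for $j < N$, while a state $(q, \mathrm{hi}_r)$ with $\Dc$-counter $c$ encodes $(q, c + N)$ with $(c + N) \bmod P = r$. The initial state is $(q_0, \mathrm{lo}_0)$ with counter $0$, and accepting states are those whose $Q$-component lies in $F$. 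On each letter $a$, $\Dc$ looks up $\delta = (q, a, d, q')$ using $f$ and the information available in the current state (exact value in low mode, residue in high mode); the state updates according to $d$ and mode switches occur as needed, with the $\Dc$-counter staying at $0$ throughout low mode and tracking $k - N$ in high mode.

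Correctness reduces to checking determinism, the invariant, and language equality. Determinism and the invariant are immediate from the construction. For $L(\Dc) \subseteq L(\Nc)$, any accepting run of $\Dc$ on $w$ projects to a run of $\Nc$ on $w$ whose transitions are precisely those picked by $f$, ending in the same accepting state. Conversely, for $w \in L(\Nc)$, the unique run of $\Dc$ on $w$ corresponds to the play in which $\eve$ follows $f$, which by Lemma~\ref{lemma:guttransitions} is winning in the letter game on $\Nc$; this play therefore reaches an accepting configuration, so $\Dc$ accepts $w$.

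The main obstacle is the boundary handling at $k = N$: the $\Dc$-counter value $0$ occurs both throughout low mode and at the single high-mode configuration with $k = N$, so $\Dc$ must use a zero test to distinguish them and correctly dispatch transitions across the boundary (in particular, a decrement from $\mathrm{hi}_r$ at $\Dc$-counter $0$ should fall back to $\mathrm{lo}_{N-1}$, whereas a decrement from $\mathrm{hi}_r$ at $\Dc$-counter $c > 0$ should remain in high mode with residue $(r - 1) \bmod P$). This is the one place where the additional power of OCA over OCN, namely the zero test, is essential for the construction.
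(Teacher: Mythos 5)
Your proposal is correct and follows essentially the same route as the paper: exploit the eventual periodicity of the sets $\Sc_\delta$ to store small counter values and residues modulo a common period in the finite control, rescale the counter, use zero tests at the low/high boundary, and invoke Lemma~\ref{lemma:guttransitions} to show that restricting to $\gut$ transitions preserves the language in both directions. The only cosmetic differences are that the paper first builds a nondeterministic OCA whose runs are in bijection with the $\gut$-runs of $\Nc$ and then prunes it to a deterministic one (where you fix a canonical $\gut$ transition up front via a total order on $\Delta$), and that the paper's counter counts elapsed periods rather than $k-N$ directly.
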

\begin{proof}[Sketch]
We assume the history-deterministic one-counter net $\Nc$ is such that it satisfies \ssp. 
We first construct a non-deterministic one-counter automata $\Bc$, which can be determinised easily by removing a minimal set of transitions to get rid of non-determinism while still preserving the language. The non-deterministic one-counter automata $\Bc$ would essentially be designed so that the transitions in $\Bc$ correspond to the $\gut$ transitions in $\Nc$, from any configuration. The eventual periodicity of the sets $S_{\delta}$ allows us to express this as a one-counter automaton, rather than as a labelled transition system with countably many states.

Intuitively, the automaton $\Bc$ is constructed such that the state space of the automaton stores in its memory the period and the initial block of the semi-linear sets. The idea is that this automaton's runs would be in bijection with those runs that take only $\gut$ transitions in the OCN $\Nc$. We know that such a run exists in $\Nc$ by Lemma~\ref{lemma:guttransitions}, as $\Nc$ is history-deterministic. However, the counter values in $\Bc$ are `scaled down' to only remember how many periods have passed, while counter value 0 indicates that the counter value in the original run would have been at most $I$. The exact value of the counter value in a run of $\Nc$ can be inferred as a function of the state space.    
\end{proof}
Having shown that if a history-deterministic one-counter net satisfies \ssp, then we have an equivalent DOCA, we proceed to show that every one-counter net satisfies \ssp.
\begin{lemma}\label{lemma:netSSP}
Every one-counter net $\Nc$ satisfies \ssp. 
\end{lemma}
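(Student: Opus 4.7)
The plan is to exhibit $\Sc_{\delta}$ as a finite intersection of slices of the semilinear winning regions furnished by Theorem~\ref{theorem:outsource}, applied via the $G_1$-to-simulation reduction of Lemma~\ref{lemma:G1Simulation}.

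Fix the transition $\delta = (q,a,d,q')$. Unpacking the definition, $k \in \Sc_{\delta}$ precisely when (i) $k + d \geq 0$, so $\delta$ is enabled at $(q,k)$; (ii) the position $((q,k),(q,k))$ lies in $\eve$'s winning region in $G_1$; and (iii) for every transition $\tau = (q,a,d_\tau, q_\tau) \in \Delta$ with $k + d_\tau \geq 0$, the position $((q', k+d), (q_\tau, k+d_\tau))$ also lies in $\eve$'s winning region. Because there are only finitely many such $\tau$, condition (iii) is a finite conjunction.

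The first step would be to observe that the reduction of Lemma~\ref{lemma:G1Simulation} works uniformly for arbitrary initial positions: for every pair $(p_1,p_2) \in Q \times Q$ one can designate canonical start states $\iota^{\Mc'}_{p_1}$ in $\Mc'$ and $\iota^{\Mc}_{p_2}$ in $\Mc$ (depending only on states, with the counter value retained as a parameter) such that
\[
\eve \text{ wins } G_1 \text{ on } \Nc \text{ from } ((p_1, k_1),(p_2, k_2)) \iff \eve \text{ wins } \simg{(\Mc, (\iota^{\Mc}_{p_2}, k_2))}{(\Mc', (\iota^{\Mc'}_{p_1}, k_1))}.
\]
Consequently, Theorem~\ref{theorem:outsource} gives, for each fixed pair $(p_1, p_2)$, that the set $W_{p_1, p_2} := \{(k_1,k_2) \in \mathbb{N}^2 : \eve \text{ wins } G_1 \text{ from } ((p_1,k_1),(p_2,k_2))\}$ is semilinear. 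Each of the conditions (ii) and (iii) is then the preimage of one such $W_{p_1, p_2}$ under an affine map $k \mapsto (k+c_1, k+c_2)$ with $c_1, c_2 \in \mathbb{Z}$ determined by $d$ and $d_\tau$; these preimages are Presburger-definable and therefore semilinear, as are the linear enabledness guards in (i) and (iii). Intersecting finitely many semilinear sets preserves semilinearity, giving that $\Sc_\delta$ is semilinear.

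The main obstacle I anticipate is verifying the uniformity claim used above: in the construction of Lemma~\ref{lemma:G1Simulation}, the state space of $\Mc$ augments that of $\Nc$ with a component from $\Sigma \cup \{\star\}$ to impose a one-step lag on $\adam$, while $\Mc'$ essentially mirrors $\Nc$. Starting the simulation game from an arbitrary pair of $\Nc$-configurations $((p_1,k_1),(p_2,k_2))$ rather than the designated $((q_0,0),(q_0,0))$ is a matter of setting the lag-component in $\Mc$ to $\star$ (``no previous letter'') and leaving the counter values untouched. This is essentially immediate from the construction, but has to be spelled out, since all uses of Lemma~\ref{lemma:G1Simulation} elsewhere only required correctness at the designated initial position.
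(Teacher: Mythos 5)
Your proposal is correct in substance and rests on the same two pillars as the paper's proof --- the $G_1$-to-simulation reduction of Lemma~\ref{lemma:G1Simulation} and the semilinearity statement of Theorem~\ref{theorem:outsource} --- but it organises the argument differently. The paper does not quantify over $\adam$'s responses at all: it builds modified nets $\Mc_{\gamma}$ and $\Mc'_{\gamma}$ in which a fresh initial state $s'$ of $\Mc'_{\gamma}$ has a \emph{unique} outgoing $a$-transition mimicking $\gamma$, so that the single simulation game $\simg{\Mc_{\gamma},(s,k)}{\Mc'_{\gamma},(s',k)}$ directly encodes ``$G_1$ from $((p,k),(p,k))$ with $\eve$ forced to play $\gamma$ on the first $a$''; semilinearity of $\Sc_{\gamma}$ then falls out of Theorem~\ref{theorem:outsource} restricted to the diagonal $k=k'$. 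You instead keep the reduction unmodified but generalise it to arbitrary initial positions (which, as you note, requires adding a per-state ``no previous letter'' initialisation to the lag component of $\Mc$ --- a routine but necessary extension), and then express $\Sc_{\delta}$ as a finite Boolean combination of affine preimages of the semilinear winning regions $W_{p_1,p_2}$. Your route is more modular and reuses closure properties of semilinear sets; the paper's is more self-contained and avoids having to re-verify the reduction at non-designated start positions. One small slip in your unpacking: condition (iii) should also record the \emph{immediate} acceptance check of that round, namely that for each enabled $\tau$ with $q_{\tau}\in F$ one must have $q'\in F$ (in $G_1$, $\eve$ loses at that very step if $\adam$'s state is accepting and hers is not, before the position $((q',k+d),(q_{\tau},k+d_{\tau}))$ is ever reached as the start of a fresh round, so membership of that position in the winning region does not subsume the check). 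Since this extra conjunct depends on $k$ only through the enabledness guard $k+d_{\tau}\geq 0$, it is itself semilinear and the conclusion is unaffected, but the characterisation of $\Sc_{\delta}$ is incomplete without it.
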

The proof of the above lemma is similar to the proof of Lemma~\ref{lemma:G1Simulation}. As an easy corollary of the above two lemmas, we get the following theorem.

\begin{theorem}\label{theorem:OCN-DOCA}
Every history-deterministic OCN can be determinised to produce an equivalent deterministic OCA. 
\end{theorem}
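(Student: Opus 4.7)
The plan is to derive the theorem by simply chaining Lemma~\ref{lemma:DetBySSP} and Lemma~\ref{lemma:netSSP}. The former guarantees that any history-deterministic OCN possessing the \ssp\ can be converted to a language-equivalent deterministic OCA, while the latter asserts that \emph{every} OCN enjoys the \ssp. Composing these two results, the conclusion is immediate, and no further argument is required at the level of the theorem itself; the real content lies in the two ingredient lemmas, whose proof strategies I outline below.

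For Lemma~\ref{lemma:netSSP}, the strategy is to reduce the question ``for which $k$ is the transition $\delta = (q,a,d,q')$ a $\gut$ transition at $(q,k)$?'' to a simulation question between two OCNs, in the same spirit as Lemma~\ref{lemma:G1Simulation}. By definition, $\delta$ is $\gut$ at $(q,k)$ exactly when $\eve$ wins $G_1$ on $\Nc$ from $((q,k),(q,k))$ \emph{and} the move where $\eve$ plays $\delta$ in response to $a$ is a winning move. One builds small OCNs $\Mc_\delta$ and $\Mc'_\delta$, obtained by modifying $\Nc$ so as to force $\adam$'s first letter to be $a$ and $\eve$'s first transition to be $\delta$, with the rest of the game unfolding exactly as in the reduction from $G_1$ to simulation. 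The set $\Sc_\delta$ then coincides with the set of $k$ for which $\eve$ wins $\simg{(\Mc_\delta,(q,k))}{(\Mc'_\delta,(q,k))}$, and Theorem~\ref{theorem:outsource} tells us that this set is semilinear (and even computable in $\EXPSPACE$).

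For Lemma~\ref{lemma:DetBySSP}, using the semilinearity just established, fix a common period $P$ and a common initial threshold $I$ such that every $\Sc_\delta$ is expressible as a union of an initial finite part contained in $[0,I]$ together with finitely many arithmetic progressions of period $P$. Build an intermediate nondeterministic OCA $\Bc$ whose states are of the form $(q,r,b)$, where $q$ is a state of $\Nc$, $r \in \{0,\dots,P-1\}$ is the residue of the ``real'' counter value modulo $P$, and $b \in \{0,\dots,I\}\cup\{\top\}$ is a flag that either tracks the current counter value while it is still below $I$ or records $\top$ once the counter has passed the threshold. The OCA counter of $\Bc$ stores $\max(0,\lfloor (k-I)/P\rfloor)$, and zero-tests are used precisely to detect the transition from the initial block to the periodic regime (this is why the target must be an OCA and not an OCN). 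The transitions of $\Bc$ are restricted to those that correspond to $\gut$ transitions in $\Nc$, which can be checked locally using $(q,r,b)$ since membership in each $\Sc_\delta$ is determined by $k$ modulo $P$ and by whether $k\le I$. By Lemma~\ref{lemma:guttransitions}, any positional resolver for $\Nc$ gives rise to a language-preserving deterministic sub-OCA $\Dc$ of $\Bc$ obtained by retaining, for each state of $\Bc$ and each letter, a single one of the available $\gut$ transitions.

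The main obstacle will be the construction of $\Bc$ in Lemma~\ref{lemma:DetBySSP}: one must argue carefully that (i)~the finite state space indeed suffices to track the information needed to recognise $\gut$ transitions, (ii)~the scaled counter is correctly synchronised with the state flag, with zero-tests used only when entering or leaving the initial block $[0,I]$, and (iii)~the determinisation step does not drop accepted words, which relies on Lemma~\ref{lemma:guttransitions} guaranteeing that from any reachable configuration in the winning region there is always at least one $\gut$ successor for each letter that extends to an accepting word. Once these points are settled, combining the two lemmas yields Theorem~\ref{theorem:OCN-DOCA}.
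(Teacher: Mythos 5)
Your proposal is correct and follows exactly the paper's route: the theorem is stated as an immediate corollary of Lemma~\ref{lemma:DetBySSP} and Lemma~\ref{lemma:netSSP}, and your outlines of those two ingredients (reduction of the $\gut$-transition question to OCN simulation via Theorem~\ref{theorem:outsource}, and the scaled-counter OCA with zero-tests marking the boundary of the initial block, determinised by deleting transitions using Lemma~\ref{lemma:guttransitions}) match the constructions in the paper's appendix. The only slight imprecision is that in the simulation reduction the paper does not force $\adam$'s first letter to be $a$ but only constrains $\eve$'s response when $a$ is played, so that both conjuncts of the $\gut$ definition (being in the winning region of $G_1$ \emph{and} $\delta$ being a winning move on $a$) are captured.
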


An easy analysis of our proof combined with the results on the representation of simulation preorder (Lemma~28,~\cite{HLMT16}) shows a doubly exponential upper bound on the size of the equivalent deterministic OCA constructed from the proof of the theorem above. However, we conjecture that there exists an (at most) exponentially sized language-equivalent determistic OCA for every history-deterministic OCN.

\begin{remark}
On the topic of expressivity of history-determinism, we conclude this subsection with  a remark that history-deterministic OCNs are strictly less expressive than non-deterministic OCNs. This can be demonstrated with the following language
$$\Lc = \{a^i\$b^j\$b^k\mid j\leq i\text{ or }k\leq i\}.$$
It is routine to verify that such a language is not accepted by any history-deterministic OCN, but this language can be accepted by a non-deterministic OCN. Note that history-determinism itself is not the limiting factor in accepting this language, as this language is accepted by a history-deterministic pushdown automaton~\cite{GJLZ21}. 

\end{remark}

\subsection{Complexity of comparing languages of history-deterministic OCNs}

The complexity of comparisons between languages of non-deterministic OCNs are undecidable~\cite{HMT13}, and even the restricted question of universality, is Ackermann-complete~\cite{HT14}. Whereas for deterministic one-counter automata, although equivalence and therefore universality is in $\NL$~\cite{BG11,BGJ13}, inclusion is undecidable~\cite{Val73}. In this section, we show that for history-deterministic nets, these problems are no longer undecidable and have a significantly lower complexity when compared to non-deterministic nets. 

Note that although we have a procedure to determinise our automaton earlier in this section, this procedure does not help us answer these questions. This is because our determinisation procedure results in a deterministic OCA rather than an deterministic OCN. For deterministic OCNs, all these problems are known to be $\NL$-complete~\cite{HT14}, but for deterministic OCA, the problem of inclusion is undecidable~\cite{Val73}. Even though equality and universality for a deterministic OCA is $\NL$ complete, the resulting deterministic OCA we get from determinisation of history-deterministic OCNs could be much larger than our input net, leading to much larger complexity.

Nevertheless, we show that checking language inclusion and hence checking language equivalence between two history-deterministic one-counter nets is in $\PSPACE$.
This is done by showing a reduction to the problem of deciding history-determinism. Recall that as this problem is in $\PSPACE$ from Lemma~\ref{lemma:UB} and Theorem~\ref{thm:PSPACE}, we are able to show membership in $\PSPACE$ for language equivalence and inclusion between two history-deterministic one-counter nets. Moreover, using results of Kucera~\cite{Kuc00}, we get decidability in $\P$ for language universality.

\begin{lemma}\label{lemma:HDinclusion}
Deciding language inclusion and language equivalence between two history-deterministic one-counter nets is in $\PSPACE$. 
\end{lemma}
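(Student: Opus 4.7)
The plan is to reduce the inclusion question $\Lc(\Nc_1)\subseteq\Lc(\Nc_2)$ to a single history-determinism check on an auxiliary one-counter net $\Mc$ of polynomial size, and then invoke Theorem~\ref{thm:PSPACE}.

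\emph{Construction of the gadget.} Introduce two fresh letters $\$$ and $\#$ not in $\Sigma$. First, augment $\Nc_2$ to a net $\Nc_2'$ by adding a single new accepting state $q_\#$ with a deterministic transition $q_0^{\Nc_2}\xrightarrow{\#,0}q_\#$, so that $\Lc(\Nc_2')=\Lc(\Nc_2)\cup\{\#\}$ and $\Nc_2'$ is still history-deterministic (the new transition is deterministic and leads to a dead accepting state, while from every other state the letter $\#$ is not a prefix of any accepting word, so getting stuck there is harmless). Then form $\Mc$ as the disjoint union of $\Nc_1$ and $\Nc_2'$ together with a fresh non-accepting initial state $q_0^{\Mc}$ from which there are exactly two $\$$-transitions, $q_0^{\Mc}\xrightarrow{\$,0}q_0^{\Nc_1}$ and $q_0^{\Mc}\xrightarrow{\$,0}q_0^{\Nc_2'}$. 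Then $\Lc(\Mc)=\$\bigl(\Lc(\Nc_1)\cup\Lc(\Nc_2)\cup\{\#\}\bigr)$ and $\Mc$ has size polynomial in $|\Nc_1|+|\Nc_2|$.

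\emph{Correctness.} I claim $\Mc$ is history-deterministic if and only if $\Lc(\Nc_1)\subseteq\Lc(\Nc_2)$. For the easy direction, if the inclusion holds then $\Lc(\Mc)=\$\Lc(\Nc_2')$, so Eve wins the letter game on $\Mc$ by committing on the first letter $\$$ to the $\Nc_2'$-branch and then playing $\Nc_2'$'s HD resolver. Conversely, suppose $\Mc$ has a positional resolver $\sigma$ as provided by Proposition~\ref{prop:positional}. From $(q_0^{\Mc},0)$ on $\$$ the resolver picks exactly one transition, committing Eve to either the $\Nc_1$-side or the $\Nc_2'$-side. Committing to $\Nc_1$ is impossible: $\$\#\in\Lc(\Mc)$ via the $\Nc_2'$-branch, yet from $q_0^{\Nc_1}$ there is no $\#$-transition, so Eve would be stuck on the extendable prefix $\$$ and lose. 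Hence $\sigma$ commits to $\Nc_2'$; since $\sigma$ is winning, every $\$w\in\Lc(\Mc)$ must be accepted along the $\Nc_2'$-branch, so each $w\in\Lc(\Nc_1)$ lies in $\Lc(\Nc_2')$, and because $\#$ does not occur in any word of $\Lc(\Nc_1)$ we obtain $\Lc(\Nc_1)\subseteq\Lc(\Nc_2)$.

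\emph{Conclusion and main obstacle.} Inclusion thus reduces in polynomial time to checking history-determinism of $\Mc$, which is in $\PSPACE$ by Theorem~\ref{thm:PSPACE}, and language equivalence follows by applying the reduction in both directions. The main obstacle lies in engineering the distinguishing gadget: one needs a polynomial-size device that simultaneously forces Eve's positional resolver on the first letter to commit to the $\Nc_2'$-branch rather than the $\Nc_1$-branch and yet leaves $\Lc(\Nc_1)$ untouched, so that Eve's winning play in the $\Nc_2'$-branch really does witness $\Lc(\Nc_1)\subseteq\Lc(\Nc_2)$. The fresh letter $\#$ inserted only at $q_0^{\Nc_2}$ accomplishes both at once.
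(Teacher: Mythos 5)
Your proposal is correct and follows essentially the same route as the paper: both reductions augment the second net with a fresh marker letter accepted only from its initial state (your $\#$, the paper's $\Cat$), take a disjoint union under a new initial state with a nondeterministic branching letter (your $\$$, the paper's $\heartsuit$), and argue that any resolver is forced onto the $\Nc_2$-branch, so history-determinism of the combined net is equivalent to the inclusion. The appeal to Theorem~\ref{thm:PSPACE} and the two-way application for equivalence also match the paper.
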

We can show that the problem of checking language inclusion between two history-deterministic OCNs reduces to checking if a larger OCN (linear in the sum of the size of the two OCNs) is history-deterministic.
Since language equivalence is essentially checking language inclusion both ways, we have the above results. 

\begin{lemma}\label{lemma:universality}
Deciding language universality for a given history-deterministic one-counter net is in $\P$.
\end{lemma}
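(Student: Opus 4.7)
The plan is to reduce language universality of a history-deterministic OCN $\Nc$ to a simulation problem between a trivial finite-state system and $\Nc$, which by Kucera's polynomial-time algorithm~\cite{Kuc00} can be decided in $\P$.

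First, I construct the one-state ``universal'' OCN $\Mc = (\{q\}, \Sigma, \Delta_q, q, \{q\})$ with self-loops $\Delta_q = \{(q,a,0,q) : a \in \Sigma\}$. Clearly $\Lc(\Mc) = \Sigma^*$, and since the counter in $\Mc$ is inert, $\Mc$ is effectively a finite-state system.

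Next, I prove the key equivalence: $\Lc(\Nc) = \Sigma^*$ if and only if $\Mc \simulates \Nc$. For the forward direction, given a resolver $\sigma$ for the history-deterministic OCN $\Nc$, I use $\sigma$ as Eve's strategy in the simulation game; at every step Adam's state is $q$ (accepting), and since every prefix of Adam's play lies in $\Sigma^* = \Lc(\Nc)$, the resolver builds an accepting run, so Eve's state is also accepting and she can always move. For the backward direction, any word corresponds to a valid play of Adam in $\Mc$ whose state is always accepting, so Eve's winning strategy must keep her in an accepting state after every letter; hence every word is accepted by $\Nc$.

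Finally, the problem $\Mc \simulates \Nc$ is an instance of simulation of a finite-state system by an OCN, which by Kucera's result~\cite{Kuc00} can be decided in polynomial time. The main obstacle is verifying the history-determinism-vs-simulation equivalence under the paper's specific simulation-game semantics, where the winning condition compares accepting states of Adam and Eve rather than accepting runs; the crux is that since $\Mc$'s only state is accepting, this condition forces Eve to remain in an accepting state after every letter, which precisely coincides with the behavior of a resolver on the universal language $\Sigma^*$.
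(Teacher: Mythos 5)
Your proposal is correct and follows essentially the same route as the paper: both reduce universality to checking that $\Nc$ simulates a one-state automaton accepting $\Sigma^*$ and invoke Kucera's polynomial-time algorithm, with history-determinism used in exactly the same place (turning the resolver into a simulation strategy). The only cosmetic difference is that the paper first proves the general equivalence ``$\Lc(\Mc)\subseteq\Lc(\Nc)$ iff $\Nc$ simulates $\Mc$'' for history-deterministic nets and then specialises, whereas you argue the special case directly.
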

The problem of universality reduces to checking if the input net $\Mc$ simulates a finiste state automata. This problem was shown to be $\P$ by  Kucera~(Lemma 2, \cite{Kuc00}), showing that universality is in $\P$.

We therefore have the following theorem.
\begin{theorem}
For nets $\Hc$ and $\Hc'$ that are history-deterministic, the problem of checking if $\Lc(\Hc)\subseteq \Lc(\Hc')$ as well as checking if $\Lc(\Hc) = \Lc(\Hc')$  can be done in $\PSPACE$. If $\Hc$ is instead a deterministic finite-state automaton, this problem can be solved in $\P$.
\end{theorem}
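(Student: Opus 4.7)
The proof plan is essentially an assembly of the two preceding lemmas together with a standard observation for the finite-state case. The first part of the theorem is a direct restatement of \Cref{lemma:HDinclusion}: checking $\Lc(\Hc) \subseteq \Lc(\Hc')$ for two history-deterministic OCNs is in $\PSPACE$ by that lemma, and for equivalence one simply applies it in both directions, which preserves $\PSPACE$ membership. So for this part, no new content is needed beyond invoking \Cref{lemma:HDinclusion}.

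For the second part, suppose $\Hc$ is a deterministic finite-state automaton and $\Hc'$ is a history-deterministic OCN. The plan for the inclusion $\Lc(\Hc) \subseteq \Lc(\Hc')$ is to reuse the same reduction that underlies \Cref{lemma:HDinclusion}, but to exploit the fact that one side is now finite-state: history-determinism of $\Hc'$ lets us convert the language-containment question into the simulation question $\Hc \simulates \Hc'$. Since $\Hc$ is a finite-state automaton, this is a simulation between a finite-state system and an OCN, which falls into the regime of Kucera (Lemma~2, \cite{Kuc00}), giving a $\P$ bound (and this is the same result invoked in \Cref{lemma:universality}).

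For the reverse inclusion $\Lc(\Hc') \subseteq \Lc(\Hc)$, we do not need history-determinism at all: since $\Hc$ is deterministic (and we may assume complete by adding a sink), its complement $\overline{\Hc}$ is obtained by flipping final and non-final states. The standard product of the OCN $\Hc'$ with the DFA $\overline{\Hc}$ yields an OCN whose language is $\Lc(\Hc') \cap \overline{\Lc(\Hc)}$, and the inclusion holds iff this product has empty language. Emptiness (i.e.\ reachability of an accepting configuration) in an OCN is in $\NL \subseteq \P$, so this direction is also in $\P$. Combining the two directions yields equivalence in $\P$.

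The main conceptual step, and the only one that uses history-determinism, is the reduction of language inclusion to a simulation game: this is exactly the content of \Cref{lemma:HDinclusion}, and the only genuine obstacle is to check that the same reduction remains polynomial when one of the two nets is a DFA, so that Kucera's polynomial simulation algorithm applies rather than the $\PSPACE$ algorithm of \Cref{theorem:outsource}. Once this is in place, the $\P$ bound for the DFA case, and the $\PSPACE$ bound for the general HD-OCN case, follow immediately.
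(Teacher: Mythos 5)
Your proposal is correct and follows essentially the same assembly as the paper: the $\PSPACE$ part is exactly Lemma~\ref{lemma:HDinclusion} applied in both directions, and the $\P$ part rests on the same key fact the paper uses for Lemma~\ref{lemma:universality}, namely that language inclusion into a history-deterministic OCN coincides with simulation (Eve can ignore Adam's token and play her letter-game strategy), so that with a finite-state system on one side Kucera's polynomial algorithm applies. Two small remarks. First, the simulation reduction you invoke does not ``underlie'' Lemma~\ref{lemma:HDinclusion} --- that lemma reduces inclusion to \emph{checking history-determinism} of a combined net; the inclusion-iff-simulation equivalence is the content of the appendix proof of Lemma~\ref{lemma:universality}, which is the lemma you should be citing for this step. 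Second, for the reverse inclusion $\Lc(\Hc')\subseteq\Lc(\Hc)$ in the DFA case you genuinely diverge from the paper: the paper's route would note that a DFA is trivially history-deterministic, so this inclusion is again equivalent to a simulation ($\Hc$ simulating $\Hc'$), also in $\P$ by Kucera; you instead complement the (completed) DFA, take the product with the OCN, and test emptiness in $\NL$. Your route is equally valid and arguably more self-contained, since it needs neither history-determinism nor Kucera's result for that direction.
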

We summarise known results and complexity of relevant results for comparison with other automata models in Table~\ref{table:table1}.
\begin{table}[ht]
    \centering
    \begin{tabular}{ | m{1.7cm} || m{2.8cm}| m{2.8cm} | m{2.8cm} | } 
  \hline
   & $\Lc\subseteq \Lc'$ & $\Lc= \Lc'$ & $\Lc = \Sigma^*$ \\ 
  \hline 
  \hline
  DOCN & $\NL$-complete~\cite{HT14}& $\NL$-complete~\cite{HT14} & $\NL$-complete~\cite{HT14}\\ 
  \hline
  HOCN & In $\PSPACE$ & In $\PSPACE$ & In $\P$\\ 
  \hline
  OCN & Undecidable~\cite{Val73}& Undecidable~\cite{HMT13} & Ackermann-complete~\cite{HT14}\\
  \hline
  DOCA & Undecidable~\cite{Val73}& $
  \NL$-complete~\cite{BG11}& $\NL$-complete~\cite{BG11}\\
  \hline
\end{tabular}
    \caption{Complexities for the problems of deciding language inclusion, equivalence and universality over deterministic OCN, history-deterministic OCN, non-deterministic OCN and deterministic OCA.~\label{table:table1}}
\end{table}

\section{Extensions and Variations of OCN}\label{sec:otherResults}
We revisit the question of deciding history-determinism in this section for one-counter nets and its variants. In the first subsection, we tackle the question of how the complexity changes if the encoding of these nets are given in binary. 
We show that as expected, this increases the complexity of the problem from $\PSPACE$-complete to $\EXPSPACE$-complete. 
We then answer affirmatively to the question of whether adding zero-tests add too much power to one-counter nets by showing that the problem of deciding history-determinism becomes undecidable. 
\subsection{Succinct Encoding of Counters} 
If the input nets are encoded succinctly, we show that the problem of deciding history-determinism becomes $\EXPSPACE$-complete. By a succinct representation here, we mean that whenever we allow for an increment and a decrement in our net, these values are encoded as in binary in the input. In this representation, we wish to see if deciding history-determinism for such nets is harder. Unsuprisingly, we can show that this problem takes $\EXPSPACE$ when the nets are encoded in binary, which we remark in the following proposition.
\begin{proposition}\label{lemma:succUB}
Given a net $\Nc$ is such that transitions allow for binary encoding of the value, then deciding if $\Nc$ is history-deterministic is in $\EXPSPACE$.
\end{proposition}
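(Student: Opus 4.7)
The plan is to lift the $\PSPACE$ pipeline of Theorem~\ref{thm:PSPACE} to the succinct setting, paying one exponential along the way. First, observe that the reductions from deciding history-determinism to the $1$-token game $G_1$ (Lemma~\ref{lemma:G1HD}) and from $G_1$ to a simulation game (Lemma~\ref{lemma:G1Simulation}) never inspect the magnitudes of the counter updates; they are symbolic manipulations of the state set and the transition relation. Consequently, applied to a succinctly encoded $\Nc$, they yield one-counter nets $\Mc$ and $\Mc'$ whose counter updates are still encoded in binary and whose sizes remain polynomial in $|\Nc|$, such that $\Nc$ is history-deterministic if and only if $\Mc \simulates \Mc'$.

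Next, translate $\Mc$ and $\Mc'$ into language-equivalent unary-encoded one-counter nets $\Mc_1$ and $\Mc_2$ via a standard Valiant-style unfolding: every transition $q \xrightarrow{a,k} q'$ with $|k|$ encoded in binary is replaced by a chain of $|k|$ fresh intermediate states connected by deterministic $\pm 1$-transitions, with the letter $a$ read on the first edge. Because each chain is entirely deterministic from both players' perspective, the simulation game on $(\Mc_1,\Mc_2)$ has the same winner as on $(\Mc,\Mc')$: neither player gains any new choices inside a chain, and each original binary update is merely spread across several rounds of the game. The sizes of $\Mc_1$ and $\Mc_2$ are at most exponential in $|\Nc|$.

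Finally, apply Theorem~\ref{theorem:outsource} to $\Mc_1$ and $\Mc_2$; since that algorithm runs in $\PSPACE$ in $|\Mc_1| + |\Mc_2|$, which is at most exponential in $|\Nc|$, the overall procedure runs in $\EXPSPACE$ in $|\Nc|$, as required.

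The main obstacle I anticipate is making the binary-to-unary unfolding rigorous while staying within the realtime-LTS convention favoured by the paper: the unfolding would naturally introduce $\epsilon$-transitions along the intermediate chains. The cleanest workaround is either to extend the simulation-game framework to $\epsilon$-transitions (which does not affect the $\PSPACE$ bound of Theorem~\ref{theorem:outsource}), or to treat each chain as a single macro-step of the simulation game whose semantics tracks the intermediate counter values as part of the game's position; either way, the winner is unchanged and the size bounds are preserved, so the $\EXPSPACE$ conclusion goes through.
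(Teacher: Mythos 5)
Your proof is correct and follows essentially the same route as the paper: pay one exponential to unfold the binary counter updates into unary ones, then invoke the $\PSPACE$ machinery of Lemma~\ref{lemma:UB}. The only (minor) difference is that the paper performs the unary unfolding directly on the input net $\Nc$ (asserting that it preserves history-determinism) and then runs the whole pipeline, whereas you first apply the reductions of Lemma~\ref{lemma:G1HD} and Lemma~\ref{lemma:G1Simulation} symbolically and unfold only the resulting simulation instance --- a reordering that, if anything, makes the correctness of the unfolding easier to argue, since one only needs to preserve the winner of a simulation game rather than of the letter game.
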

This result follows from the previous proof of $\PSPACE$ upper bound from Lemma~\ref{lemma:UB} of deciding history-determinism for one-counter nets, where counter values are in unary.
Any net with binary encoding can be converted with only an exponential blow-up into another language equivalent net with unary encoding, preserving history-determinism.
This naturally gives us an $\EXPSPACE$ upper bound.

However, much more work is needed to show a matching lower bound, which we do by giving a reduction from reachability games on succinct one-counter nets (SOCN). Intuitively, these games are played on the configuration graphs of a one-letter OCN, where the states of the OCN are partitioned among two players, which we denote by $\land$ and $\lor$. The goal of the $\lor$ is to be able to take the play to a designated winning state with value $0$.
This problem was shown to be $\EXPSPACE$-complete by  Hunter~\cite{Hun15} and later, several of its variants were also shown to have the same complexity~\cite{JOS18}. This therefore gives us our $\EXPSPACE$-completeness for deciding history-determinism of one-counter nets. 
The name succinct comes from the encoding of the net in the input.

\begin{lemma}~\label{lemma:succinctHard}
Given an OCN $\Nc$, where the numbers in the transitions are represented in binary, deciding if $\Nc$ is history-deterministic is $\EXPSPACE$-hard.
\end{lemma}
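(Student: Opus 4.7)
The plan is to reduce from the $\EXPSPACE$-complete reachability game problem on succinct one-counter nets (SOCN) of Hunter~\cite{Hun15}. An instance is a one-letter OCN whose vertices are partitioned between two players $\land$ and $\lor$, equipped with a distinguished target vertex $v_{\text{win}}$; the player $\lor$ wins exactly when the play reaches $v_{\text{win}}$ with counter value $0$ in finitely many steps, otherwise $\land$ wins. Given such a game $G$, I will construct in polynomial time a binary-encoded OCN $\Nc$ that is history-deterministic if and only if $\lor$ wins $G$ from its initial configuration.

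The arena of $G$ is embedded into $\Nc$ by taking one state $q_v$ per game vertex and replicating each edge as an OCN transition, carrying over the binary-encoded counter update. The alphabet is chosen to reflect whose turn it is: at a $\land$-vertex $v$, each outgoing edge is labelled with its own fresh letter $a_e$, so that $\adam$'s letter pick in the letter game is forced to correspond to a choice of $\land$'s move; at a $\lor$-vertex, all outgoing edges share a single letter $b$, so $\eve$ resolves the non-determinism just as a $\lor$-strategy would. Illegal letter choices by $\adam$ (for instance, playing $b$ at a $\land$-vertex, or playing $a_e$ at a $\lor$-vertex) are deflected via gadgets into a deterministic safe branch that $\eve$ can always accept, ensuring $\adam$ has no incentive to deviate from a legitimate $G$-play.

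The principal obstacle is matching winning conditions across the two formalisms: the SOCN game declares $\land$ the winner on infinite plays, whereas the letter game declares $\eve$ the winner on infinite plays; moreover, we must emulate the zero-test at $v_{\text{win}}$, which OCNs cannot perform directly. To bridge this, $\Nc$ offers $\adam$ a termination letter $\dagger$ he may play at any point; from $q_{v_{\text{win}}}$ the letter $\dagger$ leads into a short deterministic accepting sub-gadget, while from every other state $q_v$ it leads into a sub-gadget whose language is empty, so that $\adam$ only benefits from playing $\dagger$ once $\eve$ is actually at the target. The zero-test at $v_{\text{win}}$ is then enforced by a post-$\dagger$ suffix of letters that $\eve$ can parse into an accepting state only when her counter is zero, realised by strictly decrementing transitions that siphon any residual positive counter into a rejecting path.

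With these gadgets in place, a positional $\lor$-winning strategy in $G$ directly yields a resolver for $\eve$ in the letter game of $\Nc$, and conversely a winning $\adam$-strategy in the letter game can be projected onto the game arena to recover a $\land$-winning strategy, giving the desired equivalence and hence $\EXPSPACE$-hardness. The main technical care lies in the design of the illegal-move and $\dagger$-termination gadgets: they must jointly guarantee that $\adam$ cannot win "spuriously" by exploiting the gadgetry in lieu of a genuine $\land$-winning strategy, nor stall indefinitely to trivially hand $\eve$ the letter game when she has no faithful simulation of a $\lor$-strategy; I expect this to be resolved by keeping the gadgets symmetric and counter-neutral, so that counter progress is attributable solely to the simulated game segment.
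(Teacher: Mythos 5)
Your reduction starts from the same source problem as the paper (SOCN reachability games), but you have assigned the roles the wrong way round, and this breaks the argument. You let $\adam$ play $\land$ (fresh letters $a_e$ at $\land$-vertices) and $\eve$ play $\lor$ (a shared letter $b$ at $\lor$-vertices), aiming for ``$\Nc$ is history-deterministic iff $\lor$ wins''. The two directions you sketch are in fact the same implication stated twice: ``a $\lor$-winning strategy yields a resolver'' and ``an $\adam$-winning strategy projects to a $\land$-winning strategy'' are contrapositives of each other once both games are determined. The genuinely needed direction --- $\land$ wins $G$ implies $\adam$ wins the letter game --- has no proof, and it fails for your construction: if $\adam$ plays a $\land$-winning strategy, then \emph{no} run of the resulting word (under any sequence of $\lor$-choices) reaches the target with counter $0$, so playing $\dagger$ never produces a word in $\Lc(\Nc)$, and withholding $\dagger$ yields an infinite play, which $\eve$ wins. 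So $\eve$ wins the letter game regardless of who wins $G$. Structurally, $\adam$'s winning condition in the letter game is a reachability condition and $\eve$'s is a safety condition; the reachability player $\lor$ must therefore be simulated by $\adam$, and $\land$ by $\eve$, which is exactly what the paper does (letters $a_\delta$ at $\lor$-states for $\adam$, non-determinism on a single letter $a$ at $\land$-states for $\eve$).

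The second gap is the zero test. You propose a post-$\dagger$ gadget that $\eve$ ``can parse into an accepting state only when her counter is zero''. No such gadget exists in an OCN: configurations are monotone in the counter ($\Lc(q,n)\subseteq\Lc(q,n+1)$ for every state $q$, since only decrements can be disabled), so no state accepts a word at counter $0$ but rejects it at positive counters. The paper's zero test is not a language-level test but an adversarial one, and it only works in the correct orientation: from an accepting state on $\$$, $\eve$ has a decrementing escape to a safe deterministic state (enabled only when the counter is positive) together with two non-deterministic transitions to $q_{\heartsuit}$ and $q_{\clubsuit}$; at counter $0$ the escape is disabled, $\eve$ must guess between $\heartsuit$ and $\clubsuit$, and $\adam$ then plays the other letter. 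This lets $\adam$ \emph{catch} $\eve$ at counter zero, which is what the reduction needs; it cannot be repurposed to let $\eve$ \emph{certify} counter zero. A smaller but related issue: because all $\lor$-edges share the letter $b$, the word does not record $\eve$'s choices, so ``the word is accepting'' quantifies over all $\lor$-choice sequences, and your intended equivalence with the game drifts even in the direction you do argue.
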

\begin{proof}[Sketch]
Given an instance of a SOCN-reachability game on $\Nc$, We construct an OCN $\Mc$ such that $\lor$ wins in the SOCN-reachability game on $\Nc$ if and only if $\adam$ wins in the letter game on $\Mc$.

The high-level idea of the construction is that we construct an automaton $\Mc$, such that in a play of the letter game on $\Mc$,  the players $\adam$ and $\eve$ create a transcript of a run of the automaton $\Nc$. This is done easily by $\adam$ picking the letters at $\lor$ states, where he can pick a different letter, each corresponding to a different transition. Since in the letter game, we have in $\eve$ to resolve the non-determinism, we do that to allow for $\eve$ to resolve the choices of the $\land$ player. 

However, we need to ensure a few important aspects while constructing $\Mc$. 
Firstly, any sequence of letters chosen by $\adam$ in $\Mc$'s letter game so far must correspond to a run in $\Nc$ and secondly, the interplay between $\eve$'s and $\adam$'s choices in the letter game of $\Mc$ must correspond to the choices of the player $\land$ and $\lor$ respectively in the SOCN-reachability game of $\Nc$. These are the main challenges while constructing such an OCN $\Nc$ and they are resolved by the use of a few gadgets that we describe in detail in the appendix. 
\end{proof}
We conclude this subsection with the following theorem.
\begin{theorem}~\label{thm:succinctcomp}
Given an OCN $\Nc$ where the numbers in the transitions are represented in binary, deciding if $\Nc$ is history-deterministic is $\EXPSPACE$-complete.
\end{theorem}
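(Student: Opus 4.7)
The plan is simply to package the two ingredients already assembled in this subsection. The upper bound is supplied by Proposition~\ref{lemma:succUB}, and the matching lower bound is supplied by Lemma~\ref{lemma:succinctHard}, so the proof of the theorem itself is just a one-line combination. I would state the upper bound by pointing to the exponential expansion trick that turns any binary-encoded OCN $\Nc$ into a unary-encoded OCN $\Nc'$ of size at most exponential in $|\Nc|$, preserving the language and history-determinism (each increment/decrement of value $v$, with $v$ written using $\log v$ bits, gets replaced by a chain of $v$ unit transitions through fresh intermediate states), and then appealing to Lemma~\ref{lemma:UB} on $\Nc'$; polynomial space in $|\Nc'|$ is exponential space in $|\Nc|$, giving the $\EXPSPACE$ membership.

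For the lower bound side of the combination, I would just cite Lemma~\ref{lemma:succinctHard}, which already establishes $\EXPSPACE$-hardness via the reduction from reachability games on succinct one-counter nets of Hunter~\cite{Hun15}. Since $\EXPSPACE$-hardness of SOCN reachability games holds even for the succinct representation used there, the reduction transfers the hardness directly to deciding history-determinism of binary-encoded OCNs.

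Putting these together, the proof body I would write consists of two sentences: ``The upper bound follows from Proposition~\ref{lemma:succUB}, and the matching lower bound from Lemma~\ref{lemma:succinctHard}.'' There is no real obstacle here, as the conceptual work is absorbed into the two component results; the only thing worth re-emphasising in the write-up is that the unary-to-binary blow-up in the upper bound argument really does preserve history-determinism, which follows because the inserted intermediate states are deterministic (each has exactly one outgoing transition on a unique fresh letter, or, if one prefers to keep the alphabet fixed, on $\varepsilon$-transitions that can be absorbed into the preceding genuine transition without introducing choice). Hence Eve's winning strategies in the letter game on $\Nc$ and on $\Nc'$ are in natural bijection, and Lemma~\ref{lemma:UB} applies.
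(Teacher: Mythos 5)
Your proposal matches the paper exactly: the theorem is proved there precisely by combining Proposition~\ref{lemma:succUB} (whose justification is the same exponential unary expansion followed by an appeal to Lemma~\ref{lemma:UB}) with the hardness reduction of Lemma~\ref{lemma:succinctHard}. One small caution on your aside: expanding a weight-$v$ transition into a chain labelled by \emph{fresh letters} would change the language, so only the $\epsilon$-transition variant (which the paper handles via weak simulation, as noted in its discussion section) is the right way to phrase the expansion.
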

\subsection{Deciding History-Determinsm for OCA}
We show that, given a one-counter automaton $\Ac$, deciding if $\Ac$ is history-deterministic is undecidable. It was shown by Guha, Jecker, Lehtinen and Zimmermann~\cite{GJLZ21} that deciding if a non-deterministic pushdown automaton is history-deterministic is undecidable. This extends their result to OCAs. The reduction follows from the undecidability of language inclusion for deterministic one-counter automata (DOCA)~\cite{Val73}.

\begin{theorem}\label{thm:UndecOCA}
Given an OCA $\Ac$, deciding if $\Ac$ is history-deterministic is undecidable.
\end{theorem}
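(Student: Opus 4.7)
The plan is to reduce from language inclusion for deterministic one-counter automata, which is undecidable by a theorem of Valiant~\cite{Val73}. Given two DOCAs $\Dc_1$ and $\Dc_2$ over an alphabet $\Sigma_0$, I will construct an OCA $\Ac$ such that $\Ac$ is history-deterministic if and only if $\Lc(\Dc_1)\subseteq\Lc(\Dc_2)$.

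For the construction, pick two fresh letters $a,\#\notin\Sigma_0$ and let $\Ac$ have alphabet $\Sigma_0\cup\{a,\#\}$ and a new initial state $q_0$. From $q_0$ on reading $a$ (at counter value $0$), $\Ac$ nondeterministically transitions either to $p_1$, the initial state of a copy of $\Dc_1$, or to $p_2$, the initial state of a copy of $\Dc_2$; all other letters from $q_0$ lead to a rejecting sink. The $\Dc_1$-branch faithfully simulates $\Dc_1$ on letters of $\Sigma_0$, with no $\#$-transition defined anywhere. The $\Dc_2$-branch faithfully simulates $\Dc_2$ on letters of $\Sigma_0$, and in addition from $p_2$ at counter value $0$ there is a single $\#$-transition to a fresh accepting state. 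A straightforward check gives $\Lc(\Ac)=a\cdot\Lc(\Dc_1)\cup a\cdot\Lc(\Dc_2)\cup\{a\#\}$.

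To establish the equivalence, I will analyse the letter game on $\Ac$. The only nondeterministic choice is at $q_0$ on reading $a$; everything reachable from $p_1$ or $p_2$ is deterministic because $\Dc_1$ and $\Dc_2$ are deterministic. Hence Eve's strategy is entirely determined by which branch she enters upon Adam's first letter, which (modulo irrelevant prefixes) must be $a$. Since $a\#\in\Lc(\Ac)$ and the $\Dc_1$-branch has no $\#$-transition, any winning strategy for Eve must enter the $\Dc_2$-branch; and once she does she accepts exactly $a\cdot\Lc(\Dc_2)\cup\{a\#\}$. She wins the letter game if and only if this set contains all of $\Lc(\Ac)$, i.e.\ $\Lc(\Dc_1)\subseteq\Lc(\Dc_2)\cup\{\#\}$, which, since $\#$ is a fresh letter not appearing in any word of $\Lc(\Dc_1)$, is equivalent to $\Lc(\Dc_1)\subseteq\Lc(\Dc_2)$. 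Undecidability of history-determinism for OCAs follows immediately.

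The main delicate point is the asymmetry-breaking role of $\#$: without it, the construction would only capture ``$\Lc(\Dc_1)\subseteq\Lc(\Dc_2)$ or $\Lc(\Dc_2)\subseteq\Lc(\Dc_1)$'', rather than the specific inclusion needed. Forcing Eve into branch $2$ via a word accepted \emph{only} by that branch is thus the key design choice; everything else reduces to the observation that determinism of $\Dc_1$ and $\Dc_2$ leaves Eve with no further nondeterministic choices once she has committed. Note that this reduction crucially uses zero-tests (it is precisely the zero-tests of $\Dc_1,\Dc_2$ that are unavailable in OCNs), which is why the analogous statement for OCNs is decidable by Theorem~\ref{thm:PSPACE}.
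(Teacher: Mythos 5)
Your proposal is correct and is essentially the paper's own proof: both reduce from Valiant's undecidable DOCA-inclusion problem by building an OCA whose only nondeterminism is an initial branching on a fresh letter into copies of the two deterministic automata, and both force Eve into the $\Dc_2$-branch by adding a single fresh-letter word (your $a\#$, the paper's $\Cat$) accepted only there. The remaining analysis (determinism of each branch, and the equivalence with $\Lc(\Dc_1)\subseteq\Lc(\Dc_2)$) matches the paper's argument.
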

\begin{proof}[Sketch]
Consider the following problem : 
\begin{quote} 
\emph{DOCA Inclusion:} Given two DOCAs $\Ac$ and $\Bc$, is $\Lc(\Ac) \subseteq \Lc(\Bc)$?
\end{quote} 
The above problem was shown to be undecidable in Section 5.1 of Valiant's thesis~\cite{Val73}. We show that the problem of deciding if a given one-counter automaton is history-deterministic is also undecidable, by the means of a reduction. 
\end{proof}
This shows that zero-tests already add too much power for the problem of deciding history-determinism.

\section{Discussion}
We showed several decision problems related to history-determinism  to be decidable over OCNs. This is unlike other classes of infinite-state systems that subsume them, where either a subset or all of the problems are undecidable.

We note that we only deal with realtime nets with no $\epsilon$-transitions, but our results hold without too much modification when $\epsilon$-transitions are present, as weak simulation over OCNs can be decided in $\PSPACE$ (and $\EXPSPACE$ for a succinct encoding), and the weak simulation pre-order is semilinear as well~\cite{HLMT16}. We considered some model-related variations and concluded that testing the counter for zero freely made checking for history-determinism undecidable.  One could ask about models like reversal bounded one-counter automata~\cite{Iba14}, or automata with bounded number of zero-tests, to gauge the frontier between decidability and undecidability on these systems.

Although not obvious from the main part of the paper, we are confident that our results could easily be extended to safety acceptance conditions. One could also ask, for instance, to look at reachability or B\"uchi and co-B\"uchi acceptance conditions and understand how history-determinism works in these models. 

There are several questions about the expressivity of history-deterministic OCNs which we believe need further study. We have shown that $$\text{DOCN}\subseteq\text{HOCN}\subseteq \text{OCN}\cap \text{DOCA}.$$ An interesting problem would be to prove or disprove if any of these inclusions are strict. In fact, we don't have an example of a language that is accepted by a history-deterministic OCN which is not accepted by a deterministic OCN. 

One could ask similar questions about expressivity of history-determinism in OCAs, i.e. if HOCA = DOCA. Although deciding history-determinism is undecidable, it might be possible for one to show that the language accepted by a history-deterministic OCA is as expressive as deterministic OCA. We remark that the 1-token game $G_1$ characterises history-determinisation for OCAs as well. Moreover, we can again show with similar techniques that if history-deterministic OCAs satisfy the \ssp, then these languages can also be expressed by a deterministic OCA. The key part that we need to prove for determinisation of history-deterministic OCA would be the \ssp. It would be interesting to see how such a proof would look like, given the status of deciding history-determinism being undecidable for OCA.

\subsubsection{Acknowledgements} 
We would like to thank Dmitry Chistikov for listening to our conjectures and pointing us to important references. We are also grateful for his comments on our introduction. We are thankful to Neha Rino for carefully proofreading our paper, and suggesting improvements in our presentation.
We also thank Sougata Bose, Piotrek Hofman,  Filip Mazowiecki, David Purser and Patrick Totzke for their insightful remarks on our draft, and for telling us about weak simulation. We are grateful to Shaull Almagor and Asaf Yeshurun for a fun talk about OCNs. Finally, we thank Marcin Jurdzi\'nski for his support, and for bringing us his homemade rhubarb crumble.  

\bibliographystyle{splncs04}
\bibliography{hoca} 

\newpage
\appendix

\section{Appendix for Section~\ref{sec:Decide}}
\subsection{Simulation Games}\label{appendix:Sim}
We argue that deciding the winner in simulation game (cf. Definition \ref{def:simulationgame}) is logspace interreducible to deciding the winner in the version of simulation games when the winning condition is given solely by the inability of the either players to choose transitions, and not by accepting states. 

Note that in Definition~\ref{def:simulationgame} for simulation games, we can complete both the one-counter nets by adding a rejecting sink state in each of them, from which we have a transition from every state on $\Sigma$ that does not change the counter value. We also add self loops on $\Sigma$ on the sink state that do not change the counter value. This slight modification does not change the winner in the simulation game. 
Consider the following decision problem, which we call {\scshape simulation}:
\begin{quote}
\textit{Given:} Two complete one-counter nets $\Nc$ and $\Mc$, and configurations $(p,k) \in \Cc(\Nc)$ and $(p',k') \in \Cc(\Mc)$\\
\textit{Question:} Does $\eve$ win the simulation game $\simg{\Nc,(p,k)}{\Mc,(p',k')}$?
\end{quote}

We formally define the OriginalSim game, which is the simulation game where the winning condition is given by the inability of the either player to choose transitions, as defined in literature~\cite{HLMT16}. 

\begin{definition}
Let  $\Nc = (Q, \Sigma,\Delta)$ and $\Mc = (Q,\Sigma',\Delta')$ be two one-counter nets. Given two configuration $(p,k)$ and $(p',k')$ in $\Nc$ and $\Mc$ respectively with $k,k' \in \mathbb{N}$, the OriginalSim game between $\Nc$ and $\Mc$ at position $((p,k),(p',k'))$, is a two player game between $\adam$ and $\eve$, with positions in $\Cc(\Nc) \times \Cc(\Mc)$ where the initial position is $((p_0,k_0),(p'_0,k'_0)) = ((p,k),(p',k'))$. At round $i$ of the play, where the position is: $((p_i,k_i),(p'_i,k'_i))$:
\begin{itemize}
    \item $\adam$ selects a letter $a \in \Sigma$, and a transition $(p_i,k_i) \xrightarrow{a,d} (p_{i+1},k_{i+1})$ in $\Nc$
    \item $\eve$ selects an $a$-transition $(p'_i,k'_i) \xrightarrow{a,d'} (p'_{i+1},k'_{i+1})$ in $\Mc$
\end{itemize}
If after choosing a letter, $\adam$ can't choose a transition, then $\adam$ loses. If after $\adam$ having chosen a transiting, $\eve$ is unable to choose a transition, then $\eve$ loses. Else, the game goes to $((p_{i+1},k_{i+1}),(p'_{i+1},k'_{i+1}))$ for another round of the play. The player $\eve$ wins any infinite play.
\end{definition}
We will also call {\scshape originalsim}, the decision problem of asking if $\eve$ wins the game defined above.
\begin{quote}
\textit{Given:} Two one-counter nets $\Nc$ and $\Mc$, and configurations $(p,k) \in \Cc(\Nc)$, and $(p',k') \in \Mc$, \\ 
\textit{Question:} Does $\eve$ win the OriginalSim game between $\Nc$ and $\Mc$ at position $((p,k),(p',k'))$?
\end{quote}

We now show that the two problems are log-space inter-reducible to each other for asking the decision problem about the winner of the game. 

\paragraph*{Reducing {\scshape Simulation} to {\scshape OriginalSim}:} 
Given an instance of problem {\scshape Simulation}, with two complete OCNs $\Nc = (Q,\Sigma,\Delta,q_0,F)$ and $\Mc = (Q',\Sigma,\Delta',q'_0,F')$ with configurations $(p,k) \in \Cc(\Nc)$ and $(p',k')\in \Cc(\Mc)$ respectively, we reduce it to an instance of {\scshape originalsim}. We construct the (not necessarily complete) net $\Nc'$ ($\Mc'$) by introducing a new alphabet $\$ \notin \Sigma$ to $\Nc$ ($\Mc$), and adding self loops on $\$$ on final states in $\Nc$ ($\Mc$) that do not change the counter. That is, for each state $q\in F$ ($q\in F'$), we introduce the transition $(q,\$,0,q)$. 

We claim that for any configurations $((p,k),(p',k')) \in \Cc(\Nc) \times \Cc(\Mc)$, the player $\eve$ wins the simulation game $\simg{\Cc(\Nc),(p,k)}{\Cc(\Mc),(p',k')}$ if and only if the player $\eve$ wins the OriginalSim game  between $\Nc$ and $\Mc$ at position $((p,k),(p',k'))$. 

$\Rightarrow:$ Suppose $\eve$ plays in the OriginalSim game between $\Nc'$ and $\Mc'$ according to a winning positional strategy in the simulation game between $\Nc$ and $\Mc$ whenever $\adam$ chooses a letter in $\Sigma$. Note that there is at most one $\$$ transition on each state, so $\eve$ either has no choice or a unique choice for choosing a transition in the simulation game at her token.

If $\adam$ never chooses $\$$ in OriginalSim game between $\Nc'$ and $\Mc'$, then $\eve$ wins the game as both $\Nc'$ and $\Mc'$ are complete on $\Sigma$. Now, whenever $\adam$ chooses $\$$ and a transition on $\$$ in $\Nc'$, then $\adam$'s token in $\Nc'$ must have been at a state which is accepting in $\Nc$. As $\eve$ was playing according to her winning strategy, $\eve$'s token in $\Mc'$ would have been at a state corresponding to an accepting one in $\Mc$, which means $\eve$ would be able to take a $\$$-transition as well. Note that taking a $\$$ transition does not change the counter value in both $\Mc'$ and $\Nc'$. Thus, $\eve$ is able to choose a transition whenever $\adam$ can choose one, and hence $\eve$ wins OriginalSim game. 

$\Leftarrow:$ Suppose $\eve$ wins the OriginalSim game between $\Nc'$ and $\Mc'$, and $\eve$ plays in the simulation game according to a winning strategy in the OriginalSim game. Then, whenever $\adam$'s token in $\Nc$ is at a final state in the simulation game, $\eve$'s must be at a final state as well. If not, then $\adam$ would have been able to take a $\$$ transition in the OriginalSim game, while $\eve$ wouldn't be able to, which contradicts the fact that $\eve$ was playing according to a winning strategy,

\paragraph*{Reducing {\scshape originalsim} to  {\scshape Simulation}:} Given an instance of problem {\scshape originalsim}, with two OCNs $\Nc' = (Q,\Sigma,\Delta,q_0)$ and $\Mc' = (Q',\Sigma,\Delta',q'_0)$ with configurations $(p,k) \in \Cc(\Nc)$ and $(p',k')\in \Cc(\Mc)$ respectively, we reduce it to an instance of {\scshape Simulation}. We construct $\Nc$ ($\Mc$) by completing the net by adding transitions on $\Sigma$ that do not change the counter value to a sink state $s$ ($s'$) which is rejecting, and making all the original states $Q$ ($Q'$) accepting. 

We claim that for any configuration $((p,k),(p',k')) \in \Cc(\Nc') \times \Cc(\Mc')$, $\eve$ wins the OriginalSim game between $\Nc'$ and $\Mc'$ at position $((p,k),(p',k'))$ if and only if the player $\eve$ wins the simulation game $\simg{\Cc(\Nc),(p,k)}{\Cc(\Mc),(p',k')}$.

$\Rightarrow:$ Suppose $\eve$ plays in the simulation game between $\Nc$ and $\Mc$ according to a winning strategy in the OriginalSim game between $\Nc'$ and $\Mc'$. Note that once $\adam$'s token goes to the sink state $s$ which is rejecting, then $\eve$ wins the simulation game, as $\adam$ will never see an accepting state. If $\adam$ stays in the states of $\Nc'$ that were also in $\Nc$, then $\eve$ would also be able to stays in the states of $\Mc'$ that were also in $\Mc$. As these states comprises of all accepting states in $\Nc$ and $\Mc$, this implies that whenever $\adam$'s token in $\Nc$ is at an accepting state, so is $\eve$'s.

$\Leftarrow:$ Suppose $\eve$ wins the simulation game between $\Nc$ and $\Mc$, and $\eve$ plays in the OriginalSim game according to a winning strategy for $\eve$ in the simulation game. Then, at any position of a play according to $\eve$'s strategy, whenever $\adam$ is able to take a transition on $a$ in $\Nc'$, then $\eve$ must be able to take a transition in $\Mc'$ according to her strategy as well; if not, then in the simulation game, $\eve$'s token would be at a rejecting state in $\Mc$ while $\adam$'s would be at an accepting state in $\Nc$, which contradicts the fact that $\eve$ was playing according to a winning strategy in the simulation game. 
\subsection{Proof of Lemma~\ref{lemma:G1HD}}
Before proving lemma~\ref{lemma:G1HD}, we make a few observations about the strategies in the letter game. Note that in the letter game, each player winning the game has a positional winning strategy, as it is a reachability game. Suppose $\eve$ wins the letter game, then $\eve$ has a winning strategy which can be given by a (partial) function $$\sigma:(Q \times \mathbb{N}) \times \Sigma^{*} \times \Sigma \rightarrow \Delta^{*}.$$ 
We first prove proposition~\ref{prop:residualresolver} and proposition~\ref{prop:positional}, which we state again below:
\begin{proposition}[also, Proposition~\ref{prop:residualresolver}]
For an OCN $\Nc$, a strategy  $\sigma$ for $\eve$  in the letter game is winning for $\eve$ if and only if  $\sigma$ takes only residual transitions.
\end{proposition}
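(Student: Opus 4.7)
The plan is to prove both directions through the invariant that along any play $(c_0,w_0),(c_1,w_1),\ldots$ consistent with a winning strategy (or with a strategy picking only residual transitions) one has $\Lc(c_i)=w_i^{-1}\Lc(\Nc)$ at every position.

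For the forward direction I would first establish this invariant for a winning $\sigma$. The inclusion $\Lc(c_i)\subseteq w_i^{-1}\Lc(\Nc)$ is structural: any accepting run from $c_i$ on $u$ concatenates with the run of $\Nc$ from $(q_0,0)$ on $w_i$ to yield an accepting run on $w_iu$. The reverse inclusion uses the winning condition: whenever $w_iu\in\Lc(\Nc)$, the continuation of $\sigma$ against $\adam$ playing $u$ cannot lose, so $\sigma$ must produce a run from $c_i$ on $u$ ending in an accepting state, placing $u\in\Lc(c_i)$. Now I would suppose for contradiction that $\sigma$ picks a non-residual transition $c_i\xrightarrow{a_i,d}c_{i+1}$ at some $\sigma$-reachable position, yielding $\Lc(c_{i+1})\subsetneq a_i^{-1}\Lc(c_i)=w_{i+1}^{-1}\Lc(\Nc)$. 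Picking $v\in w_{i+1}^{-1}\Lc(\Nc)\setminus\Lc(c_{i+1})$ and having $\adam$ play the letters of $v$, the overall word $w_{i+1}v$ sits in $\Lc(\Nc)$ while no run from $c_{i+1}$ on $v$ accepts, so $\sigma$ either fails to provide a transition on some prefix of $w_{i+1}v$ (losing $\eve$) or ends outside $F$ (letting $\adam$ win via the immediate condition). This contradicts $\sigma$ being winning.

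For the backward direction, assuming $\sigma$ picks only residual transitions, I would prove the same invariant by induction: the base case is $\Lc(c_0)=\Lc(\Nc)$, and the inductive step follows directly from residuality, giving $\Lc(c_{i+1})=a_i^{-1}\Lc(c_i)=w_{i+1}^{-1}\Lc(\Nc)$. The invariant then precludes $\eve$ from losing along any $\sigma$-play: whenever $w_i\in\Lc(\Nc)$ we have $\epsilon\in\Lc(c_i)$, forcing $c_i$ to be at an accepting state, so $\adam$ cannot trigger his immediate-win condition; and whenever $\eve$ is stuck at $(c_i,a_i)$ with no $a_i$-transition, $a_i^{-1}\Lc(c_i)=\emptyset$ shows that $w_ia_i$ is not a prefix of any accepting word, so $\eve$ is not penalised for being unable to move. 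Any $\sigma$-consistent play is therefore either infinite (which $\eve$ wins by definition) or terminates harmlessly, so $\sigma$ is winning.

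The main obstacle I expect is the reverse inclusion $w_i^{-1}\Lc(\Nc)\subseteq\Lc(c_i)$ in the forward direction, since it is the step that ties the letter-game winning condition to the language theory of $\Nc$; everything else then reduces to elementary left-quotient manipulations together with the rules of the letter game. A minor subtlety in the backward direction is that ``$\sigma$ picks only residual transitions'' should be read as picking a residual whenever $\eve$ is required to move; the invariant ensures that the positions where some $a_i$-transition exists coincide with those where a residual one witnesses the non-emptiness of $a_i^{-1}\Lc(c_i)$, so this reading is forced by the proof itself.
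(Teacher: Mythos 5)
Your proposal is correct and follows essentially the same route as the paper: the forward direction extracts a word in $a^{-1}\Lc(q,k)\setminus\Lc(q',k')$ that $\adam$ plays to defeat $\sigma$, and the backward direction shows that residual transitions maintain $\Lc(c_i)=w_i^{-1}\Lc(\Nc)$, so $\eve$ is at an accepting configuration exactly when the word read is accepting. Your treatment is somewhat more explicit than the paper's (in particular about the invariant and the case where $\eve$ cannot move), but the underlying argument is the same.
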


\begin{proof}

$\Rightarrow$: Let $\sigma : Q \times \mathbb{N} \times \Sigma^* \times \Sigma \rightarrow \Delta$ be a winning strategy for $\eve$ in the letter game. Suppose, $\sigma(q,k,u,a) = (q,k) \xrightarrow{a,d} (q',k')$, for some configuration $(q,k)$, reached upon reading a prefix $u$ following $\sigma$. We need to show that the transition $(q,k)\xrightarrow{a,d} (q',k')$ is residual, i.e. $L(q',k') = a^{-1} L(q,k)$. But if there is a word $w \in a^{-1}L(\Nc,(q,k))\setminus L(\Nc,(q',k'))$, then the strategy $\sigma$ can't end at an accepting state on the word $uaw$, but $uaw \in L$, a contradiction. 

$\Leftarrow$: Suppose $\sigma$ is an $\eve$ strategy which only takes residual transitions. Then for each word $w$, if $(q,k)$ is the configuration reached upon reading the word $w$, then  $L(\Nc,(q,k)) = w^{-1}L$. If $w \in L$, then $\epsilon \in w^{-1}L$, and hence the configuration $(q,k)$ is accepting. Thus, $\sigma$ is at an accepting state whenever the word read so far is accepting, and thus $\sigma$ is a winning strategy.
\end{proof} 

Using  Proposition~\ref{prop:residualresolver}, we can show that $\eve$'s strategy only depends on the configuration, 
and is independent of the word read so far. Proposition~\ref{prop:positional} below show that we can have a resolver based on the current state and counter value alone. 
\begin{proposition}[also, Proposition~\ref{prop:positional}]
If $\eve$ wins the letter game on an OCN $\Nc$, then $\eve$ has a winning strategy $\sigma$ that only depends on the current configuration of the play, i.e $\sigma$ is a partial function $\sigma: (Q \times \mathbb{N}) \times \Sigma \rightarrow \Delta^{*}$    
\end{proposition}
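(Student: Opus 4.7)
The plan is to leverage Proposition~\ref{prop:residualresolver}, which identifies winning strategies for $\eve$ with strategies that play only residual transitions, and to note that residuality of a transition is a property of its endpoints and label alone, not of the word read so far. This already points to a positional construction; what remains is to ensure that a residual transition is actually available at each reachable configuration for each relevant letter.

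I would first extend the setup by defining, for each configuration $(q, k) \in Q \times \mathbb{N}$, the letter game on $\Nc$ played from initial configuration $(q, k)$ with target language $\Lc(q, k)$ (with the rules otherwise unchanged). Let $W \subseteq Q \times \mathbb{N}$ be the set of configurations from which $\eve$ wins this game; by hypothesis, $(q_0, 0) \in W$. The key step is a closure property of $W$: for every $(q, k) \in W$ and every letter $a$ with $au \in \Lc(q, k)$ for some $u$, there is a residual $a$-transition $(q, k) \xrightarrow{a, d} (q', k')$ with $(q', k') \in W$. To prove this, fix any winning strategy $\tau$ for $\eve$ from $(q, k)$; noting that Proposition~\ref{prop:residualresolver} transfers verbatim to games with any starting configuration and the corresponding target language, $\tau$'s response to Adam's letter $a$ is a residual $a$-transition to some $(q', k')$. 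Residuality gives $\Lc(q', k') = a^{-1}\Lc(q, k)$, so the continuation of $\tau$ after this first move is precisely a winning strategy for the letter game from $(q', k')$ with target $\Lc(q', k')$, whence $(q', k') \in W$.

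With the closure in hand, the positional strategy is defined by: for each $(q, k) \in W$ and each letter $a$ that is a prefix of $\Lc(q, k)$, let $\sigma((q, k), a)$ be a canonically chosen residual $a$-transition whose target lies in $W$; elsewhere $\sigma$ is defined arbitrarily. Any play starting from $(q_0, 0) \in W$ and consistent with $\sigma$ stays inside $W$ and uses only residual transitions at every relevant move (those at which the word read so far is still a prefix of $\Lc$). The $(\Leftarrow)$ direction of Proposition~\ref{prop:residualresolver} then yields that $\sigma$ is winning, and $\sigma$ depends only on $(q, k)$ and $a$ by construction, as required.

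The main obstacle is the closure step: one must carefully frame the letter game at an arbitrary starting configuration with the correct target language so that Proposition~\ref{prop:residualresolver} applies cleanly. After that, the rest is bookkeeping, since positionality is automatic from the fact that residuality is a purely local condition on a transition.
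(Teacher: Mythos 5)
Your proof is correct and follows essentially the same route as the paper: both arguments reduce positionality to Proposition~\ref{prop:residualresolver} and exploit the fact that residuality is a local property of a configuration and a letter, so that a residual transition can be selected configuration by configuration. The paper obtains the selection more tersely by projecting an arbitrary winning strategy $\sigma'$ onto configurations (choosing, for each $(q,k)$, the transition $\sigma'$ plays there after some history $u$), whereas you package the same idea as a winning region $W$ with a closure property; the conclusions and the use of the $(\Leftarrow)$ direction of Proposition~\ref{prop:residualresolver} coincide.
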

\begin{proof}
Let $\sigma': (Q \times \mathbb{N}) \times \Sigma^{*} \times \Sigma \rightarrow \Delta^{*}$ be any winning strategy. We define a strategy $\sigma$ as follows: For each configuration $(q,k)$ in the one-counter net $\Nc$, we let $\sigma((q,k),a) = (q,k) \xrightarrow{a,d} (q',k')$, where for some word $u$ on which $\sigma'$ reaches the configuration $(q,k)$, we have $\sigma'((q,k),u,a) = (q,k) \xrightarrow{a,d} (q',k')$. By the Proposition ~\ref{prop:residualresolver}, $\sigma'$ takes only residual transitions, implying $\sigma$ takes only residual transitions as well, and it follows from proposition ~\ref{prop:residualresolver}  that $\sigma$ is a winning strategy for $\eve$ which depends only on the configuration.  \end{proof}

Having characterised what strategies look like in the letter game, we are finally equipped to prove Lemma~\ref{lemma:G1HD} using Proposition~\ref{prop:residualresolver} and Proposition~\ref{prop:positional}.

\paragraph*{Proof of lemma \ref{lemma:G1HD}}
Let $\gamma$ be a winning strategy for $\eve$ in $G_1$, and let $\lambda$ be the strategy in the letter game, derived from $\gamma$ where $\adam$ copies $\eve$'s play. We show that $\lambda$ takes only residual transitions. This is enough because of Proposition~\ref{prop:residualresolver}.  Assume to the contrary, that $\eve$ takes a non-residual transition in a play following $\gamma$, $(p,k) \xrightarrow{a,d} (p',k')$ at the position $((p,k),(p,k))$ in $G_1$. As $\adam$'s token is also at $(p,k)$, $\adam$ can win by constructing a word $av \in \Lc(p,k)$, such that $v \notin \Lc(p',k')$ so that $\eve$ can't produce an accepting run of $v$ from $(p',k')$, while constructing an accepting run for $a\cdot v$ from $(p,k)$ with his token. Thus, $\eve$ loses $G_1$, and hence $\gamma$ cannot be a winning strategy in $G_1$, a contradiction. 

\subsection{Proof of Lemma~\ref{lemma:G1Simulation}}
We state the construction's intuition as well as give a rigorous construction of the nets $\Mc$ and $\Mc'$ side-by-side for ease of reference.
Let $\Nc = (Q,\Sigma,\Delta,q_0,F)$ be a one-counter net. We assume the net $\Nc$ to be complete, as we can add a rejecting sink state, from which we have a transition from every state on $\Sigma$ that does not change the counter value. This modification does not change the winner in the letter game of $\Nc$. 

\paragraph*{Construction of $\Mc'$}
We define $\Mc$ to essentially be the net $\Nc$, but containing an extra state $q_\#$, and a newly added letter $\#$. Along with this, there are new transitions added. These transitions loop on the new state $q_{\#}$ for any letter in the expanded alphabet, without changing the counter. Transitions are added to reach this state $q_\#$ by reading $\#$ from a final state in $\Nc$.
More formally, $\Mc' = (Q',\Sigma_{\#},\Delta',q_0,F')$, where 
\begin{itemize}
    \item the set of states is $Q' = Q \cup \{q_{\#}\}$, $\Sigma_{\#} = \Sigma \cup \{\#\}$,
    \item the accepting state is just $F'=\{q_{\#}\}$, and
    \item the set of transitions $\Delta'$ is defined as $$\Delta'= \Delta \cup \{(q,\#,0,q_\#) \mid q\in F\} \cup \{(q_{\#},a,0,q_{\#}) \mid a \in \Sigma_{\#}\}.$$ 
\end{itemize}
Note that $\Mc'$ has the same initial state $q_0$, as $\Nc$. 

\paragraph*{Construction of $\Mc$}
We construct $\Mc$ to contain approximately $(|\Sigma|+1)$ copies of the states in $\Nc$. These copies help remember the previous letter read in the state space of the OCN, and mimic a 'one-step lag' in $\Nc$. The transitions of $\Mc$ on reading a letter, store the letter in the state space. However, in the projection of the $\Nc$ component  of $\Mc$'s state, the transition is based on the letter that was previously stored in the letter component as opposed to the current letter read, which will now be stored in the state space of the new state in $\Mc$. This is built to capture a play of $G_1$ by creating a delay in the simulation game for $\adam$ by forcing a one-step delay during his play. 
We formalise this below by defining $\Mc = (Q_M,\Sigma_{\#},\Delta_M,s,F_M)$, where 
\begin{itemize}
    \item $Q_M = (Q \times \Sigma) \cup \{s\} \cup (F \times \{\#\})$,
    \item  $s$ is the initial state,
    \item $F_M$, the set of final states is $F \times \{\#\}$, and 
    \item the set of transitions $\Delta_M$ is the union of the following sets: 
\begin{itemize}
    \item $ \{(s,a,0,(q_I,a)) \mid {a \in \Sigma}\}$
    \item $\{((p,a),b,d,(q,b)) \mid (p,a,d,q) \in \Delta\}$
    \item $\{\left((p,a),\#,d,(q,\#)\right) \mid (p,a,d,q) \in \Delta \ \text{and} \ q\in F \}$
\end{itemize}
\end{itemize}  
We now prove that $\eve$ wins the game $G_1$ on $\Nc$ if and only if $\eve$ wins the simulation game $\Gc(\Mc \simulates \Mc')$. We first define a slightly different linearisation of $G_1$. We modify $G_1$ so that $\adam$ has to wait one more turn to execute a transition in his token. For the first position where both $\eve$'s and $\adam$'s tokens are at the initial configuration:
\begin{enumerate}
    \item $\adam$ picks a letter $a\in\Sigma$.
    \item $\eve$ responds by picking a transition $(q_I,0)\xrightarrow{a,k'_1} (q_1,k'_1) \in \Delta$.
\end{enumerate}
Now, we say that the token is at position $(q_1,k_1')$ for $\eve$ and $(q_I,0)$ with $\adam$ having to execute $a$.
For the $i^{th}$ turn, from configurations $(q_i,k_i')$ for $\eve$ and $(p_{i-1},k_{i-1})$ for $\adam$ with him having to execute a letter $a_{i-1}\in\Sigma$,
\begin{enumerate}
    \item $\adam$ picks a transition $(p_{i-1},k_{i-1})\xrightarrow{a_{i-1},d}(p_{i},k_{i})$ as well as a letter $a_{i}\in\Sigma$.
    \item $\eve$ responds by picking a transition $(q_{i},k_i')\xrightarrow{a_i,d'} (q_{i+1},k_{i+1}')$ on $a_i$.
\end{enumerate}
The winning condition for $\eve$ is the following: If after $i$ rounds for each $i \in \mathbb{N}$, the player $\adam$ is at a configuration $(p_{i-1,k_{i-1}})$ such that $\adam$ can pick the transition $(p_{i-1},k_{i-1})\xrightarrow{a_{i},d}(p_{i},k_{i})$ with $p_i\in F$, then it must be that $q_i$ was already a final state (Recall that we assumed without loss of generality that the net $\Nc$ is complete). 

It's easy to see that the the above modified formalisation of $G_1$ is equivalent to the standard $1$-token game. This modified formalisation however, would make it easier to see the equivalence with the simulation game of the two OCNs constructed. 

\paragraph*{If $\eve$ wins $G_1$, then $\eve$ wins $\Gc(\Mc \simulates \Mc')$}
Suppose $\eve$ wins the modified $G_1$ in $\Nc$. Let $\sigma$ be a winning strategy for $\eve$ in the modified game $G_1$. Then, in the simulation game $\Gc(\Mc \simulates \Mc')$, $\eve$ can win by inductively constructing a simultaneous play of modified $G_1$ which follows $\sigma$ in her memory, in order to choose transitions in the simulation game. 

\begin{itemize}
    \item The simulation game starts at $((s,0),(q_I,0))$. For any letter $a \in \Sigma$ that $\adam$ picks (if he picks $\#$, he loses, as he can't move from $\#$ on $s$), the transitions available take him to $(q_I,a)$, since this transition is deterministic.
    To respond to the above play, $\eve$ uses the winning strategy of modified $G_1$. In $\eve$'s view she would respond with the transition that she would have in the modified $G_1$ if $\adam$ picked $a\in\Sigma$, using her strategy $\sigma$.  
    Suppose $\eve$'s strategy $\sigma$ in the modified $G_1$ picks the transition $(q_I,0) \xrightarrow{a,k_1} (q_1,k_1)$. Then she uses this as her strategy to choose the corresponding transition in $\Mc'$ in the simulation game. She also builds a play in the modified $G_1$ game where she has made a move from the initial state to $(q_1,k_1)$, with $\adam$ yet to execute an $a$-transition for his token. The configuration in the modified $G_1$, which is stored in $\eve$'s memory is at $((q_1,k_1),(q_I,0))$, with $\adam$ yet to execute an $a$-transition from $(q_I,0)$.
    \item Suppose, $\adam$ has not not played $\#$ in the first $i$ rounds of the simulation game. Let the simulation game be at the position $(((p_{i-1},a),k),(q_{i-1},k'))$.
    
    Then inductively, the corresponding run of $\eve$ in the modified $G_1$ for such a play in $\Nc$ is going to be $((q_i,k_i'),(p_{i-1},k_{i-1}))$ where $\adam$ is yet to choose a transition on $a$ from the configuration $(p_{i-1},k_{i-1})$. In the simulation game $\Gc(\Mc \simulates \Mc')$,
    \begin{itemize}
        \item  $\adam$ chooses a letter $b$, and a corresponding transition $((p_{i-1},a),b,d,(p_{i},b))$ from the available transitions of $\Mc$.
        \item $\eve$ responds as though, $\adam$ executed the transition $(p_{i-1},k_{i-1}) \xrightarrow{a,d}(p_{i},k_i)$ and picked the letter $b$ in the modified game $G_1$ from the tuple of configurations $((q_i,k_i'),(p_{i-1},k_{i-1}))$ to the tuple of configurations $((q_i,k_i'),(p_{i},k_i))$.
        
        The player $\eve$'s strategy $\sigma$ in the modified $G_1$, would have prescribed a transition, say $(q_i,k_i')\xrightarrow{b,d'} (q_{i+1},k_{i+1}')$. She picks the same transition available to her in $\Mc'$ as a response in the simulation game. 
    \end{itemize}
    After the transition $(q_i,k_i')\xrightarrow{b,d'} (q_{i+1},k_{i+1}')$ was picked, the corresponding inductive game of moidifed $G_1$ built is updated to the tuple of configurations $((q_{i+1},k_{i+1}'), (p_{i},k_i))$ with $\adam$ yet to pick a transition on $b$.
    \item Suppose $\adam$ picks $\#$ at some position $i$, and suppose the configuration of the tokens were at  $((p_{i-1},a),k_{i-1}),(q_i,k_i')$, then $\adam$ can make a move if and only if $\adam$ can get to a final state, by a transition $(p_{i-1},a,d,p_{i+1})$, for $p_{i+1} \in F$. But then, as $\eve$ was playing according to a winning strategy $\sigma$ in the modified game $G_1$, her token in the modified $G_1$ must have been at a final state as well. This enables $\eve$ take a transition to $q_{\#}$ in the simulation game as well, from which she can win.
\end{itemize} 
Thus, we have shown that if $\eve$ wins the modified $G_1$, then $\eve$ wins the simulation game $\Gc(\Mc \simulates \Mc')$ as well. 

\paragraph*{If $\eve$ wins $\Gc(\Mc \simulates \Mc')$, then $\eve$ wins $G_1$:}
We will now show the other direction that if  $\eve$ wins the simulation game $\Gc(\Mc \simulates \Mc')$, then $\eve$ wins the modified $G_1$ over $\Nc$. Let $\tau$ be a winning strategy for $\eve$ in the simulation game. Then, we will show that $\eve$ can win the modified $G_1$ over $\Nc$ by inductively constructing a simultaneous play of simulation game which follows $\tau$ in her memory in order to choose transitions in the modified $G_1$ game. Formally, $\eve$ plays in the modified $G_1$ using the strategy $\tau$ as follows:

\begin{itemize}
    \item The modification of the game $G_1$ starts while all the tokens for each of the player both correspond to the configurations $(q_I,0), (q_I,0)$. In $G_1$, 
    \begin{itemize}
        \item For any letter $a \in \Sigma$ picked by $\adam$, we consider the corresponding play of $\adam$ in the game $\Gc(\Mc\simulates\Mc')$ here: \begin{itemize}
           \item $\adam$ picks a transition from $(s,0)\xrightarrow{a,0} ((q_I,a),0)$. In the simulation game,
           \item $\eve$ responds with a transition of $\Mc'$ suggested by her winning strategy $\tau$ with the transition $(q_I,0)\xrightarrow{a,k_1'}(q_1,k_1')$.
        \end{itemize} 
    \item  For the game $G_1$, player $\eve$ is to respond with the corresponding transition $(q_I,0)\xrightarrow{a,k_1'}(q_1,k_1')$ above in the net $\Nc$ during her turn. So the configuration in the modified $G_1$, after the first round is: $\eve$'s token is at $(q_1,k_1')$, while $\adam$'s token is at $(q_I,0)$, waiting to make a move on $a$.
    \end{itemize}
    \item     In the $i^{th}$ turn of the modified game $G_1$, suppose the game is at configuration $(q_i,k_i')$ for $\eve$ and $(p_{i-1},k_{i-1})$ for $\adam$, with him having to execute a letter $a_{i-1}$ in $\Sigma$. Then, the inductive run of the simulation game would be at the following position: $(q_i,k_i')$ in the net $\Mc'$ and $((p_{i-1},a_{i-1}),k_{i-1})$ in the net $\Mc$ for this round of the simulation game:
    \begin{itemize}
        \item In the modified $G_1$, $\adam$ picks a transition over $(p_{i-1},k_{i-1})\xrightarrow{a_{i-1},d} (p_{i},k_{i})$ over $\Nc$, and a letter $a_{i+1}$.
        This corresponds to a unique transition over the letter $a_i$ in $\Mc$, as $((p_{i-1},a_{i-1}),k_{i-1})\xrightarrow{a_i,d} ((p_i,a_i),k_{i})$.
        We assume that $\adam$ in the simulation game extends the run with the above pair of letter and transition.
        \item The player $\eve$ responds with a transition $(q_i,k_i')\xrightarrow{a_i,d'} (q_{i+1},k_{i+1}')$ from the net $\Mc'$ in the simulation game, using $\tau$. 
    \end{itemize}
    \end{itemize}
    The same transition is picked by $\eve$ for modified $G_1$: $(q_i,k_i')\xrightarrow{a_i,d} (q_{i+1},k_{i+1}')$, This transition is available in $\Nc$ by construction. Therefore, the modified game $G_1$ is now at configurations: $(q_{i+1},k_{i+1}')$ for $\eve$ and $(p_i,k_i)$ for $\adam$, who is waiting to make a move on $a_{i}$. The corresponding simulation game is at the configuration $((p_{i},a_{i}),k_{i})$ for $\adam$ and  $(q_{i+1},k_{i+1}')$ for $\eve$.
    
    We now argue that the above described strategy is winning for $\eve$, in the modified $G_1$. Consider a configuration in the round $i$ of the play, where $\eve$'s token is at $(q_i,k_i')$ and $\adam$'s token is at $(p_{i-1},k_{i-1})$,with him having to execute a letter $a_{i}\in\Sigma$. It suffices to show that if $\adam$ can end up in an accepting state after executing an $a_{i}$ transition from $(p_{i-1},k_{i-1})$, then the state of $\eve$'s configuration $q_i$ must also be an accepting state in $\Nc$.
    
    This follows from the fact that strategy $\tau$ in the simulation game was winning for $\eve$. At the $i^{th}$ round of the play, the inductively built run in the simulation game would be at a configuration $(q_i,k_i')$ in the net $\Mc'$ and $((p_{i-1},a_{i}),k_{i-1})$ in the net $\Mc$. If $\adam$ could have picked a transition to an accepting configuration on reading $a_i$ in the net $\Nc$ from $(p_{i-1},a_{i})$, then $\adam$ has a transition on $\#$ enabled for his play in $\Mc$. The player $\eve$ would then must be able to respond with a $\#$-transition (as $\tau$ was a winning strategy), but that is only available from states $q_i$ which are accepting. 
\subsection{Proof of Lemma~\ref{lemma:LB}}
\begin{proof}
The proof goes by reducing from the problem of checking non-emptiness  of alternating finite-state automata over a unary alphabet. This variation of the problem was shown to be $\PSPACE$-complete by Holzer~\cite{Hol95}, with its proof simplified by Jan\v{c}ar and Sawa~\cite{JS07}.

We define an alternating finite-state automata $\Ac$ over the unary alphabet as follows: $\Ac = (Q = Q_{\lor}\uplus Q_{\land}, \delta,q_0, F)$ where $Q$ is a finite set of states, partitioned among two players $\lor$ and $\land$, $q_0\in Q$ is the start state, $F\subseteq Q$ is the final state and the transitions are $\delta\subseteq (Q_{\lor}\times Q_{\land})\cup (Q_{\land} \times Q_{\lor})$. 

The empty-string $\epsilon$, which has length 0 is in the language accepted by $\Ac$ from $q\in Q$ iff $q\in F$. 
We say a word of length $n\geq 1$ is accepted from $q$ if either 
\begin{itemize}
    \item $q\in Q_{\lor}$ and there exists $(q,q')\in\delta$ such that a word of length $n-1$ accepted by $\Ac$ from $q'$; or 
    \item $q\in Q_{\land}$ and for every $(q,q')\in\delta$, there is a word of length $n-1$ accepted by $\Ac$ from $q'$.
\end{itemize}
A word of length $n$ is accepted by $\Ac$ if it is accepted from the initial state $q_0$.

Intuitively, this can be thought of as a game between two players, $\lor$ and $\land$ where an $n$ length word is accepting in the automaton if and only if $\lor$ wins in the $n$-length play in the above automaton viewed as a reachability game to one of the final states. The player $\land$'s objective is adversarial to $\lor$.

Given such an alternating finite automaton, we construct a one-counter net $\Hc$ that is history-deterministic if and only if $\Ac$ is empty.

We would like to show that 
\begin{itemize}
    \item $\adam$ wins the letter game on $\Hc$ if there is a word accepted by $\Ac$
    \item $\eve$ wins the letter game on $\Hc$ when $\Ac$ is empty
\end{itemize}

The idea is that we add a state $q_I$, which will be a new initial state of the one-counter net. On $q_I$, $\adam$ can read a special input $\one$, which will increase the counter value while reading this input. To win, $\adam$ would have to read as many $\one$s as the length of an accepting word, and later prove that indeed this word is accepting. While constructing the run, since $\adam$ is the one proving non-emptiness, he will be resolving what we would think of as `existential' choices, here denoted by transitions of $\lor$. The player $\eve$ on other hand, would be resolving the `universal' choices which are the transitions of $\land$. 

A run is constructed by having a copy of states and encoding in the alphabet, the choices to be made by $\lor$ player so that $\adam$ can pick the letter. For the $\land$ player, we want $\eve$ to make the 'universal' choice, so we encode this in the non-determinism. But to ensure $\adam$ ensures eve to `fairly' pick the choices, if $\adam$ decides to read a word state that does not respect the current state that $\eve$ is in, she can move to a state from which there is no non-determinism resolution, and $\eve$ can accept any valid suffix. 

If $\adam$ has reached a final state with counter value exactly 0, then he has displayed that such an accepting run exists, from where $\adam$ reads a $\$$, and can win the letter game. But if this run has reached a final state with a positive value, and $\adam$ reads a $\$$, then $\eve$ can win the letter game. This is done by a gadget described as follows:

From any state, $\adam$ can read a symbol $\$$. If the counter is non-zero, or if the state is non-accepting, $\eve$ has a transition that subtracts 1 from the counter and goes to a state from which any of the two special symbols are accepted: $\heartsuit$ and $\clubsuit$. 
Whereas, if the same symbol $\$$ is read by  $\adam$ when the counter is empty and at an accepting state,then the only transitions enabled have a non-determinism that cannot be resolved by $\eve$, where she would have to predict if $\heartsuit$ or $\clubsuit$ will be seen in the future.
\begin{figure}
    \centering
    \makebox[\textwidth][c]{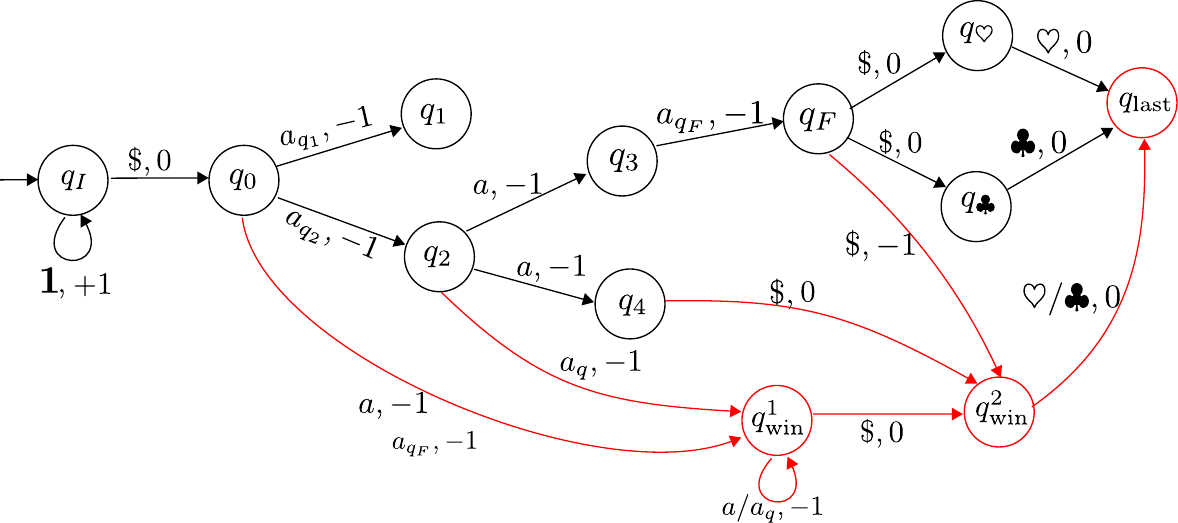}
     \caption{Some of the states in the net $\Hc$ constructed. Here, we assume that $q_0,q_3,q_4\in Q_\land$ whereas $q_1,q_2,q_F\in Q_\lor$. All states are accepting. Observe that at $q_F$, a final state, the non-determinism is history-deterministic only when the counter is non-empty. Also notice that the red arrows/states are when $\adam$ proposes a letter that does not correspond to a feasible transition played by $\eve$ in the AFA word resolution.~\label{fig:PSPACEhard}}
\end{figure} 
We refer the reader to a pictorial representation of the construction in Figure~\ref{fig:PSPACEhard}.

We define the OCN $\Hc = (Q_H, \Sigma, \Delta_H, q_I, F_H)$ where
\begin{itemize}
    \item $Q_H = Q \cup \{q_I\}\cup \{q_{\clubsuit},q_{\heartsuit}\}\cup\{ q_{\win}^1,q_{\win}^2\}\cup \{q_{\last}\}$,
    \item $\Sigma = \{\$,\heartsuit,\clubsuit,\one\}\cup \{a_{q} \mid  q\in Q_{\land}\}\cup \{a\}$, 
    \item $F_H =  Q_H$, where all states defined are final states, and
    \item $\Delta_H$, the set of transitions are the union of the following sets given below:
    \begin{enumerate}
        \item $\{(q_I, \one,1 , q_I), (q_I, \$,0,  q_0)\}$
        \item $\{(q_{\lor}, a_q,-1,q)\mid (q_{\lor}, q)\in \Delta \text{ and }q_{\lor} \in Q_{\lor}\}$
        \item $\{(q_{\lor},a_q,-1, q_{\win}^1)\mid (q_{\lor},q)\notin \Delta \text{ and } q_{\lor}\in  Q_{\lor}\}$
        \item$\{(q_{\lor},a,-1, q_{\win}^1)\mid q_{\lor}\in Q_{\lor}\}$
        \item $\{(q_{\land},a,-1, q)\mid (q_{\land},q)\in \Delta\text{ and }q_{\land}\in Q_{\land}\}$
        \item $\{(q_{\land},a_p,-1, q_{\win}^1)\mid p\in Q\text{ and }q_{\land}\in Q_{\land}\}$
        
        \item $\{(q_F,\$,-1,q_{\win}^2), (q_F, \$, 0, q_{\clubsuit}), (q_F, \$, 0, q_{\heartsuit})\mid q_F\in F\}$
        \item $\{(q,\$,0,q_{\win}^2) \mid q \notin F\}$
        \item $\{(q_{\win}^1, \$, 0, q_{\win}^2)\}$
        \item $\{(q_{\win}^1, b, -1, q_{\win}^1)\mid b = a_q \text{ or }b=a\}$
        \item $\{(q_{\heartsuit}, \heartsuit,0,q_{\last}), (q_{\clubsuit}, \clubsuit,0,q_{\last}), (q_{\win}^2, \heartsuit, 0, q_{\last}), (q_{\win}^2, \clubsuit, 0, q_{\last})\}$. 
    \end{enumerate}
\end{itemize}

The state space consists of a state $q_I$, and a copy of the states $Q$ of $\Ac$.
Moreover, there are states $q_{\win}^1$ and $q_{\win}^2$ from which intuitively, all words that are in the language of $\Hc$ henceforth are accepted. These states will be winning for $\eve$ if she reaches them in the letter game, and hence $\adam$ loses from that state. 
There are also two states $q_{\heartsuit}$ and $q_{\clubsuit}$ which ideally $\adam$ would like $\eve$ to reach in the letter game.
Intuitively, if $\eve$ reaches a copy of $Q$'s final states with 0 in the counter, she looses as she has to pick between the states $q_{\heartsuit}$ and $q_{\clubsuit}$ and this kind of  non-determinism makes it winning for $\adam$. 

At state $q_I$, $\adam$ can read as many $\one$s as he wants, which increases his counter value. Alternatively, there is a symbol $\$$ that he can read when the play moves from $q_I$ to $q_0$.

This signals that $\adam$ is ready to display that he can construct a branch of the run-tree with length exactly the counter value. Once he has finished this run construction, he can again  use $\$$ to signal the end of a run-constructed at the states, producing the non-determinism that makes $\eve$ win iff the counter is non-zero, or if the state is not accepting.

The letters $a_{q}$ enables transitions from states which belong to $\lor$, whereas $a$ enables arbitrary non-deterministic choice consistent with the original automaton's transitions from states belonging to $\land$. 
From $q$, the letter $a_q$ enables transitions that are deterministic.

All states in the net $\Hc$ are final, but the net $\Hc$ is not complete, making it non-universal. 

The transitions are mostly as explained before, but we supply some additional discussion to understand better. At the initial state, $\adam$ could read a counter value and increase arbitrarily. After this he can read $\$$ eventually and enter $q_0$.
On entering $q_0$, the initial state in $\Ac$, $\adam$ resolves choices of the player $\lor$ which can be thought of as `existential' choices whereas and $\eve$ the `universal' choices, the choices of $\land$. This is done by encoding this in the alphabet and non-determinism respectively. While each choice is made, the counter value is decreased as a count-down to the length of the word.

We would like to emphasise here that we add transitions to state $q_{\win}^1$ if $\adam$ reads a letter that does not extend the transitions picked by $\eve$'s run constructed. 
This enables $\eve$ to pick  her non-deterministic transition in such a way that $\adam$ cannot ensure her loss trivially. This is done by adding transitions from every state in the copy of $Q$, to a state $q_{\win}^1$ for letters that are such that they do not extend a transition picked by eve while resolving non-determinism. This is made more precise in the definitions of the transitions. If he does pick a correct run then $\eve$ can also only construct a run, and cannot reach $q_{\win}^1$. 

There are several transitions on $\$$. The idea here is that $\adam$ can read $\$$ once he is at a final state with counter 0. Note that he can also read $\$$ at a non-final state, but then $\eve$ can pick a transition to $q_{\win}^2$, from where $\eve$ can win the letter game. Similarly, 
if he does read $\$$ from a final state when the counter value is not zero, transitions are enabled for $\eve$ that reach $q_{\win}^2$. 
Finally, if the counter value is indeed zero, then the only two transitions enabled make $\eve$ pick in advance for going to state $q_{\heartsuit}$ and $q_{\clubsuit}$. From these two states there is only one transition $\heartsuit$ and $\clubsuit$ respectively.

At $q_{\win}^1$ however, $\eve$ can reach accepting state on any series of letters $a$ or $a_{q}$, whilst decrementing the counter, then seeing a $\$$ and then read $\heartsuit$ or $\clubsuit$ with no non-determinism. 

Note that the language accepted by $\Hc$ is the prefix closure of $$\{\one^n \$ \cdot a_1\cdot a_2\cdot \dots a_k \cdot\$\cdot\{\heartsuit,\clubsuit\}\mid a_i = a \text{ or }a_i = a_q\text{ for some }q\in Q, k\leq n\}$$

\paragraph{Proof of correctness of the construction}
We now proceed to showing that the constructed automaton indeed satisfies the following:
\begin{itemize}
    \item[$\Rightarrow$] $\adam$ wins the letter game on $\Hc$ if there is a word accepted by $\Ac$
    \item[$\Leftarrow$] $\eve$ wins the letter game only when $\Ac$ is empty
\end{itemize}
\paragraph*{$\adam$ wins the letter game on $\Hc$ if $\Ac$ is non-empty:}
We give a strategy for $\adam$ in the letter game:
If $\Ac$ is non-empty, there is some $n$ for which there is an $n$-length word that is accepted by $\Ac$. Without loss of generality, we assume that $n>0$.

$\adam$ reads the letter $\one$  $n$-many times. There is no non-determinism for $\eve$ resolve in this game so far. 
After this, $\adam$ reads $\$$ and the game moves to $q_0$, the copy of the initial state of $\Ac$.

On reaching the copies of the states of $\Ac$, the letter game proceeds following the invariant 
\begin{quote}
    the run constructed so far by $\eve$ in the letter game is at a state $q$, such that there a word accepted by $\Ac$ of length equal to the current counter value from state $q$.
\end{quote}

This is indeed true at the vertex $q_0$, by assumption that there exists a word of length $n$ accepted from $\Ac$, and the counter value has $n$ in it.

Let the current counter value be $k>0$, and the current state be $q$. Then the following $\adam$-strategy preserves the above invariant:
\begin{itemize}
    \item If the play is at a state $q = q_{\lor}\in Q_{\lor}$, then the letters that can be read are of the form $a_q'$ for $q' \in Q_{\land}$.
    Let $(q_{\lor}, q') \in \Delta$ be such that there is a word of length $k-1$ accepted from $q'$. In this case, $\adam$ reads such an $a_{q'}$ as his next transition, leaving $\eve$ with no non-determinism to resolve.
    \item If the play is at a state $q = q_{\land}\in Q_{\land}$, then $\adam$ reads the letter $a$, which lead to an other state $q'$ chosen by $\eve$ such that there is a transition $(q_{\land},q')$ among the transition of $\Ac$. 
    Since there was a transition $(q_{\land},q')$ in the original automaton, it must be the case that there is a word of length $k-1$ accepted from the copy of the state $q'$. No matter how $\eve$ resolves the non-determinism on $a$, she ends up at a state that satisfies the above invariant.
\end{itemize}
Finally, once the game is at a state with an empty counter value, we know from the above invariant that we are at a final state. From there, $\adam$ reads the letter $\$$. Since the counter value is $0$ and the state is a final state, the only transition that are available for $\eve$ are $(q_F,\$,0,q_\heartsuit)$ and $(q_F,\$,0,q_\clubsuit)$, by construction. No matter which of these transitions $\eve$ picks, $\adam$ can respond by picking the other letter not corresponding to the state $\eve$ is at, and win the game. 

\paragraph*{$\eve$ wins the letter game on $\Hc$ if $\Ac$ is empty:}

If the game stays at $q_I$ forever, $\eve$ wins automatically. If not, after reading a sufficient number of $\one$s, $\adam$ chooses letter $\$$. This moves the game to $q_0$. Suppose in this run, $\adam$ enters with $n$ as the counter value. 
Since this automaton accepts no letters by assumption, there is no accepting run of length $n$ from $q$.

Player $\eve$ uses a strategy that follows the following invariant:
\begin{quote}
     if the letter game is at a state $q\in Q$ for $\eve$, then there is no word of length equal to the counter value accepted by $\Ac$ at $q$
\end{quote}
Again, it is true at $q_0$.  If this invariant is true,  when counter value is 0, then the state is not a final state and $\adam$ has to read a $\$$ to ensure that word is still in the language, as he loses immediately otherwise.  But $\eve$ can take then the transition $(q,\$,0,q_{\win}^1)$ to $q_{\win}^1$, and then read $\heartsuit$ or $\clubsuit$. If $\adam$ reads a $\$$ when the counter value is greater than $0$, $\eve$ can again take a transition to $q_{\win}^1$, and win the letter game. 

Now we prove the invariant. 
\begin{itemize}
    \item If the play is at a state $q_{\lor}\in Q_{\lor}$, then no matter what letters $\adam$ proposes,  there is no non-determinism to resolve for $\eve$. If
    \begin{itemize}
        \item $\adam$ reads an $a$ or any $a_{q'}$ such that $q'$ is not adjacent to $q$, then  $\eve$ moves to $q_{\win}^1$;
        \item $\adam$ reads a $\$$, then  $\eve$ moves to $q_{\win}^2$; 
        \item $\adam$ reads $a_{q'}$ with $(q_{\lor},q')$ being a transition in $\Ac$, then the play moves to $q'$, on subtracting $1$ but this is a state from which there is no run of length $k-1$, preserving the invariant.
    \end{itemize}
    \item If the play is at a state $q = q_{\land}\in Q_{\land}$, then the letters that can be read that are of the form $a$, which lead to another state $q'$ such that there is a transition $(q_{\land},q')$ among the transitions of $\Ac$. If $\adam$ reads anything that is of the form $a_{q'}$, then $\eve$ goes to $q_{\win}$. But if not, since there is at least one  transition $(q_{\land},q')$ in the original automaton such that there are no words of length $k-1$ accepting from such a state $q'$, $\eve$ picks that transition in the letter game continuing her play.
\end{itemize}

This shows that $\Hc$ is history-deterministic if and only if $\Ac$ is empty. 
\end{proof}

\section{Appendix for Section~\ref{sec:hdocnTodoca}}
\subsection{Proof of Lemma~\ref{lemma:DetBySSP}}
\begin{proof}
We assume the history-deterministic OCN $\Nc$ is such that it satisfies \ssp. Suppose, for each transition $\delta$, the set $\Sc_{\delta}$ is an eventually periodic set with its period as $P_{\delta}$, with the maximum number in the pre-periodic part as $I_{\delta}$. Let $I = \max \{I_{\delta}\}_{\delta \in \Delta}$, and $P = \prod_{\delta \in \Delta} \{P_{\delta}\}$. Thus, each set $\Sc_{\delta}$ for each transition $\delta$ can be expressed an eventually periodic set with period $P$, and all numbers in the pre-periodic part at most $I$. 

 We first construct a non-deterministic one-counter automata $\Bc$ that accepts the same language as $\Nc$. Intuitively, the automaton $\Bc$ is constructed such that the state space of the automaton stores in its memory, the period and the initial block of the semi-linear sets. The idea is that this automaton's runs would be in bijection with the runs in the net $\Nc$ that take only $\gut$ transitions.  However, the counter values are `scaled down' to only remember how many periods have passed, while counter value 0 indicates that the counter value in the original run would have been at most $I+P$. The exact value of the counter value in a run of $\Nc$ can be inferred as a function of the state space. Formally, $\Bc = (Q',\Sigma,\Delta', q_0', F')$, where the set of states $Q'$  contains two types of states. One which encodes the initial block along with the current state and the other which encodes the information corresponding to the repeating block.

More formally, it is given by  $$Q' = \{\seq{q,m} \mid q\in Q \ \text{and} \ 0 \leq m \leq I   \} \cup \{\sqr{q,n} \mid q\in Q \ \text{and} \ 1 \leq n \leq P  \}.$$ The set of transitions $\Delta'$ is the union of the following sets: 
\begin{enumerate}
    \item $\{(\seq{q,i},\zerotest, a,0,\sqr{q',j}) \mid (q,i) \xrightarrow{a,d} (q',j+I) \ \text{is a} \ \gut \ \text{transition in} \ \Nc \}$
    \item $\{(\seq{q,i},\zerotest, a,0, \seq{q',j}) \mid (q,i) \xrightarrow{a,d} (q',j) \ \text{is a} \ \gut \ \text{transition in} \ \Nc\}$
    \item $\{ (\sqr{q,i},\zerotest, a,0,\seq{q,j}) \mid (q,i+I), \xrightarrow{a,d} (q',j) \ \text{is a} \ \gut \ \text{transition in} \ \Nc \}$
    \item $\{(\sqr{q,i}, \notest, a,0, \sqr{q',j})\mid (q,i+I) \xrightarrow{a,d} (q',j+I) \ \text{is a} \ \gut \ \text{transition in} \ \Nc\}$
    \item $\{(\sqr{q,i} \notest, a,1 ,\sqr{q',j}) \mid (q,i+I) \xrightarrow{a,d}(q',j+I+P) \ \text{is a} \ \gut \ \text{transition in} \ \Nc\}$ 
    \item $\{(\sqr{q,i}, \notzerotest, a,-1, \sqr{q',j}) \mid (q,i+I+P) \xrightarrow{a,d} (q',j+L) \ \text{is a} \ \gut \ \text{transition in} \ \Nc \}$
\end{enumerate}
 Here $X$ can be any symbol in $\{ \zerotest,\notzerotest\}$. The initial state is $q_0' = \seq{q_0,0}$, and the set of final states $F'$ is given by $F' = \{\seq{q,m} \mid q\in F \ \text{and} \ 0 \leq m \leq I  \} \cup \{\sqr{q,n} \mid q\in F \ \text{and} \ 1 \leq n \leq P  \}$. We note that any run in $\Bc$ starting at $q_0' = \seq{q_0,0}$ only reaches a state $\seq{q,i}$ with counter value 0, where $q \in Q, i < I$. This is because all transitions that go to such a state $\seq{q,i}$ test for $0$. We define the set $\Cc'(\Bc) = \Cc(\Bc) \setminus \{(\seq{q,i},p) \mid p > 0, \seq{q,i} \in Q'\}$, as a subset of the configuration of $\Bc$, which we call \emph{valid configurations} of $\Bc$ . Any configuration of $\Bc$ that is not valid cannot be reached.

We show that the runs in the automaton $\Bc$ are in bijection with the runs in $\Nc$ that take only $\gut$ transitions. First, we define a bijection between the valid configuration of $\Bc$ and the configurations of $\Nc$, given by $\Psi: \Cc'(\Bc) \rightarrow \Cc(\Nc)$.  
\[ \Psi(\alpha) = \begin{cases} 
          (q,i) & \text{if}\ \alpha=(\seq{q,i},0), \ q\in Q\ \text{and} \ 0\leq i \leq I\\
          (q,i+I+c\cdot P) & \text{if}\ \alpha = (\sqr{q,i},c), \ q\in Q \ \text{and} \ 1\leq i \leq p \ \text{and} \ c\geq 0 \\
       \end{cases}
    \]
The function $\Psi$ is a bijection, as can be seen by the function $\Theta:\Cc(\Nc) \rightarrow \Cc'(\Bc)$, which is the inverse of $\Psi$.
\[ \Theta((q,i)) = \begin{cases} 
          (\seq{q,i},0) & \text{where}\ \ q\in Q\ \text{and} \ 0\leq i \leq I\\
          (\sqr{q,j},c) & \text{where}\ i = I + j+c\cdot P, \ q\in Q \ \text{and} \ 1 \leq j \leq p \ \text{and} \ c\geq 0 \\
       \end{cases}
       \]
We note that the transitions in $\Bc$ are in bijection with $\gut$ transitions in $\Nc$, as $\alpha \xrightarrow{a} \beta$ is a transition in $\Delta'$ if and only if $\Psi(\alpha) \xrightarrow{a} \Psi(\beta)$ is a $\gut$ transition in $\Nc$, by construction of $\Delta'$. Thus, we can extend this bijection to get an one-to-one correspondence between runs in $\Bc$ and runs that take only $\gut$ transitions in $\Nc$. As both $\Psi$ and $\Theta$ preserves acceptance of configurations, we get that $\Lc(\Bc) \subseteq \Lc(\Nc)$.

For any accepting word $w$, any run of $\Bc$ on the word $w$ corresponds to a run of $\Nc$ on $w$ that takes only $\gut$ transitions. By Lemma~\ref{lemma:guttransitions}, such a run on $\Nc$ must exist as the automata is history-deterministic, and it must end in an accepting state of $\Nc$, which implies the corresponding run in $\Bc$ must be accepting as well. Thus, any run of $\Bc$ on an accepting word in $\Nc$ must be an accepting run in $\Bc$, showing $\Lc(\Nc) \subseteq \Lc(\Bc)$, and hence $\Lc(\Nc) = \Lc(\Bc)$. 

Now, the deterministic one-counter automaton $\Dc$, obtained by simply deleting any minimal set of transitions from $\Bc$ to make it deterministic would accept the same language as $\Nc$. This follows from the above paragraph, as any run of $\Bc$ on an accepting word (in $\Nc$ or in $\Bc$) is accepting.  
\end{proof}

\subsection{Proof of Lemma~\ref{lemma:netSSP}}
\begin{proof}
Let $\Nc =(Q, \Sigma, \Delta, q_0,F)$, and let $\gamma = (p,a,e,p')$ be a transition in $\Delta$.
Note that we use $e$ here to denote the counter-change of $\gamma$ so as to not confuse ourselves with $d$ which we will use to denote counter-change of other transitions.

We would like to show that the set $$\Sc_{\gamma} = \{k | (p,k) \xrightarrow{a,e} (p',k+e) \ \text{is a} \ \gut \ \text{transition}\}$$ is semilinear. Note that $(p,k) \xrightarrow{a} (p',k+e)$ is a $\gut$ transition if and only if $\eve$ wins $G_1$ from $((p,k),(p,k))$ with the following restrictions in the first round of the play: 
\begin{enumerate}
    \item If $\adam$ picked $a$, then $\eve$ must pick $\gamma$ as the transition on $(p,k)$, resulting in the transition $(p,k) \xrightarrow{a,e} (p',k+e)$. If $\eve$ is unable to pick $\gamma$ (due to $k+e$ being negative), then $\eve$ loses immediately.
\end{enumerate}

We shall construct a simulation game $\simg{\Mc_{\gamma},(s,k)}{\Mc'_{\gamma},(s',k)} $ between nets $\Mc_{\gamma}$ and $\Mc'_{\gamma}$, where $s$ and $s'$ are states in $\Mc_{\gamma}$ and $\Mc'_{\gamma}$ respectively, such that $\eve$ wins the simulation game $\simg{\Mc_{\gamma},(s,k)}{\Mc'_{\gamma},(s',k)}$ if and only if $(p,k) \xrightarrow{a,e} (p',k')$ is a $\gut$ transition in $\Nc$.
The construction of $\Mc_{\gamma}$ and $\Mc'_{\gamma}$ is similar to that of the nets $\Mc$ and $\Mc'$ in Lemma ~\ref{lemma:G1Simulation}, while slightly altering the initial transitions so as to match the game $G_1$ with the above restriction for $\eve$ in the first round of the play.  

\paragraph*{Construction of $\Mc'_{\gamma}$} The net $\Mc'_{\gamma}$ is essentially the net $\Nc$, along with two additional states $s'$ and $p_{\#}$, and an additional letter $\#$. The state $s'$ has exactly one outgoing $a$-transition, $(s',a,e,p')$. Recall that $p'$ is the target state of the transition $\gamma$. This is to  capture $\eve$ only being able to take $\gamma$ on reading $a$ in first round of $G_1$. From each state $q$ which was accepting in $\Nc$, we add the transition $(q,a,0,q_{\#})$ to $q_{\#}$, and we add self loops on $q_{\#}$ at $\Sigma_{\#} = \Sigma \cup \#$. Formally, let $\Mc'_{\gamma} = (Q_{\gamma}',\Sigma_{\#},\Delta',q_0,F')$, where:
\begin{itemize}
    \item the set of states is $Q' = Q \cup \{s', p_{\#}\}$,
    \item the alphabet $\Sigma_{\#} = \Sigma \cup {\#}$,
    \item the set of accepting states $F'$ is the singleton set $\{q_{\#}\}$, and
    \item the set of transitions $\Delta'$ is the union of the following sets:\begin{enumerate}
        \item  $\Delta$, the set of transitions in $\Nc$
        \item  $\{(s',a,e,p')\}$
        \item $\{(s',b,d,q)\mid (p,b,d,q) \in \Delta, b \in \Sigma \setminus \{a\}\}$
        \item $\{ (q,\#,0,q_{\#})\mid q \in F \}$
        \item $\{(q_{\#},b,0,q_{\#})\mid b \in \Sigma_{\#}\}$.
    \end{enumerate}
\end{itemize}
\paragraph*{Construction of $\Mc_{\gamma}$}
We construct $\Mc_{\gamma}$ to contain an initial state $s$, along with approximately $(|\Sigma| + 1)$ copies of the states in $\Nc$. These copies store the previous read letter in the state space of OCN. The transitions of $\Nc$ mimic a `one-step lag' in $\Mc_{\gamma}$. On reading an alphabet, the automaton $\Mc_{\gamma}$ takes a transition in the projection to $\Nc$ in the first component, based on the letter stored in the second component. Note that this transition in the $\Nc$ component is not based on the current letter being read. However, this current letter is now stored in the second component in the destination state. Formally, $\Mc_{\gamma} = (Q_{\gamma}, \Sigma_{\#},\Delta_{\gamma},s,F_{\gamma})$, where
\begin{itemize}
    \item $Q_{\gamma} = (Q \times \Sigma) \cup \{s\} \cup (F \times \{\#\})$,
    \item $s$ is the initial state,
    \item $F_{\gamma}$, the set of final states is $F \times \{\#\}$, and
    \item the set of transitions $\Delta_{\gamma}$ is the union of the following sets:
    \begin{itemize}
        \item $\{s \xrightarrow{b,0} (p,b)\mid b \in \Sigma\}$
        \item $\{((q,b),c,d,(q',c)) \mid (q,b,d,q') \in \Delta\}$, and
        \item $\{ ((q,b),\#,d,(q',\#)) \mid (q,b,d,q') \in \Delta \ \text{and} \ q'\in F\}$
    \end{itemize}
\end{itemize}
We claim that $(p,k) \xrightarrow{a,e} (p,k+e)$ is a $\gut$ transition if and only if $\eve$ wins the simulation game $\simg{\Mc_{\gamma},(s,k)}{\Mc'_{\gamma},(s',k)}$. Note that $(p,k) \xrightarrow{a,e} (p,k+e)$ is a $\gut$ transition if and only if $\eve$ wins the game $G_1$ from $((p,k),(p,k))$ with the restriction 1 in the first round mentioned above. 

Using an argument almost identical to that of in  Lemma~\ref{lemma:G1Simulation}, we can show that $\eve$ wins the game $G_1$ with the restriction 1 in the first round if and only if $\eve$ wins the simulation game $\simg{\Mc_{\gamma},(s,k)}{\Mc'_{\gamma},(s',k)}$. As the set of such $k$'s is semilinear by Theorem ~\ref{theorem:outsource}, we get that $\Sc_{\gamma}$ is semilinear as well.
\end{proof}

\subsection{Proof of Lemma~\ref{lemma:HDinclusion}}
\begin{proof}
Let $\Hc_A = (Q_A,\Sigma,\Delta_A,q^0_A,F_A)$ and $\Hc_B = (Q_B,\Sigma,\Delta_B,q^0_B,F_B)$ be two history-deterministic OCNs. Note that we can assume that $\Hc_B$ accepts at least one word that is not in $\Hc_A$. This can be done because we can always consider the following OCN $\Mc$ instead of $\Hc_B$, where  for some symbol $\Cat \notin \Sigma$, we define the net $$\Mc = (Q'_{B}, \Sigma \cup \{\Cat\}, \Delta'_{B},q^0_B,F'_{B})$$ where
\begin{itemize}
    \item the set of states $Q'_{B} = Q_B \cup \{q_{*}\}$, for $q_{*} \notin Q$
    \item the set of transitions $\Delta'_{B} = \Delta_B \cup \{(q_0 ,\Cat, 0, q_{*} )$, and
    \item the final states $F'_B = F_B \cup \{q_{*}\}$.
\end{itemize}    
The OCN $\Mc$ is history-deterministic and the language accepted by the net $\Mc$ is $\Lc(\Mc) = \Lc(\Hc_B) \cup \{\Cat\}$. Note that $\Lc(\Hc_A) \subseteq \Lc(\Hc_B)$ if and only if $\Lc(\Hc_A) \subsetneq \Lc(\Mc)$. 

Henceforth, we will only consider such history-deterministic OCNs $\Hc_A$, $\Hc_B$ where $\Hc_B$ accepts a word $\Cat$ which is not accepted by $\Hc_A$. We construct an OCN $\Nc$, which is history-deterministic if and only if the language inclusion $\Lc(\Hc_A) \subset \Lc(\Hc_B)$ holds. Let $\Nc = (Q_{N},\Sigma \cup \{\heartsuit \},\Delta_{N},q^0_{N},F_{N})$, where
\begin{itemize}
    \item the set of states $Q_N = Q_A \cup Q_B \cup \{q^0_N\}$,
    \item  the set of transitions $\Delta_{N} = \Delta_A \cup \Delta'_{B} \cup \{q^0_N ,\heartsuit,0, q^0_A\} \cup \{(q^0_N,\heartsuit,0, q^0_B)\}$,  and 
    \item the final states $F_N = F_A \cup F_B$.
\end{itemize}
Suppose $\Nc$ constructed as above is history-deterministic. Then, there is a winning strategy for $\eve$ in the letter game which on reading $\heartsuit$ from $q_N^0$, chooses the transition $q^0_N \xrightarrow{\heartsuit,0} q^0_A$ or $q^0_N \xrightarrow{\heartsuit,0} q^0_B$. By our assumption, the language of $\Hc_B$ contains a word which is not in the language of $\Hc_A$. If $\eve$ did not choose the transition  to $q^0_B$, then $\eve$ looses the letter game as $\adam$ can give as input this word not in the language of $\Hc_A$, but in $\Hc_B$. 
Therefore, any winning strategy of $\eve$ must choose the transition to the copy of $\Hc_B$ from $q^0_N$. This implies that for any word accepted by $\Hc_A$, the resolver has a strategy henceforth to produce a sequence of transitions in $\Hc_B$ that leads to an accepting state, implying that $\Lc(\Hc_A) \subset \Lc(\Hc_B)$.

If $\Lc(\Hc_A) \subset \Lc(\Hc_B)$, the resolver only needs to deal with non-determinism in the  first step. Choosing the transition $q^0_N \xrightarrow{\heartsuit,0} q^0_B$ at $q^0_N$ ensures $\eve$ wins the letter game on $\Nc$, since $\Hc_B$ is history-deterministic.

Since the obtained OCN $\Nc$ has size linear in $\Hc_A$ and $\Hc_B$, we can check history-determinism of $\Nc$ to decide whether the inclusion $\Lc(\Hc_A) \subseteq \Lc(\Hc_B)$ holds, in $\PSPACE$. 
\end{proof}
\subsection{Proof of Lemma~\ref{lemma:universality}}
\begin{proof}
Let us first show that given two history-deterministic one-counter nets $\Mc$ and $\Nc$,
\begin{itemize}
    \item($\Rightarrow$) $\Nc$ simulates $\Mc$ then  $\Lc(\Mc) \subseteq \Lc(\Nc)$ and
    \item($\Leftarrow$) $\Lc(\Mc) \subseteq \Lc(\Nc)$, then $\Nc$ simulates $\Mc$.
\end{itemize}

$\Rightarrow :$ Suppose $\Nc$ simulates $\Mc$. Then, over any accepting word $w \in \Lc(\Mc)$, there is an accepting run $\rho$ of it in $\Mc$, and as $\Nc$ simulates $\Mc$, the run corresponding to $\rho$ in $\Nc$ must be accepting as well. Thus, $w \in \Lc(\Nc)$. 

$\Leftarrow:$ Suppose $\Lc(\Mc) \subseteq \Lc(\Nc)$. Then, the player $\eve$ wins the simulation game $\simg{\Mc}{\Nc}$: The player $\eve$ can simply ignore $\adam$'s run in $\Mc$, and play according to her letter game strategy in $\Nc$. If $\adam$'s token is at an accepting state at the end of any round in the game after having read $w$, then $w \in \Lc(\Mc) \subseteq \Lc(\Nc)$, which implies $\eve$'s token must be on an accepting state as well, as $\eve$ was playing according to her letter game strategy in $\Nc$. 

 Note that $\Sigma^{*}$ can be given by a one state finite state automata. Thus, the problem of universality reduces to checking for simulation between the input net $\Mc$ and a one-state finite automaton, which is in $\P$ from the results of Kucera (Lemma 2,~\cite{Kuc00}).
 \end{proof}

\section{Appendix for Section~\ref{sec:otherResults}}
\subsection{Proof of Lemma~\ref{lemma:succinctHard}}
\begin{proof}
We first describe what reachability games on succinct one-counter nets (SOCN) are~\cite{Hun15,JOS18}. The arena of a SOCN-reachability game $\Gc(\Nc)$ consists of a one-counter net $\Nc = (Q,\{a\},\Delta,q_0,F)$ over an unary alphabet. However,
the states are partitioned as $Q = Q_{\lor} \uplus Q_{\land}$ among the players $\lor$ and $\land$ respectively, such that any transition is only between a $\land$ and $\lor$ state or a $\lor$ and $\land$ state. If the play is at a $\lor$ ($
\land$) state, then the player $\lor$ ($\land$) chooses a transition at that state to go to the next configuration. Moreover, these transitions are allowed to increment and decrement the counter more than 1, and can be any arbitrary value $d\in \mathbb{N}_{\geq 0}$, where $d$ is given in binary. The starting state of the game is the configuration $(q_0,0)$. We consider, for the $\lor$ player. the problem of reachability to a configuration $(q_{F},0)$ for some $q_F\in F$. 

The version of SOCN-reachability game we have defined is slightly different from the version considered in the theorem statement of Hunter~\cite{Hun15}, but one can see that these can be shown to be inter-reducible~\cite{JOS18}.

The decision problem, which we call {\scshape Socn-Game} is 
\begin{quote}
    \textit{Given:} A SOCN-reachability game $\Gc(\Nc)$ such that the counter change in transitions are encoded in binary.\\
    \textit{Question:} Does there exist a winning strategy for $\lor$ in $\Gc(\Nc)$.
\end{quote}
The above problem is known to be $\EXPSPACE$-complete~\cite{Hun15,JOS18}, and we shall show the $\EXPSPACE$-hardness for deciding history-determinism by reducing from the above problem. 

Given a SOCN-reachability game $\Gc(\Nc)$, we construct a net $\Mc$ which is history-deterministic if and only if the $\land$ player wins the reachability game. The net $\Mc$ is designed such that a transcription of any play on the succinct one-counter game is an accepting word. 

The following introduction to the construction of $\Mc$ is best read referring to the rigorous construction that follows it.

\textit{Describing the states of the automaton: }
To be able to recreate the  transcript of a play of the SOCN-reachability game in the letter game of $\Mc$, the state space contains 
\begin{enumerate}
    \item A `main' copy of the states of $\Nc$, in which the game would stay if $\adam$ had a strategy to win. These are used to maintain a run on $\Nc$.
    \item A resolution copy of states of $Q_\land$ for $\eve$ to stay in, until the non-determinism chosen by $\eve$ is faithfully re-played by $\adam$.
    \item A `copy' of $\Nc$ to ensure that any transcript of a run that is encoded indeed is a real run on the one-counter automaton $\Nc$. For each state $q$ of $\Nc$, we add states in the set $Q_{\win}$, in this copy. These states are called so, because from here, $\eve$ can win the letter game, and there is no non-determinism to resolve. 
    \item A few extra states, to preserve the winner in the succinct reachability game. These ensure that if a play of the letter game in $\Mc$ corresponds to a winning transcript of $\land$ player, then $\eve$ can go to a state with no non-determinism. However, there are also states $q_{\heartsuit}$ and $q_{\clubsuit}$ which are the states that $\adam$ would be able to make $\eve$ reach if there is a winning play for the $\lor$ player, from where $\eve$ loses the letter game.
\end{enumerate}

\textit{Describing the alphabet of the automaton: }The alphabet contains an input $a_{\delta}$ for each transition from a $\lor$ state. This is to make sure $\adam$, who picks the letter in the letter game is in charge of picking the next transition for the player $\lor$ in the game. There is also a single letter $a$ which creates non-determinism such that in the letter game, $\eve$ can resolve the non-determinism by picking the next transition. 
Later on, to ensure a fair play, $\adam$ is forced to confirm this non-determinism by reading the letter $a_\delta$ corresponding to the transition that $\eve$ had chosen. If $\adam$ picks a different transition, then the play moves to the component which we call $Q_{\win}$  from which all transcriptions of sequences that are `valid' in the original automaton are accepting, and there is no non-determinism, making it winning for $\eve$ in the letter game.

There are also some special symbols used in the following way: 
\begin{itemize}
    \item $\$$ is used by $\adam$ to indicate he is at a state $q_F \in F$ with $0$ in the counter. If he reads it anywhere else, $\eve$ wins the letter game trivially. 
    \item $\heartsuit$ and $\clubsuit$ are to be read immediately after $\$$, but the states $q_{\heartsuit}$ can only read $\heartsuit$ and $q_{\clubsuit}$ can only read $\clubsuit$.
\end{itemize}

\textit{Language accepted by the automaton: }The language accepted by the automaton would be any prefix of the words of the form 
$a_{\delta_0} a a_{\delta_1}a_{\delta_2}aa_{\delta_3}a_{\delta_4}\dots aa_{\delta_k}a_{\delta_{k+1}}$ followed by a $\$$ then one of $\heartsuit$ or $\clubsuit$ such that $\delta_0\delta_1\delta_2\dots \delta_k\delta_{k+1}$ is a valid sequence of transitions in $\Nc$. Note that the above words essentially is a sequence of letters of the form $a_{\delta}$ transitions, but with $a$s read before reading a `transition'  from a $\land$ state. 

\textit{Transitions of the automaton: }Now we intuitively describe the transitions of the automaton that accept such a language above. 
They are constructed so that reading each $a_{\delta}$  increments or decrements the counter by the same amount prescribed by $\delta$.

From the `main' copy of the state belonging to $\lor$, the transitions are such that if there was a transition from this state in $\Nc$, a copy of the transition also is added. Moreover, there are specific letters that one can read to go to the next state prescribed by the transition. 
For $\land$ however, these transitions are labelled by $a$, and perform no increment or decrement. They instead take the run to a temporary copy of of the state, from which $\adam$ can read the $a_\delta$ corresponding to the delta that $\eve$ had chosen. If the letter does not correspond to the same transition that $\eve$ had picked, then the run moves to the states $Q_{\win}$, from where $\eve$ wins the letter game.

We now define $\Mc$ formally, given a game on $\Nc$.  We first  describe the  $Q_{\win}$, a sub-net, which will form a part of the main automaton $\Mc$.
This is done so as to make sure the main definition has less clutter.
\paragraph*{The $Q_{\win}$ gadget:}
We describe the set of states $Q_{\win}$ and the transitions associated with it in more detail. Recall that this part of the net is to mainly ensure that only valid transcripts of a run on $\Nc$ are the ones that are accepted. For this, we essentially take one copy of $Q$ and one more copy of $Q_\land$.
We add one more state to recognise that the transcript has ended if a $\$$ has been read. 

We therefore have,
$Q_{\win} = Q \uplus Q_{\land} \uplus \{q_{\$}\}$. We call states $q_{\win}$ with the subscript if $q$ is from the copy of $Q$ and $q_{\twin}$ for a copy of $q\in Q_{\land}$. 
There are the following transitions: 
\begin{itemize}
    \item $(q_{\win}, a_{\delta}, d,p_{\win})$ for all $q\in Q_\lor$ and $\delta = (q,d, p)$, a transition in $\Nc$;
    \item $(p_{\win}, a, 0, p_{\twin})$ for  all $p\in Q_\land$
    \item $(p_{\twin}, a_{\delta}, d, q_{\win})$ for all $p\in Q_\land$ and $\delta = (p,d,q)$, a transition in $\Nc$;
    \item $(q_{\win},\$,0,q_\$)$ for all $q\in Q$.
\end{itemize}

We describe the automaton $\Mc = (Q', \Sigma',\Delta' ,q_0,F')$ where, 
\begin{itemize}
    \item the set of states $Q' = Q\cup \{q_\delta \mid \delta \in \Delta\} \cup \{q_{\heartsuit},q_{\clubsuit}, q_{\last}\}$, and the states $Q_{\win}$
    \item the alphabet set is $\Sigma = \{\$,\heartsuit,\clubsuit\}\cup \{a_{\delta} \mid  \delta\in \Delta\}\cup \{a\}$ 
    \item the start state is $q_0$, which is copy of the start state at $Q$, and
    \item all states of $Q'$ are final states, including $Q_\win$
\end{itemize}
The set of transitions are the union of the sets of transitions below, along with those of $Q_{\win}$ and some defined from the states to $Q_{\win}$ and one transitions back from it to $q_{\last}$ to end the word.
    \begin{enumerate}
        \item $\{(q_{\lor}, a_{q},d,q)\mid (q_{\lor}, d,q)\in \Delta \text{ and }q_{\lor} \in Q_{\lor}\}$
        \item $\{(p,a,0, p_{\delta})\mid \delta = (p,d,q)\in \Delta\text{ and }p\in Q_{\land}\}$
        \item $\{(p_{\delta},a_{\delta},d, q)\mid \delta = (p,d,q)\in \Delta\text{ and }p\in Q_{\land}\}$
        \item $\{(p_{\delta},a_{\delta'},d', {q'}_{\win})\mid \delta = (p,d,q) \text{ and } \delta' = (p,d',q'),  \delta \neq \delta'\text{ and }p\in Q_{\land}\}$
        \item $\{(q_F,\$,-1,q_{\$}), (q_F, \$, 0, q_{\clubsuit}), (q_F, \$, 0, q_{\heartsuit})\mid q_F\in F\}$
        \item $\{(q,\$,0,q_{\$}) \mid q \in Q \setminus F\}$
        \item $\{(q_{\heartsuit}, \heartsuit,0,q_{\last}), (q_{\clubsuit}, \clubsuit,0,q_{\last})\}$ 
        \item From $Q_{\win}$ we have $(q_{\$}, \heartsuit, 0, q_{\last}), (q_{\$}, \clubsuit, 0, q_{\last})\}$ 
    \end{enumerate}
\end{proof}
\begin{figure}
    \centering
    \makebox[\textwidth][c]{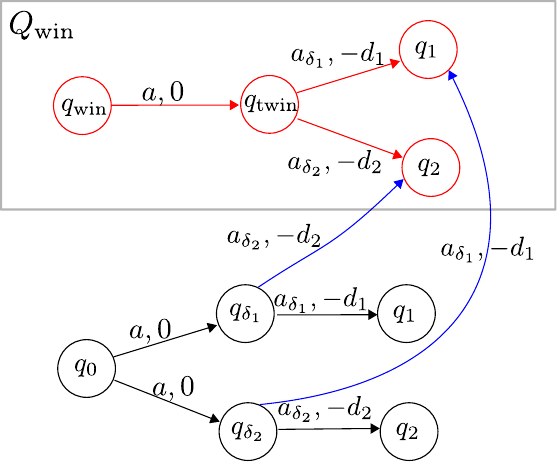}%
     \caption{A snapshot of the $Q_\win$ gadget with the states in $Q_\win$ in red, with the `main' vertices which are not in $Q_\win$ in black. If the non-determinism in the run of the succinct OCN game's imitation here is incorrectly resolved anywhere then $\eve$ can take transitions to the red states from which there is no non-determinism.\label{fig:QwinGadget}}
\end{figure} 

Observe that transitions described in items 4, 5, 6 involve transitions to $Q_{\win}$. In item 4., note that both the transitions $\delta$ and $\delta'$ should be from the same state for this transition to exist. 

We now proceed to showing that the above construction is such that
\begin{itemize}
    \item[$\Rightarrow$] If the reachability game on succinct net $\Nc$ is won by $\lor$, then $\Mc$ constructed is not history-deterministic and $\adam$ can win the letter game
    \item[$\Leftarrow$] If the reachability game on $\Nc$ is won by $\land$, then $\eve$ has a strategy to win the letter game, and $\Mc$ is history-deterministic
\end{itemize}

\paragraph*{Winning for $\lor$ reachability game implies winning for $\adam$ in letter game}
Here we prescribe $\adam$'s strategy which is essentially to follow the reachability strategy of $\lor$. 
When the game is at a state
\begin{itemize}
    \item  $q\in Q_\lor$ with counter value $k$, then he picks $a_{\delta}$ such that the transition $\delta$ ensures $(q,k)\xrightarrow{d}(q',k')$ is a winning transition prescribed by a fixed strategy in the game. This leads to no non-determinism.
    \item $q\in Q_\land$ with counter value $k$, then he picks $a$, but no matter which transition in the game $\eve$ picks, she reaches a configuration that is still winning for $\lor$ in the succinct game, this is because any $(q,k)\xrightarrow{d}(q',k')$ was a winning transition in the game.
\end{itemize}
This strategy maintains an invariant that if the play of a letter game was at a configuration such that the corresponding configuration in $\Nc$  was winning for $\lor$, then $\adam$ can ensure that in the letter game, any transition that $\eve$ picks also leads to a configuration where this is true.

Since the game is winning from $(q_0,0) \in \Cc(\Nc)$, and as $\adam$ is mimicking a winning strategy in the letter game, we know that eventually $\adam$ would reach a state $q_F$ that is in $F$ with a counter value $0$. Once he reaches such a configuration, he reads $\$$. This ensures that only the transitions $(q_F,\$,0,q_{\heartsuit})$ or $(q_F,\$,0,q_{\clubsuit})$ are enabled. From here, whichever transition $\eve$ pics, he reads the other letter corresponding to it to win.

\paragraph*{Winning for $\land$ in reachability game implies $\eve$ wins letter game}
The player $\eve$'s strategy in the letter game is to mimic the strategy of $\land$ in the underlying reachability game. Let us fix such a winning strategy for $\land$ in the reachability game. We shall show that this strategy maintains the invariant that if the play of a letter game was at a
configuration $(q,k)$ where $q\in Q$ is in the main copy of $\Nc$, then the corresponding configuration is losing for $\lor$ in the reachability game.

When the game is at a configuration $(q,k)$ in the net $\Mc$ she does the following:
\begin{itemize}
    \item for $q\in Q_\lor$ with counter value $k$, if $\adam$ picks $a_{\delta}$ such that $\delta$ ensures $(q,k)\xrightarrow{d}(q',k')$ was a transition, then $\eve$ needs to make no decisions. If not, the game proceeds to the copy $Q_{\win}$ and we can show that any sequence of runs that has a valid run is winning for the player $\eve$ anyway. 
    If the play instead stays in the `main' copy, then the new configuration reached maintains the invariant. 
    \item for $q\in Q_\land$ with counter value $k$, if $\adam$ picks $a$, then $\eve$ picks a the transitions $\delta$ corresponding to the configuration prescribed by her winning strategy. 
    Note that later if $\adam$ does not pick $a_{\delta}$, then $\eve$ wins by going to $Q_{\win}$. 
    Observe that  if the transition prescribed is such that the run goes below $0$, that run turns out to be not accepting because of the gadget described. This means $\adam$ loses the letter game again immediately. 
    If the transition still stays above 0, then the configuration proceeds to a $(q',k')$, prescribed by $\land$'s strategy in the reachability game to avoid visiting $(q_F,0)$. The position $(q',k')$ is such that there is no winning strategy for $\lor$ from it.
\end{itemize}

Observe that from any state in $Q_{\win}$, $\eve$ wins the letter game as there is no non-determinism to resolve, and any play that does not go to $Q_{\win}$ corresponds to a play in the reachability game where $\eve$'s resolution of non-determinism in the letter game corresponds to the choices of the $\land$ player in the reachability game. But note that from the invariant above, $\eve$ never reaches $(q_F,0)$ in such a play, and therefore she never has to resolve the non-determinism that occurs at $q_F$ on reading $\$$ to states $q_{\heartsuit}$ and $q_{\clubsuit}$. Any infinite play is also won by $\eve$, and therefore she wins the letter game on $\Mc$.

\subsection{Proof of Theorem~\ref{thm:UndecOCA}}
\begin{proof} 
Consider the following problem. 
\begin{quote} 
\emph{DOCA Inclusion:} Given two deterministic one-counter automata $\Ac$ and $\Bc$, is $\Lc(\Ac) \subseteq \Lc(\Bc)$?
\end{quote} 
We reduce DOCA inclusion to the problem of deciding whether a given one-counter automaton is history-deterministic. Valiant, in Section 5.1 of his thesis~\cite{Val73} shows that the DOCA inclusion problem is undecidable~\cite{VP75}. This shows that the problem of deciding if a given OCA is history-deterministic is also undecidable. 

Note that this construction is similar to the one in the construction of Lemma~\ref{lemma:HDinclusion}. We nevertheless re-state it here, for completeness.

Given DOCA $\Ac = (Q_A,\Sigma,\Delta_A,q^0_A,F_A)$ and $\Bc = (Q_B,\Sigma,\Delta_B,q^0_B,F_B)$, consider the automaton $\Bc'$, for some symbol $\Cat \notin \Sigma$, defined as the tuple $$\Bc' = (Q'_{B}, \Sigma \cup \{\Cat\}, \Delta'_{B},q^0_B,F'_{B}),$$where for a new element $q_{*}$ not in $Q$, 
\begin{itemize}
    \item the set of states $Q'_{B} = Q_B \cup \{q_{*}\}$,
    \item the set of transitions $\Delta'_{B} = \Delta_B \cup\{(q_0,\zerotest ,\Cat, 0, q_{*} ), (q_0,\notzerotest ,\Cat, 0, q_{*} )\}$, and
    \item the final states $F'_B = F_B \cup \{q_{*}\}$.
\end{itemize}    
The automaton $\Bc'$ is deterministic, and the language accepted by $\Bc'$, is $\Lc(\Bc') = \Lc(\Bc) \cup \{\Cat\}$. Note that $\Lc(\Ac) \subseteq \Lc(\Bc)$ if and only if $\Lc(\Ac) \subsetneq \Lc(\Bc')$. 

We now describe the automaton $\Hc$, which is history-deterministic if and only if $L(A) \subset L(B')$. Let $\Hc = (Q_{H},\Sigma \cup \{\Cat,\heartsuit \},\Delta_{H},q^0_{H},F_{H})$, where
\begin{itemize}
    \item the set of states $Q_H = Q_A \cup Q_B' \cup \{q^0_H\}$,
    \item  the set of transitions $\Delta_{H} = \Delta_A \cup \Delta'_{B} \cup \{(q^0_H,\zerotest,\heartsuit,0,q^0_A),(q^0_H, \zerotest,\heartsuit,0,q^0_B)\}$
    \item the set of final states $F_H = F_A \cup F'_B$.
\end{itemize}
Suppose $\Hc$ constructed as above is history-deterministic. Then, there is resolver on $q_H^0$, that chooses the transition $(q^0_H,\zerotest,\heartsuit,0,q^0_A)$ or $(q^0_H, \zerotest,\heartsuit,0,q^0_B)$. Note that $q^H_0$ is the only state where non-determinism occurs, on $\heartsuit$. Since the language of $\Bc'$ contains the word $\Cat$, which is not in the language of $\Ac$, the resolver must choose the transition $(q^0_H, \zerotest,\heartsuit,0,q^0_B)$, as otherwise $\eve$ loses the letter game if $\adam$ gave as input $\Cat$ after $\heartsuit$. 
Therefore, any resolver must choose the transition to the copy of $\Bc'$, on $\heartsuit$. This implies that for any word accepted by $\Ac$, the resolver has a strategy henceforth to produce a sequence of transitions in $\Bc'$, implying that this word must also be accepted by $\Bc$. Hence we have $\Lc(\Ac) \subseteq \Lc(\Bc)$.

For the other direction, suppose $\Lc(\Ac) \subseteq \Lc(\Bc)$. The resolver only needs to resolve non-determinism in the  starting state. Choosing the transition that takes it to the $\Bc'$ part of the automaton by selecting $(q^0_H, \zerotest, \heartsuit,0,q^0_B)$ at $q^0_H$ ensures $\eve$ wins the letter game and hence $\Hc$ is history-deterministic. 
\end{proof}
\end{document}